\documentclass[accepted]{uai2026} 
                        
\usepackage[american]{babel}
\usepackage{makecell}
\usepackage{rotating}
\usepackage{natbib} 
\usepackage{mathtools} 
\usepackage{booktabs} 
\usepackage{tikz} 

\usepackage[textsize=tiny]{todonotes}
\usepackage{multirow}

\usepackage{ascmac}
\usepackage{float}
\usepackage{perpage}
\MakeSorted{figure}
\MakeSorted{table}

\usepackage{url}
\usepackage{natbib}
\usepackage{chapterbib}

\usepackage{color}
\usepackage{tikz}
\tikzset{%
mynode/.style={circle,minimum width=.5ex, fill=none,draw}, 
myfillnode/.style={circle,minimum width=.5ex, fill=lightgray,draw}, 
}

\usepackage{pgfplots}
\pgfplotsset{compat=newest}

\usepackage{amssymb}
\usepackage{natbib}

\newcommand{\indep}{\mathrel{\perp\!\!\!\perp}}
\usepackage{amsmath}               
\usepackage{lscape}
\usepackage{algorithm}
\usepackage{color}
\usepackage{tikz}
\usepackage{amsmath,amsthm}
\theoremstyle{plain}
\newtheorem{theorem}{Theorem}
\newtheorem{definition}{Definition}

\newtheorem{lemma}{Lemma}
\newtheorem{proposition}{Proposition}
\newtheorem{corollary}{Corollary}
\newtheorem{remark}{Remark}

\newtheorem*{remark*}{Remark}
\newtheorem*{example}{Example}
\newtheorem*{corollary*}{Corollary}
\usepackage{multirow}
\usepackage{comment}
\usepackage{here}
\allowdisplaybreaks[4]
\usepackage{bbm}
\usepackage{caption}
\usepackage{bbding}
\usepackage{afterpage}

\usepackage{enumitem}

\usepackage{mathrsfs}

\newcommand{\jin}[1]{\textcolor{blue}{[[#1]]}}

\newcommand{\yuta}[1]{\textcolor{red}{#1}}

\newcommand{\naoya}[1]{\textcolor{purple}{#1}}
\usepackage{soul}

\usepackage{float}

\usepackage{amsmath}
\usepackage{amsthm}
\usepackage{amssymb}
\usepackage{mathtools}
\usepackage{thmtools}

\usepackage{stackengine}
\def\defeq{\mathrel{\ensurestackMath{\stackon[1pt]{=}{\scriptscriptstyle\Delta}}}}

\usepackage{xcolor}

\usepackage{makecell}
\usepackage{graphicx}
\usepackage{arydshln}

\newcommand{\Var}{\mathrm{Var}}
\newcommand{\Cov}{\mathrm{Cov}}
\newcommand{\Cor}{\mathrm{Cor}}

\usepackage{algorithm}
\usepackage{algorithmic}

\usepackage{enumitem}
\newcommand{\hide}[1]{}

\hypersetup{hidelinks}

\title{Identification and Bounding of Central Moments of Causal Effects\\ Using Marginal Moments Information}

\author[1,*]{\href{naoya.hashimoto@mbzuai.ac.ae}{Naoya Hashimoto}}
\author[1,*]{\href{yuta.kawakami@mbzuai.ac.ae}{Yuta Kawakami}}
\author[1]{\href{jin.tian@mbzuai.ac.ae}{Jin Tian}}
\affil[1]{%
Mohamed bin Zayed University of Artificial Intelligence (MBZUAI)\\
Abu Dhabi, UAE
}

\newif\ifuaiEqualContribPrinted
\makeatletter
\ifUAI@accepted
  \g@addto@macro\@thanks{%
    \ifuaiEqualContribPrinted\else
      \global\uaiEqualContribPrintedtrue
      \footnotetext[1]{Equal contribution.}%
    \fi
  }%
\fi
\makeatother

\begin{document}
\maketitle

\begin{abstract}
Evaluating the causal effect of a treatment on an outcome is a central objective in causal inference. While the average causal effect  summarizes the mean impact of treatment, the central moments of the individual causal effect (ICE) characterize the shape of the ICE distribution, thereby revealing the extent and structure of treatment effect heterogeneity across individuals. This paper investigates the identification and bounding of the central moments of the ICE using only the marginal central moments of each potential outcome (PO). Compared with existing approaches that require knowledge of the full marginal distributions of the POs, marginal moment information is often substantially easier to obtain in empirical applications. Finally, we illustrate the practical relevance of our  results through two empirical case studies.
\end{abstract}

\section{Introduction}\label{sec:intro}

Evaluating the causal effect of a treatment ($X$) on an outcome ($Y$) from observed data is a central objective in causal inference.
Researchers often focus on the average causal effect (ACE) $\mathbb{E}[Y_1-Y_0]$, where $Y_x$ denotes the potential outcome (PO) of an intervention $do(X=x)$. 
However, the ACE ignores heterogeneity in causal effects.
To reveal the heterogeneity of causal effects across individuals, the distribution of individual causal effects (ICE) $Y_1-Y_0$ has become an increasingly important focus in causal inference.
\citet{Fan2010} provided  sharp bounds for the distribution function of ICE, i.e., $\mathbb{P}(Y_1-Y_0<y)$.
\citet{Post2025} provided an identification result for the distribution function of ICE under the assumption, named "Independent Effect Deviation" (IED), that the ICE $Y_1-Y_0$ is independent of the baseline $Y_0$, i.e., $(Y_1-Y_0) \indep Y_0$.
Recently, \citet{Kawakami2025_moments} instead studied 
the central moments of ICE, $\overline{\mu}^{(m)}\defeq \mathbb{E}[(Y_1-Y_0-\mathbb{E}[Y_1-Y_0])^m]$,  
to characterize the shape of the ICE distribution. 
They provided identification and bounding results for $\overline{\mu}^{(m)}$. 
{Summary statistics derived from the central moments—including variance, skewness, and kurtosis—are widely used in statistics. In particular, variance captures the extent of treatment effect heterogeneity; skewness indicates asymmetry in treatment response, helping identify whether a subset of individuals benefits (or is harmed) disproportionately; and kurtosis reflects tail behavior, which is important in applications where extreme responses (e.g., adverse effects) are of concern.}

These previous studies \citep{Post2023, Post2025, Kawakami2025_moments} assume access to the full marginal distributions of each PO, namely $\mathbb{P}(Y_1<y_1)$ and $\mathbb{P}(Y_0<y_0)$.
In many empirical settings, however, only summary statistics—such as mean, variance, skewness, or kurtosis—are reported, while the underlying distributions and individual-level data are unavailable. 
This may occur due to privacy or confidentiality restrictions, 
or in historically accumulated studies that report only aggregated statistics rather than the original datasets.  
In such settings, the marginal central moments of each PO,  
$\overline{\sigma}^{(k)}_1\defeq\mathbb{E}[(Y_1-\mathbb{E}[Y_1])^k]$ and
$\overline{\sigma}^{(k)}_0\defeq\mathbb{E}[(Y_0-\mathbb{E}[Y_0])^k]$, 
are typically more accessible in practice, even though they contain  less information than the full marginal distributions.

This paper provides methods for evaluating the central moments of ICE $\overline{\mu}^{(m)}$ from the marginal central moments  $(\overline{\sigma}^{(k)}_1,\overline{\sigma}^{(k)}_0)$. $\overline{\mu}^{(m)}$ characterize the shape of the distribution of ICE and reveal the heterogeneity of causal effects across individuals. They can be used to compute key statistical measures of causal effects heterogeneity such as standard deviation (SD) $\sqrt{\overline{\mu}^{(2)}}$, skewness ${\overline{\mu}^{(3)}}/{(\overline{\mu}^{(2)})^{3/2}}$, and kurtosis ${\overline{\mu}^{(4)}}/{(\overline{\mu}^{(2)})^{2}}$. SD captures dispersion: a large SD indicates substantial heterogeneity, with ICEs potentially far from the ACE, whereas a small SD implies that ICEs cluster tightly around the ACE. Skewness measures distributional asymmetry; positive skewness corresponds to a longer right tail, and negative skewness to a longer left tail. Kurtosis reflects tail thickness and concentration, with higher values signaling 
a greater likelihood of extreme causal effects.

We study identifying and bounding the central moments of ICE $\overline{\mu}^{(m)}$ using \emph{only} the marginal central moments $(\overline{\sigma}^{(k)}_1,\overline{\sigma}^{(k)}_0)$.
In contrast to prior approaches, our framework does not require access to the full marginal distributions of $Y_1$ and $Y_0$; it relies solely on their marginal central moments.
Our theoretical contributions are as follows: 
\begin{itemize}
\item We establish  identification results for the central moments of ICE using only marginal central moment information from $Y_1$ and $Y_0$ under the IED condition. 
\item We derive bounds for the central moments of ICE using only marginal central moment information from $Y_1$ and $Y_0$. 
Table~\ref{tab:tab1} summarizes our results for bounding $\overline{\mu}^{(m)}$ using  $(\overline{\sigma}^{(k)}_1,\overline{\sigma}^{(k)}_0)$ for a single $k$. All bounds reported in the table are sharp. 
{In addition, Theorems~\ref{cor:bounding_mu2_intersection}, \ref{cor4}, and \ref{cor5}, together with Corollary~\ref{cor-mum}, provide bounds for $\overline{\mu}^{(m)}$ using multiple $(\overline{\sigma}^{(k)}_1,\overline{\sigma}^{(k)}_0)$ by intersecting the single-$k$ bounds in Table~\ref{tab:tab1}; these bounds are generally not sharp.}
\end{itemize}

To our knowledge, these identification and bounding results are novel. The problem of evaluating central moments of the ICE using only marginal central moments of $Y_1$ and $Y_0$ has not previously been studied in a systematic manner.

Finally, we show the practical relevance of our theoretical  results through two empirical case studies.

\begin{table}[tb]
\centering
\caption{Summary table of the theorems in this paper for bounding $\overline{\mu}^{(m)}$ using  $(\overline{\sigma}^{(k)}_1,\overline{\sigma}^{(k)}_0)$ for a single $k$. 
{All bounds reported in this table are sharp.}
{Nontrivial} bounds for even $m$ mean that the resulting bounds are tighter than $[0,\infty]$, whereas {nontrivial} bounds for odd $m$ mean that the resulting bounds are tighter than $[-\infty,\infty]$.
LB denotes lower bound, and UB denotes upper bound.
}
\label{tab:tab1}
\scalebox{1}{
\begin{tabular}{c|ccc}
    \hline   
    Thm. & Target $m$ 
    & Available $k$
    & {Nontrivial} bounds?\\
    \hline
    \hline   
    
   \ref{thm:bound_of_var}  & 2 & 2 & $\checkmark$ \\
      \ref{thm:mu2_from_3} & 2 & 3 & $\times$\\ 
     \ref{thm:mu2_from_4} & 2 & 4 & $\checkmark$ (only UB)\\
      \hdashline
      \ref{thm:mu3_from_2} & 3 & 2 & $\times$ \\
      \ref{thm:mu3_from_3} & 3 & 3 & $\times$ \\
    \ref{thm:mu3_from_4} & 3 & 4 & $\checkmark$ \\

      \hdashline
      \ref{thm:mu4_from_2} & 4 & 2 & $\checkmark$ (only LB) \\
     \ref{thm:mu4_from_3} & 4 & 3 & $\times$ \\
      \ref{thm:mu4_from_4}  & 4 & 4 & $\checkmark$\\
      \hline
    \ref{thm:minkowski_bound} & Even $m$ & $m$ & $\checkmark$ \\
    \ref{thm:even_m_single_k} & Even $m$ & Even $k< m$ & $\checkmark$ (only LB) \\
    \ref{thm:even_m_single_k} & Even $m$ & Odd $k< m$ & $\times$ \\
  \ref{thm:odd_unbounded_from_single} & Odd $m$ & $k\leq m$ & $\times$ \\
    \hline
\end{tabular}
}
\end{table}

\section{Motivating Examples}
\label{sec2}

We begin by motivating the problem of bounding the central moments of the ICE using the marginal central moments of each PO.
In many experimental studies, datasets containing individual-level information are not publicly released due to privacy or confidentiality concerns.  
By contrast, moment information—such as the variance, skewness, or kurtosis—of treatment and control groups in randomized controlled trials (RCTs) is frequently reported as basic summary statistics.

For example, \citet{Moseley2002} conducted a randomized, placebo-controlled trial of arthroscopic surgery for osteoarthritis of the knee.
A total of 180 patients with osteoarthritis of the knee were randomly assigned to arthroscopic d\'ebridement ($X=1$), arthroscopic lavage, or placebo surgery ($X=0$) (we chose two of the three groups for our analysis). 
\citet{Moseley2002} reported the Physical Functioning Scale (PFS) score at 24 months (outcome $Y$), which records the time in seconds that a patient requires to walk 30 m 
and to climb up and down a flight of stairs as quickly as possible, where longer times indicate poorer functioning.
The article reports the mean and standard deviation (SD) of PFS for each group, given in Table \ref{tab:moseley2002-pfs}.

\citet{Spirtovic2023} reported an experimental study on the influence of a 12-week preventive program of mixed aerobics on body composition in healthy adult women.
A total of 64 
women were randomly divided into two groups and completed both the initial and final measurements.
The experimental group ($X=1$) took part in the exercise program of mixed aerobics, whereas the control group ($X=0$) did not participate in any form of organized physical activity.
Muscle mass percentage (MM\%) in the final state (outcome $Y$) is calculated and expressed as a percentage of total body mass.
The study reports the mean, SD, skewness, and kurtosis for MM\% in each group, given in Table \ref{tab:spirtovic_mm_moments}.

Because the authors of the two studies did not make the underlying dataset publicly available or provide the full distribution functions for each group, it is impossible to evaluate the central moments of the ICE 
using the methods based on marginal distributions of each group in 
\citep{Post2023, Post2025, Kawakami2025_moments}.

Additionally, several experimental studies (e.g., \citet{Hebert1999, vanderBerghe2001, Rivers2001, Nice2009}) reported the means and standard deviations for each group. 
Other  studies (e.g., \citet{Lucena2020, Goble2021, Ulacs2022, Ozdemir2024, Stefanescu2024}) provided higher-order moment information, such as skewness and kurtosis, even though the underlying individual-level datasets are not publicly available. 
{These studies compared the differences in marginal summary statistics between the two groups (control and treatment). Our results in this paper enable re-analysis of these studies to infer properties of the \emph{distribution} of causal effects using only the reported marginal moment information.}

\begin{table}[t]
\centering
\caption{Summary statistics reported by \citep{Moseley2002}. SD: standard deviation. 
}
\label{tab:moseley2002-pfs}
\scalebox{1}{
\begin{tabular}{lccc}
\hline
Group & Sample Size &  Mean & SD \\
\hline\hline
\makecell[c]{Placebo surgery\\ ($X=0$)} & 44 & 47.7 & 12.0 \\
\hdashline
\makecell[c]{Arthroscopic\\ d\'ebridement ($X=1$)} & 44 & 52.6 & 16.4 \\
\hline
\end{tabular}
}
\end{table}

\begin{table*}[t]
\centering
\caption{Summary statistics
reported by \citep{Spirtovic2023}. 
SD: standard deviation, 
Skew: skewness, Kurt: kurtosis, 
2nd M: second central moments, 
3rd M: third central moments, 
4th M: fourth central moments.
(We calculated 2nd M, 3rd M, and 4th M from their SD, Skew, and Kurt.)
}
\label{tab:spirtovic_mm_moments}
\scalebox{1}{
\begin{tabular}{lrrrrr|rrr}
\hline
Group & Sample Size & Mean & SD & Skew & Kurt & 2nd M & 3rd M & 4th M \\
\hline \hline
Control ($X=0$) & 30 & 45.469 & 4.573 & 0.648  & 3.652   & 20.912 & 61.970 & 1597.113 \\
Experimental ($X=1$) & 34 & 44.125 & 5.262 & 0.915  & 3.273    & 27.689 & 133.313 & 2509.282 \\
\hline
\end{tabular}
}
\end{table*}

\section{Notation and Background}\label{sec:notation_and_backgraound}

We represent each variable with a capital letter $(X)$ and its realized value with a lowercase letter $(x)$. 
We denote
$\mathbb{E}[Y]$ as the expectation of $Y$.
Let $\overline{\mathbb{R}}=[-\infty,\infty]=\mathbb{R}\cup \{\pm\infty\}$ be the extended real line.

We use the language of structural causal models (SCM) as our basic semantic and inferential framework \citep{Pearl09}. We denote the potential outcome (PO) $Y$ under intervention $do({x})$ by $Y_{{x}}$. 
The individual causal effect (ICE)  is defined as $Y_1-Y_0$ \citep{Holland1986}, and the central moments of ICE are defined as follows \citep{Kawakami2025_moments}: 
\begin{definition}[The central moments of ICE]
The $m$-th central moment of ICE is defined as 
\begin{equation}
\begin{aligned}
\overline{\mu}^{(m)}\defeq \mathbb{E}[(Y_1-Y_0-\mathbb{E}[Y_1-Y_0])^m].
\end{aligned}
\end{equation}
\end{definition}
\citet{Post2025} showed that, under the "Independent Effect Deviation" (IED) assumption\footnote{IED assumes that the individual causal effect $Y_1 - Y_0$ is independent of the baseline outcome $Y_0$. 
Intuitively, this means that how much a unit benefits from treatment is unrelated to its baseline level.
This assumption may be plausible in settings where the mechanisms generating the treatment effect are
independent of those determining the baseline outcome.
Conversely, the assumption is less plausible in scenarios where treatment effects  depend on baseline levels—for example, when individuals with higher (or lower) baseline outcomes tend to benefit more (or less) from the treatment.}, $(Y_1-Y_0) \indep Y_0$, the distribution of ICE is identified from the marginal distributions of $Y_1$ and  $Y_0$ .
\citet{Kawakami2025_moments} derived identification results for the central moments $\overline{\mu}^{(m)}$ of ICE in terms of the marginals of $Y_1$ and $Y_0$ under a monotonicity assumption, and in the absence of monotonicity, established bounds for $\overline{\mu}^{(m)}$ in terms of the marginals of $Y_1$ and $Y_0$. In this paper, we study the identification and bounding of $\overline{\mu}^{(m)}$ using only the central moments of the marginals of $Y_1$ and $Y_0$, without assuming full knowledge of their marginal distributions.

\section{Identification of central moments of ICE under IED condition}
\label{subsec:Identification_under_IED}
We study the evaluation of the central moments of ICE, $\overline{\mu}^{(m)}$, through 
the marginal central moments of each PO, 
\begin{equation}
\begin{aligned}
&\overline{\sigma}^{(k)}_1\defeq\mathbb{E}[(Y_1-\mathbb{E}[Y_1])^k],\\
&\overline{\sigma}^{(k)}_0\defeq\mathbb{E}[(Y_0-\mathbb{E}[Y_0])^k].
\end{aligned}
\end{equation}
In RCTs, $\overline{\sigma}^{(k)}_1$ and $\overline{\sigma}^{(k)}_0$ correspond to the $k$-th central moment of the outcome for the treatment and control groups, respectively. 
 In observational studies with no confounders, i.e., $Y_x \indep X$, $\overline{\sigma}^{(k)}_1$ and $\overline{\sigma}^{(k)}_0$ correspond to the $k$-th central moment of the outcome for subgroups conditioning on $X=1$ and $X=0$, respectively. In observational studies with observed covariates $C$ and no unmeasured confounding, i.e., $Y_x \indep X | C$, 
 the quantities $\overline{\sigma}^{(k)}_1$ and $\overline{\sigma}^{(k)}_0$ can be identified via covariate adjustment. For discrete $C$,
\begin{equation}
\label{eq:covariate_adjustment_moments}
\begin{aligned}
\overline{\sigma}^{(k)}_1
&=
\sum_{c \in \Omega_C}
\mathbb{E}[(Y-\mathbb{E}[Y_1])^k | X=1,C=c]\mathbb{P}(C=c),\\
\overline{\sigma}^{(k)}_0
&=
\sum_{c \in \Omega_C}
\mathbb{E}[(Y-\mathbb{E}[Y_0])^k | X=0,C=c]\mathbb{P}(C=c),
\end{aligned}
\end{equation}
where
\begin{equation}
\label{eq:covariate_adjustment_means}
\begin{aligned}
\mathbb{E}[Y_1]
&=
\sum_{c\in \Omega_C}
\mathbb{E}[Y | X=1,C=c]\mathbb{P}(C=c),\\
\mathbb{E}[Y_0]
&=
\sum_{c\in \Omega_C}
\mathbb{E}[Y | X=0,C=c]\mathbb{P}(C=c).
\end{aligned}
\end{equation}
For continuous $C$, the sums are replaced by integrals.

The same arguments can be applied within strata of observed covariates. If the marginal central moments of $Y_x | C=c$ are available, the identification and bounding results below apply after replacing $Y_x$ by $Y_x | C=c$. The resulting stratum-specific moments describe the distribution of the ICE within the subpopulation $C=c$, while the conditional average causal effect $\mathbb{E}[Y_1-Y_0 | C=c]$ describes only the corresponding mean.

In this section, we discuss the identification of the central moments of the ICE.
\citet{Post2025} provided identification of the distribution of the ICE using the marginal distributions of each PO. 
In contrast, we provide identification of the central moments of the ICE using only the central moments of each PO.

{\bf Identification of the variance $\overline{\mu}^{(2)}$ of ICE.}
We first discuss the identification of the variance 
of ICE using the variance of each PO.
The variance of ICE quantifies the degree of dispersion in ICE and serves as a key measure of causal effect heterogeneity \citep{Kawakami2025_moments}.
A large variance indicates substantial heterogeneity in causal effects, whereas a small variance indicates relative homogeneity.

\citet{Post2025} showed that the variance of ICE is identifiable under the IED condition. 
We restate their results formally in the following proposition.
\begin{proposition}
\label{prop2}
Assume $\overline{\sigma}^{(2)}_1$ and $\overline{\sigma}^{(2)}_0$ exist and are available. If the IED condition $(Y_1-Y_0) \indep Y_0$ holds, then 
the variance of ICE  is identified as 
\begin{equation}
\begin{aligned}
&\overline{\mu}^{(2)}=\overline{\sigma}^{(2)}_1-\overline{\sigma}^{(2)}_0.
\end{aligned}
\end{equation}    
\end{proposition}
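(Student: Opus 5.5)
The plan is to decompose $Y_1$ as the sum of the baseline $Y_0$ and the individual causal effect $D \defeq Y_1 - Y_0$. Then I will use the IED condition $D \indep Y_0$ to kill the cross term in the variance expansion. Write $Y_1 = Y_0 + D$ and center each piece: put $\tilde Y_0 \defeq Y_0 - \mathbb{E}[Y_0]$ and $\tilde D \defeq D - \mathbb{E}[D]$. By linearity of expectation, $\mathbb{E}[D] = \mathbb{E}[Y_1]-\mathbb{E}[Y_0]$, so $Y_1 - \mathbb{E}[Y_1] = \tilde Y_0 + \tilde D$. By definition, $\overline{\mu}^{(2)} = \mathbb{E}[\tilde D^2]$ and $\overline{\sigma}^{(2)}_0 = \mathbb{E}[\tilde Y_0^2]$.

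Next I expand the square:
\begin{equation*}
\overline{\sigma}^{(2)}_1 = \mathbb{E}[(\tilde Y_0 + \tilde D)^2] = \overline{\sigma}^{(2)}_0 + 2\,\mathbb{E}[\tilde Y_0 \tilde D] + \overline{\mu}^{(2)}.
\end{equation*}
Under IED, $\tilde D$ is a measurable function of $D$ and $\tilde Y_0$ is a measurable function of $Y_0$, so they are independent. Hence $\mathbb{E}[\tilde Y_0 \tilde D] = \mathbb{E}[\tilde Y_0]\,\mathbb{E}[\tilde D] = 0$. Rearranging gives $\overline{\mu}^{(2)} = \overline{\sigma}^{(2)}_1 - \overline{\sigma}^{(2)}_0$, which depends only on the marginal central moments and so is identified.

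The only step needing care is integrability, i.e., that every expectation above is finite so that the expansion and the product rule for independent variables are valid. The hypothesis gives $\overline{\sigma}^{(2)}_1, \overline{\sigma}^{(2)}_0 < \infty$, so $Y_1, Y_0 \in L^2$. Minkowski's inequality then gives $D = Y_1 - Y_0 \in L^2$. It follows that $\tilde D, \tilde Y_0 \in L^2$, and Cauchy--Schwarz ensures $\tilde Y_0 \tilde D \in L^1$. With these facts the factorization $\mathbb{E}[\tilde Y_0\tilde D]=\mathbb{E}[\tilde Y_0]\mathbb{E}[\tilde D]$ is legitimate. As a byproduct, the result implies $\overline{\sigma}^{(2)}_1 \ge \overline{\sigma}^{(2)}_0$ under IED.
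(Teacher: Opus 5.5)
Your proposal is correct and is essentially the paper's argument. The paper gives no separate proof of this proposition (it cites Post et al.); it obtains the result as the $m=2$ case of Theorem~\ref{thm:identification_under_independence}, by writing $Y_1-\mathbb{E}[Y_1]=(Y_0-\mathbb{E}[Y_0])+(\Delta-\mathbb{E}[\Delta])$, expanding, and factorizing the cross term under IED using $\overline{\sigma}^{(1)}_0=\overline{\mu}^{(1)}=0$. Your integrability check, using Minkowski and Cauchy--Schwarz to justify the factorization, is a detail the paper leaves implicit.
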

The variance of ICE is given by the difference between the variances of the two POs. 

{\bf Remark.}
Under the IED, $\overline{\sigma}^{(2)}_1-\overline{\sigma}^{(2)}_0\geq 0$ holds by Proposition \ref{prop2}, which provides a testable condition to falsify the IED assumption. 


\citet{Kawakami2025_moments} identified the central moments of ICE from the marginal distributions of $Y_1$ and $Y_0$ under a monotonicity assumption (their Assumption~2). However, $\overline{\mu}^{(2)}$ is not identifiable from $(\overline{\sigma}_1^{(2)},\overline{\sigma}_0^{(2)})$ alone, even under this monotonicity assumption. A counterexample is provided in Appendix \ref{sec_app:counterexample_assump2}.

{\bf Identification of the higher-order central moments of the ICE.}
Next, we 
provide an identification theorem for 
the higher-order central moments of the ICE through the central moments of each PO $(\overline{\sigma}^{(k)}_1,\overline{\sigma}^{(k)}_0)$, 
which is a generalization of Proposition \ref{prop2}. 
\begin{restatable}{theorem}{Identification}
\label{thm:identification_under_independence}
Assume $(\overline{\sigma}^{(k)}_1,\overline{\sigma}^{(k)}_0)$, {$k=1,2,\dots,m-3,m-2,m$}, exist and are available. 
If the IED condition $(Y_1-Y_0) \indep Y_0$ holds,
then
the  $m$-th central moment of ICE  is recursively identified as 
\begin{equation}
\label{eq:identification_under_independence}
\begin{aligned}
\overline{\mu}^{(m)}& =\overline{\sigma}^{(m)}_1-\overline{\sigma}^{(m)}_0 -\sum_{\ell=2}^{m-2} \binom{m}{\ell} \overline{\mu}^{(m-\ell)}\overline{\sigma}^{(\ell)}_0.
\end{aligned}
\end{equation}
\end{restatable}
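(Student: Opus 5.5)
Write $D \defeq Y_1 - Y_0 - \mathbb{E}[Y_1 - Y_0]$ for the centered ICE and $Z \defeq Y_0 - \mathbb{E}[Y_0]$ for the centered baseline. Then
\[
Y_1 - \mathbb{E}[Y_1] = D + Z .
\]
The IED condition $(Y_1-Y_0)\indep Y_0$ gives $D \indep Z$, because each is an affine function of one of the two independent variables. By definition, $\mathbb{E}[D^j]=\overline{\mu}^{(j)}$ and $\mathbb{E}[Z^j]=\overline{\sigma}^{(j)}_0$. Also $\mathbb{E}[D]=\mathbb{E}[Z]=0$, so $\overline{\mu}^{(1)}=\overline{\sigma}^{(1)}_0=0$, and $\overline{\mu}^{(0)}=\overline{\sigma}^{(0)}_0=1$.

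The core step is to expand the $m$-th power by the binomial theorem and use independence to factor each mixed expectation:
\[
\overline{\sigma}^{(m)}_1=\mathbb{E}[(D+Z)^m]=\sum_{\ell=0}^{m}\binom{m}{\ell}\overline{\mu}^{(m-\ell)}\overline{\sigma}^{(\ell)}_0 .
\]
The boundary terms are handled as follows.
\begin{itemize}
\item The term $\ell=0$ gives $\overline{\mu}^{(m)}$.
\item The term $\ell=m$ gives $\overline{\sigma}^{(m)}_0$.
\item The term $\ell=1$ vanishes because $\overline{\sigma}^{(1)}_0=0$.
\item The term $\ell=m-1$ vanishes because $\overline{\mu}^{(1)}=0$.
\end{itemize}
Solving for $\overline{\mu}^{(m)}$ gives exactly \eqref{eq:identification_under_independence}, with the sum running over $2\le \ell\le m-2$. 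For $m=2$ the sum is empty and we recover Proposition~\ref{prop2}.

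To show that the right-hand side is identified, I proceed by strong induction on $m$. The sum involves $\overline{\sigma}^{(\ell)}_0$ for $2\le\ell\le m-2$, which are among the assumed available moments. It also involves $\overline{\mu}^{(j)}$ for $2\le j\le m-2$. Each such $\overline{\mu}^{(j)}$ is identified by the same formula at order $j$, which needs $(\overline{\sigma}^{(k)}_1,\overline{\sigma}^{(k)}_0)$ only for $k\le j\le m-2$, and these are available. The base cases are $\overline{\mu}^{(0)}=1$ and $\overline{\mu}^{(1)}=0$, together with $m=2$ from Proposition~\ref{prop2}.

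The only delicate point is integrability. The factorization $\mathbb{E}[D^{m-\ell}Z^\ell]=\mathbb{E}[D^{m-\ell}]\,\mathbb{E}[Z^\ell]$ requires the moments of $D$ up to order $m$ to be finite. The availability of $\overline{\sigma}^{(m)}_0$ and $\overline{\sigma}^{(m)}_1$ is what makes this work, and I plan to argue it in two steps.
\begin{itemize}
\item The existence of $\overline{\sigma}^{(m)}_0$ and $\overline{\sigma}^{(m)}_1$ gives $Y_0,Y_1\in L^m$ (for odd $m$, I interpret existence as absolute integrability). By Minkowski's inequality, $D\in L^m$.
\item By Lyapunov's inequality, all lower moments of $D$ and $Z$ are then finite. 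Independence therefore justifies factoring each cross term, using Fubini on the product law.
\end{itemize}
I expect this integrability bookkeeping to be the only real obstacle; the rest is a direct binomial computation.
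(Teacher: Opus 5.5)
Your proof is correct and follows the paper's own argument: write $Y_1-\mathbb{E}[Y_1]$ as the centered baseline plus the centered ICE, expand by the binomial theorem, factor each cross moment using IED, and drop the $\ell=1$ and $\ell=m-1$ terms because the first central moments vanish. The strong induction showing that the recursion only uses the assumed moments, and the $L^m$ integrability check via Minkowski and Lyapunov, are details the paper leaves implicit, and both are sound.
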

Theorem~\ref{thm:identification_under_independence} shows that the central moments of ICE are identified from marginal central moments of each PO under IED, without requiring the complete  knowledge of the marginal distributions used in \citep{Post2025}. 

We have the following important corollaries, as $\overline{\mu}^{(3)}$ and $\overline{\mu}^{(4)}$ are used to compute skewness and kurtosis: 
\begin{corollary}
If $(\overline{\sigma}^{(3)}_1,\overline{\sigma}^{(3)}_0)$ exist and are available,
and the IED condition $(Y_1-Y_0) \indep Y_0$ holds,
then $\overline{\mu}^{(3)}$ is identified as follows:
\begin{equation}
\begin{aligned}
    \overline{\mu}^{(3)}& =\overline{\sigma}^{(3)}_1-\overline{\sigma}^{(3)}_0 .
\end{aligned}
\end{equation}
\end{corollary}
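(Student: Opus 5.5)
The plan is to specialize Theorem~\ref{thm:identification_under_independence} to $m=3$, and in parallel to give a direct argument that needs only $(\overline{\sigma}^{(3)}_1,\overline{\sigma}^{(3)}_0)$. In formula \eqref{eq:identification_under_independence} with $m=3$, the correction sum runs over $\ell=2,\dots,m-2=1$. It is therefore empty, which gives $\overline{\mu}^{(3)}=\overline{\sigma}^{(3)}_1-\overline{\sigma}^{(3)}_0$ immediately.

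The theorem's hypothesis formally lists $k=1,\dots,m-3,m-2,m$. For $m=3$ this is only $k=1$ and $k=3$, and the $k=1$ central moments are identically zero. So the only information actually used is the pair of third central moments, matching the corollary's assumption.

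For a self-contained derivation, set $D=Y_1-Y_0-\mathbb{E}[Y_1-Y_0]$ and $Z=Y_0-\mathbb{E}[Y_0]$. Then $Y_1-\mathbb{E}[Y_1]=D+Z$. Expanding the cube gives
\begin{equation*}
\overline{\sigma}^{(3)}_1=\mathbb{E}[(D+Z)^3]=\mathbb{E}[D^3]+3\mathbb{E}[D^2Z]+3\mathbb{E}[DZ^2]+\mathbb{E}[Z^3].
\end{equation*}
Under IED, $D$ is a function of $Y_1-Y_0$ and $Z$ is a function of $Y_0$, so $D\indep Z$. The cross terms therefore factor as $\mathbb{E}[D^2]\,\mathbb{E}[Z]$ and $\mathbb{E}[D]\,\mathbb{E}[Z^2]$, and both vanish because $\mathbb{E}[D]=\mathbb{E}[Z]=0$. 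This leaves $\overline{\sigma}^{(3)}_1=\overline{\mu}^{(3)}+\overline{\sigma}^{(3)}_0$, which rearranges to the claim.

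The only delicate step is integrability: the factorizations require that $\mathbb{E}[D^3]$ and the mixed moments be finite. I would handle this as follows.
\begin{itemize}
\item Finiteness of $\overline{\sigma}^{(3)}_0$ and $\overline{\sigma}^{(3)}_1$ gives $Z\in L^3$ and $D+Z\in L^3$.
\item By Minkowski's inequality, $D=(D+Z)-Z\in L^3$.
\item With $D,Z\in L^3$ and $D\indep Z$, all mixed terms are integrable (Hölder suffices, and independence makes each term a product of finite moments), so the expansion is legitimate.
\end{itemize}
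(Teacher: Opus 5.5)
Your proposal is correct and takes the paper's route: the corollary is Theorem~\ref{thm:identification_under_independence} at $m=3$, where the correction sum is empty, and that theorem's proof is the same binomial expansion of $\mathbb{E}[(D+Z)^3]$ with the cross terms vanishing by independence and zero means. Your integrability check via Minkowski and H\"older is a small rigor addition that the paper leaves implicit.
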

Then the skewness of the ICE, i.e., $\overline{\mu}^{(3)} / (\overline{\mu}^{(2)})^{3/2}$, is identified from $(\overline{\sigma}^{(k)}_1,\overline{\sigma}^{(k)}_0)$,  $k=2, 3$, under IED. 
\begin{corollary}
If $(\overline{\sigma}^{(2)}_1,\overline{\sigma}^{(2)}_0)$ and $(\overline{\sigma}^{(4)}_1,\overline{\sigma}^{(4)}_0)$ exist and are available,
and the IED condition $(Y_1-Y_0) \indep Y_0$ holds,
then $\overline{\mu}^{(4)}$ is identified as follows:
\begin{equation}
\overline{\mu}^{(4)}
=\overline{\sigma}^{(4)}_1-\overline{\sigma}^{(4)}_0 -6(\overline{\sigma}^{(2)}_1-\overline{\sigma}^{(2)}_0)\overline{\sigma}^{(2)}_0.
\end{equation}
\end{corollary}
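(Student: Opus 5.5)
The statement to prove is the last corollary, the identification of $\overline{\mu}^{(4)}$. The plan is to read it off from Theorem~\ref{thm:identification_under_independence} with $m=4$, combined with Proposition~\ref{prop2}. For a self-contained argument, we can also rederive the needed case of the recursion directly from the IED condition.

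Write $D \defeq Y_1-Y_0-\mathbb{E}[Y_1-Y_0]$ and $W \defeq Y_0-\mathbb{E}[Y_0]$. Then
\[
Y_1-\mathbb{E}[Y_1]=D+W .
\]
Under IED, $(Y_1-Y_0)\indep Y_0$, and since $D$ and $W$ are measurable functions of $Y_1-Y_0$ and $Y_0$ respectively, $D\indep W$. Both are centered, so $\mathbb{E}[D]=\mathbb{E}[W]=0$.

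Expanding binomially and using independence gives
\[
\overline{\sigma}^{(4)}_1=\mathbb{E}[(D+W)^4]=\sum_{\ell=0}^{4}\binom{4}{\ell}\,\overline{\mu}^{(4-\ell)}\,\overline{\sigma}^{(\ell)}_0 .
\]
The terms with $\ell=1$ or $\ell=3$ vanish, because $\overline{\sigma}^{(1)}_0=0$ and $\overline{\mu}^{(1)}=0$. The surviving terms give
\[
\overline{\sigma}^{(4)}_1=\overline{\mu}^{(4)}+6\,\overline{\mu}^{(2)}\overline{\sigma}^{(2)}_0+\overline{\sigma}^{(4)}_0 .
\]
This is exactly equation~\eqref{eq:identification_under_independence} at $m=4$, where the sum reduces to the single term $\ell=2$.

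Finally, substitute $\overline{\mu}^{(2)}=\overline{\sigma}^{(2)}_1-\overline{\sigma}^{(2)}_0$ from Proposition~\ref{prop2} (equivalently, the same expansion at $m=2$, where the cross term vanishes). Solving for $\overline{\mu}^{(4)}$ yields the stated formula.

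The only point needing care is integrability. Existence of $\overline{\sigma}^{(4)}_1$ and $\overline{\sigma}^{(4)}_0$ gives $Y_1,Y_0\in L^4$, so $D\in L^4$. All mixed moments $\mathbb{E}[D^{4-\ell}W^\ell]$ are then finite, and by independence they factor as products. Beyond this, no real obstacle is expected.
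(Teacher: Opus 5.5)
Your proposal is correct and follows the paper's route. The paper proves Theorem~\ref{thm:identification_under_independence} by the same binomial expansion of $\mathbb{E}[(D+W)^4]$, factorized under IED, and the corollary is that expansion at $m=4$ combined with Proposition~\ref{prop2}. Your explicit integrability check ($Y_x\in L^4$, so all mixed moments are finite) is a small detail the paper leaves implicit.
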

Then the kurtosis of the ICE, i.e., $\overline{\mu}^{(4)} / (\overline{\mu}^{(2)})^{2}$, is identified from $(\overline{\sigma}^{(k)}_1,\overline{\sigma}^{(k)}_0)$,  $k=2, 4$, under IED.

\section{Bounding 2nd, 3rd, and 4th central moments of ICE}
\label{sec6}

We next study the bounding of the central moments $\overline{\mu}^{(m)}$ of ICE  
from the marginal central moments $(\overline{\sigma}^{(k)}_1,\overline{\sigma}^{(k)}_0)$  without imposing the IED condition. 
The marginal means $\mathbb{E}[Y_1]$ and $\mathbb{E}[Y_0]$ are also assumed to be available; however, they do not provide additional power for bounding  $\overline{\mu}^{(m)}$.
In practice, interest in the moments of causal effects primarily concerns the variance, skewness, and kurtosis. These can be calculated through the second, third, and fourth central moments of the ICE, i.e., $\overline{\mu}^{(2)}$, $\overline{\mu}^{(3)}$, and $\overline{\mu}^{(4)}$, which are the focus of this section.


{\bf Sharpness.} 
The notion of sharpness varies across  contexts \citep{Zhang2025bounds, Jonzon2025}.
We derive bounds for $\overline{\mu}^{(m)}$ in the form of a closed interval $[l,u]$, where $l,u \in \overline{\mathbb{R}}$, given available marginal central moments $(\overline{\sigma}^{(k)}_1,\overline{\sigma}^{(k)}_0)$. 
We say that $[l,u]$ is a 
\emph{sharp}
bound for $\overline{\mu}^{(m)}$ if there exist no strictly tighter closed bounds for $\overline{\mu}^{(m)}$ under the given moment information.
Our sharpness  means that the closed interval cannot be further tightened without additional information (e.g., the marginal distributions of $Y_1$ and $Y_0$) or assumptions.

Note that we restrict the bounds to a closed interval.
Some researchers instead define the bounds using an open interval.
However, the interpretation is essentially unchanged whether the interval is written as $[l, u]$ or $(l, u)$, since the distinction concerns only the inclusion or exclusion of the boundary points and does not affect the substantive properties under consideration.

\subsection{Bounds of $\overline{\mu}^{(2)}$}
\label{subsec:Bounding_of_mu2(1)}

We first study bounds for the variance of ICE by analyzing the consequences of violating the IED condition. 
We denote the correlation of $Y_1-Y_0$ and $Y_0$ as $\lambda$: 
\begin{equation}
\lambda\defeq \frac{\mathbb{E}[(Y_1-Y_0-\mathbb{E}[Y_1-Y_0])(Y_0-\mathbb{E}[Y_0])]}{\sqrt{\Var(Y_1-Y_0)}\sqrt{\Var(Y_0)}}.
\end{equation}
We define $\lambda=0$ if $\Var(Y_1-Y_0)\Var(Y_0)=0$. 
When the IED condition holds, we have $\lambda = 0$. 
A positive correlation implies that units with a higher baseline outcome $Y_0$ exhibit a larger ICE, whereas units with a lower baseline outcome exhibit a smaller ICE; conversely, a negative correlation implies that higher baseline outcomes are associated with smaller ICEs and lower baseline outcomes with larger ICEs.

We examine how the variance of the ICE is influenced by the parameter $\lambda$.
If the correlation of $Y_1-Y_0$ and $Y_0$ is fixed, $\overline{\mu}^{(2)}$ is determined as follows:
\begin{restatable}{theorem}{SensAn}
\label{thm:sensitivity_analysis_of_var}

(1). Assume $\overline{\sigma}_1^{(2)} > \overline{\sigma}_0^{(2)}$.
If $\overline{\sigma}_0^{(2)}=0$, then $\lambda=0$ and $\overline{\mu}^{(2)}=\overline{\sigma}_1^{(2)}$.
If $\overline{\sigma}_0^{(2)}>0$, then for any $-1\leq \lambda\leq 1$, $\overline{\mu}^{(2)}$ is uniquely determined as
\begin{equation}
    \overline{\mu}^{(2)} = \left(-\sqrt{\overline{\sigma}^{(2)}_0}\lambda + \sqrt{\overline{\sigma}^{(2)}_1-(1-\lambda^2)\overline{\sigma}^{(2)}_0}\right)^2.
\end{equation}

(2). Assume $\overline{\sigma}^{(2)}_1 = \overline{\sigma}^{(2)}_0$.
Then $\lambda \leq 0$, and $\overline{\mu}^{(2)}$ is uniquely determined as
\begin{equation}
\overline{\mu}^{(2)}
= 4\lambda^2\,\overline{\sigma}^{(2)}_0 .
\end{equation}

(3). Assume $\overline{\sigma}_1^{(2)} < \overline{\sigma}_0^{(2)}$.
We have $-1\leq \lambda\leq -\sqrt{1-\overline{\sigma}^{(2)}_1/\overline{\sigma}^{(2)}_0}$, and
\begin{equation}
\begin{aligned}
&\overline{\mu}^{(2)} \in \Bigg\{\left(-\sqrt{\overline{\sigma}^{(2)}_0}\lambda - \sqrt{\overline{\sigma}^{(2)}_1-(1-\lambda^2)\overline{\sigma}^{(2)}_0}\right)^2,\\
&\left(-\sqrt{\overline{\sigma}^{(2)}_0}\lambda + \sqrt{\overline{\sigma}^{(2)}_1-(1-\lambda^2)\overline{\sigma}^{(2)}_0}\right)^2\Bigg\}.
\end{aligned}
\end{equation}
\end{restatable}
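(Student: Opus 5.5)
The plan is to reduce everything to a quadratic equation in the standard deviation of the ICE. Write $D = Y_1 - Y_0 - \mathbb{E}[Y_1-Y_0]$ and $Z = Y_0 - \mathbb{E}[Y_0]$, so that $Y_1 - \mathbb{E}[Y_1] = D + Z$. Set $s = \sqrt{\overline{\mu}^{(2)}} \ge 0$ and $t = \sqrt{\overline{\sigma}^{(2)}_0}$. By the definition of $\lambda$ we have $\mathbb{E}[DZ] = \lambda s t$; this also holds in the degenerate case $st=0$, where the covariance vanishes and $\lambda=0$ by convention. Expanding the variance of $Y_1$ gives the single identity
\begin{equation}
\overline{\sigma}^{(2)}_1 = s^2 + 2\lambda s t + t^2 .
\end{equation}
The whole proof consists of solving this for $s \ge 0$ given $\lambda \in [-1,1]$, and tracking which roots are admissible in each of the three cases.

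Viewed as a quadratic in $s$, the identity has roots
\begin{equation}
s_\pm = -\lambda t \pm \sqrt{\overline{\sigma}^{(2)}_1 - (1-\lambda^2)t^2},
\end{equation}
and the statement follows from squaring the admissible root.

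\textbf{Case (1).} If $t=0$, then $\lambda=0$ by convention and $s^2 = \overline{\sigma}^{(2)}_1$. If $t>0$, the discriminant satisfies $\overline{\sigma}^{(2)}_1 - (1-\lambda^2)t^2 \ge \overline{\sigma}^{(2)}_1 - t^2 > 0$, so both roots are real. Their product is $t^2 - \overline{\sigma}^{(2)}_1 < 0$, so exactly one root is nonnegative, namely $s_+$. This gives uniqueness and the stated formula.

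\textbf{Case (2).} The identity becomes $s(s + 2\lambda t) = 0$. If $s = 0$, then $\lambda = 0$ by convention (or the covariance is zero), and the formula $4\lambda^2 t^2 = 0$ holds. If $s > 0$, then $s = -2\lambda t$; this forces $\lambda \le 0$ (strictly, unless $t = 0$, in which case $\lambda = 0$ by convention), and $s^2 = 4\lambda^2 t^2$. In every sub-case $\lambda \le 0$, matching the statement.

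\textbf{Case (3).} Here $t > 0$, since $t^2 > \overline{\sigma}^{(2)}_1 \ge 0$. The constant term $t^2 - \overline{\sigma}^{(2)}_1 > 0$ means $s = 0$ is impossible. A positive root therefore requires the sum of the roots, $-2\lambda t$, to be positive, i.e.\ $\lambda < 0$. Real roots require $\lambda^2 \ge 1 - \overline{\sigma}^{(2)}_1/t^2$. Combining these gives the stated range $-1 \le \lambda \le -\sqrt{1 - \overline{\sigma}^{(2)}_1/\overline{\sigma}^{(2)}_0}$. On this range both roots are positive, because their product and sum are both positive. Hence $\overline{\mu}^{(2)} \in \{s_-^2, s_+^2\}$.

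\textbf{Main obstacle.} The only delicate point is the degenerate conventions ($s = 0$ or $t = 0$, where $\lambda$ is defined to be $0$). These must be checked so that the formula $\mathbb{E}[DZ] = \lambda s t$ and the sign claims on $\lambda$ remain consistent. I expect this bookkeeping to be the hardest part, rather than the algebra itself.
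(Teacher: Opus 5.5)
Your proof is correct, and its core coincides with the paper's. Both reduce everything to the identity $\overline{\sigma}^{(2)}_1 = s^2 + 2\lambda s t + t^2$, with the degenerate conventions absorbed into $\mathbb{E}[DZ]=\lambda s t$. This is the quadratic whose roots are the functions $f_\pm(\lambda)$ of Lemma~\ref{lem:Var_updown_by_Cor}.

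The difference is in how you decide which roots are admissible. The paper uses that lemma, which gives monotonicity of $f_\pm$ in $\lambda$ by differentiating. It combines this with the values of $f_\pm$ at $\lambda=\pm 1$ and at $\lambda=\pm\sqrt{1-\overline{\sigma}^{(2)}_1/\overline{\sigma}^{(2)}_0}$. You instead argue pointwise in $\lambda$ via Vieta's relations: the product of the roots is $t^2-\overline{\sigma}^{(2)}_1$ and the sum is $-2\lambda t$.

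Your route is shorter and self-contained. In case (3), for instance, the positive product plus the positive root $s$ shows that both roots are positive. Hence $-2\lambda t>0$, which rules out $\lambda\ge 0$ with no need to track where each branch changes sign.

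The paper's monotonicity analysis is heavier for this theorem, but it pays off later. The paper reuses it for Theorems~\ref{thm:bound_of_var} and~\ref{theorem:bound_of_var_sign}, where the bounds are the extreme values of $f_\pm$ over the admissible range of $\lambda$.

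One small cleanup: in case (2) the sub-case $t=0$, $s>0$ cannot occur. If $t=0$ then $\overline{\sigma}^{(2)}_1=t^2=0$, which forces $s=0$, so your parenthetical about this sub-case is vacuous. Directly, $s>0$ gives $t>0$ and $\lambda=-s/(2t)<0$.
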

The number of possible values that $\overline{\mu}^{(2)}$ can take given $\lambda$ depends on the relative magnitudes of $\overline{\sigma}_0^{(2)}$ and $\overline{\sigma}_1^{(2)}$.
In addition, when $\overline{\sigma}_1^{(2)} \leq \overline{\sigma}_0^{(2)}$, the feasible range of $\lambda$ is restricted.

By Theorem \ref{thm:sensitivity_analysis_of_var}, we obtain the following bound on $\overline{\mu}^{(2)}$:
\begin{restatable}{theorem}{SecondfromS}
\label{thm:bound_of_var}
Assume that $(\overline{\sigma}^{(2)}_1,\overline{\sigma}^{(2)}_0)$ exist and are available.  
$\overline{\mu}^{(2)}$ is bounded as follows:
\begin{equation}
\label{eq:bound_of_var}
\Big(\sqrt{\overline{\sigma}^{(2)}_1}-\sqrt{\overline{\sigma}^{(2)}_0}\Big)^2
\leq \overline{\mu}^{(2)}\leq 
\Big(\sqrt{\overline{\sigma}^{(2)}_1}+\sqrt{\overline{\sigma}^{(2)}_0}\Big)^2.
\end{equation}
If only $(\overline{\sigma}^{(2)}_1,\overline{\sigma}^{(2)}_0)$ are available, then this bound is sharp.

\end{restatable}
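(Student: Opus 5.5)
The bound itself is the triangle inequality for standard deviations. Write $A \defeq Y_1-\mathbb{E}[Y_1]$ and $B \defeq Y_0-\mathbb{E}[Y_0]$. Then $Y_1-Y_0-\mathbb{E}[Y_1-Y_0]=A-B$, so
\[
\overline{\mu}^{(2)}=\mathbb{E}[(A-B)^2]=\overline{\sigma}^{(2)}_1+\overline{\sigma}^{(2)}_0-2\,\mathbb{E}[AB].
\]
By Cauchy--Schwarz, $|\mathbb{E}[AB]|\le\sqrt{\overline{\sigma}^{(2)}_1\overline{\sigma}^{(2)}_0}$, and this gives both inequalities in (\ref{eq:bound_of_var}) at once. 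Equivalently, by Minkowski's inequality in $L^2$, $\big|\|A\|_2-\|B\|_2\big|\le\|A-B\|_2\le\|A\|_2+\|B\|_2$. One could also reach the bound through Theorem~\ref{thm:sensitivity_analysis_of_var} by optimizing the displayed expressions over the feasible range of $\lambda$. That route requires a case split on the ordering of $\overline{\sigma}^{(2)}_1$ and $\overline{\sigma}^{(2)}_0$, so I would use the direct argument and mention the $\lambda$-parametrization only as a consistency check.

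For sharpness it suffices to show that both endpoints are attained by joint laws of $(Y_1,Y_0)$ with the prescribed variances; then no strictly smaller closed interval can be valid. I would take $Z$ standard normal (any unit-variance variable works) and set
\[
Y_0=\mathbb{E}[Y_0]+\sqrt{\overline{\sigma}^{(2)}_0}\,Z .
\]
For the lower endpoint I would use $Y_1=\mathbb{E}[Y_1]+\sqrt{\overline{\sigma}^{(2)}_1}\,Z$, the comonotone coupling, which gives $\mathbb{E}[AB]=\sqrt{\overline{\sigma}^{(2)}_1\overline{\sigma}^{(2)}_0}$. For the upper endpoint I would use $Y_1=\mathbb{E}[Y_1]-\sqrt{\overline{\sigma}^{(2)}_1}\,Z$, the countermonotone coupling, which gives $\mathbb{E}[AB]=-\sqrt{\overline{\sigma}^{(2)}_1\overline{\sigma}^{(2)}_0}$. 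Both constructions match the given means and variances, and they attain the lower and upper endpoints exactly.

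The only delicate point is the meaning of ``only $(\overline{\sigma}^{(2)}_1,\overline{\sigma}^{(2)}_0)$ are available''. Sharpness must hold for every admissible pair of second moments, and no other marginal information may be assumed. The Gaussian construction handles all nonnegative pairs, including degenerate ones: if a variance is $0$, the corresponding outcome is constant and both endpoints coincide. Under the paper's closed-interval definition of sharpness, attainment of the two endpoints suffices; intermediate values are not needed. They are nevertheless also attainable, for instance by a mixture or by varying the correlation between $A$ and $B$ continuously.
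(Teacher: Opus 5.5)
Your proof is correct. For the inequality it takes a different route from the paper; for sharpness it is essentially the paper's construction.

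\textbf{The inequality.} The paper does not apply Cauchy--Schwarz to $\mathbb{E}[\widetilde{Y}_1\widetilde{Y}_0]$ directly. It goes through the $\lambda$-parametrization you set aside:
\begin{itemize}
\item $\sqrt{\overline{\mu}^{(2)}}$ is a root $f_\pm(\lambda)$ of $x^2+2\lambda\sqrt{\overline{\sigma}^{(2)}_0}\,x-(\overline{\sigma}^{(2)}_1-\overline{\sigma}^{(2)}_0)=0$;
\item Lemma~\ref{lem:Var_updown_by_Cor} establishes monotonicity of $f_\pm$ on the feasible $\lambda$-range;
\item the endpoints are read off from the values at $\lambda=\pm1$ and at the edge of the domain, separately according to whether $\overline{\sigma}^{(2)}_1$ is larger than, equal to, or smaller than $\overline{\sigma}^{(2)}_0$, with $\overline{\sigma}^{(2)}_0=0$ handled on its own.
\end{itemize}
Since $|\lambda|\le 1$ is itself Cauchy--Schwarz, that route rests on the same inequality you use. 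Your two-line argument is shorter and needs no case split. It is also the Minkowski argument the paper later uses for Theorems~\ref{thm:mu4_from_4} and~\ref{thm:minkowski_bound}. What the paper's route buys is finer information: it expresses $\overline{\mu}^{(2)}$ as an explicit monotone function of the correlation $\lambda$. Theorem~\ref{theorem:bound_of_var_sign} (the sign-restricted bounds) and the sensitivity curve in Figure~\ref{fig1} build on that.

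\textbf{Sharpness.} This matches the paper up to the choice of noise. The paper uses a Rademacher sign $S$ instead of a Gaussian $Z$, with $\widetilde{Y}_0=\sqrt{\overline{\sigma}^{(2)}_0}S$ and $\widetilde{Y}_1=\pm\sqrt{\overline{\sigma}^{(2)}_1}S$. Either choice works, since only second moments are constrained.

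\textbf{A small missing step.} You should state that any coupling of $(Y_1,Y_0)$ with the prescribed means is induced by some SCM, because the bounds concern SCM-induced potential outcomes. The paper closes this explicitly with Lemma~\ref{lem:coupling_to_scm}, using the structural equation $Y=(U_0+c_0)+X\{(U_1+c_1)-(U_0+c_0)\}$. Add one sentence citing it.
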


If we know whether $\lambda\geq0$ or $\lambda\leq 0$ a priori, we can obtain tighter bounds:
\begin{restatable}{theorem}{VarSignBound}
\label{theorem:bound_of_var_sign}
Assume  $(\overline{\sigma}_1^{(2)},\overline{\sigma}_0^{(2)})$ exist and are available.

(1). 
If $\lambda\geq 0$, then we have $\overline{\sigma}_1^{(2)}\geq \overline{\sigma}_0^{(2)}$.  
If $\overline{\sigma}_1^{(2)}=\overline{\sigma}_0^{(2)}$, then $\overline{\mu}^{(2)}=0$; 
if $\overline{\sigma}_1^{(2)}>\overline{\sigma}_0^{(2)}$, we have
\begin{equation}
\label{eq:bound_of_var_lambda_ge0}
\Big(\sqrt{\overline{\sigma}_1^{(2)}}-\sqrt{\overline{\sigma}_0^{(2)}}\Big)^2
\leq
\overline{\mu}^{(2)}
\leq
\overline{\sigma}_1^{(2)}-\overline{\sigma}_0^{(2)}.
\end{equation}

(2). 
Assume $\lambda\leq 0$. If  $\overline{\sigma}_1^{(2)}>\overline{\sigma}_0^{(2)}$, we have
\begin{equation}
\label{eq:bound_of_var_lambda_le0}
\overline{\sigma}_1^{(2)}-\overline{\sigma}_0^{(2)}
\leq
\overline{\mu}^{(2)}
\leq
\Big(\sqrt{\overline{\sigma}_1^{(2)}}+\sqrt{\overline{\sigma}_0^{(2)}}\Big)^2.
\end{equation}
If $\overline{\sigma}_1^{(2)}\leq \overline{\sigma}_0^{(2)}$, Eq.~\eqref{eq:bound_of_var} holds.

These bounds are sharp. 

\end{restatable}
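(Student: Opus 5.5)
Write $D=Y_1-Y_0-\mathbb{E}[Y_1-Y_0]$ and $\tilde Y_0=Y_0-\mathbb{E}[Y_0]$. Then $Y_1-\mathbb{E}[Y_1]=D+\tilde Y_0$, so
\[
\overline{\sigma}^{(2)}_1=\overline{\mu}^{(2)}+2\lambda\sqrt{\overline{\mu}^{(2)}}\sqrt{\overline{\sigma}^{(2)}_0}+\overline{\sigma}^{(2)}_0 .
\]
Setting $s=\sqrt{\overline{\mu}^{(2)}}\ge 0$ and $a=\sqrt{\overline{\sigma}^{(2)}_0}$, this reads $s^2+2\lambda a s+a^2-\overline{\sigma}^{(2)}_1=0$. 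Theorem~\ref{thm:sensitivity_analysis_of_var} gives the nonnegative roots $s=-a\lambda\pm\sqrt{\overline{\sigma}^{(2)}_1-(1-\lambda^2)a^2}$. The plan is to read off the sign information on $\lambda$ and then optimize these roots over the allowed range of $\lambda$, case by case.

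\textbf{Part (1).} Since $\lambda\ge 0$ and $s,a\ge 0$, we get $\overline{\sigma}^{(2)}_1=s^2+2\lambda as+a^2\ge a^2=\overline{\sigma}^{(2)}_0$.

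If $\overline{\sigma}^{(2)}_1=\overline{\sigma}^{(2)}_0$, then $s^2+2\lambda as=0$ with both terms nonnegative, so $s=0$ and $\overline{\mu}^{(2)}=0$. This is consistent with Theorem~\ref{thm:sensitivity_analysis_of_var}(2), where $\lambda\le0$ forces $\lambda=0$.

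If $\overline{\sigma}^{(2)}_1>\overline{\sigma}^{(2)}_0$, Theorem~\ref{thm:sensitivity_analysis_of_var}(1) gives
\[
f(\lambda)=-a\lambda+\sqrt{\overline{\sigma}^{(2)}_1-a^2+a^2\lambda^2}.
\]
Its derivative is $a\bigl(-1+a\lambda/\sqrt{\overline{\sigma}^{(2)}_1-a^2+a^2\lambda^2}\bigr)<0$, because $\overline{\sigma}^{(2)}_1>a^2$ makes the ratio less than $1$. So $f$ is strictly decreasing on $[-1,1]$, and hence so is $f^2$. On $[0,1]$ this gives $f(1)^2=(\sqrt{\overline{\sigma}^{(2)}_1}-a)^2\le\overline{\mu}^{(2)}\le f(0)^2=\overline{\sigma}^{(2)}_1-\overline{\sigma}^{(2)}_0$.

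\textbf{Part (2), case $\overline{\sigma}^{(2)}_1>\overline{\sigma}^{(2)}_0$.} The same monotone $f$ on $[-1,0]$ gives $f(0)^2=\overline{\sigma}^{(2)}_1-\overline{\sigma}^{(2)}_0$ as the lower bound and $f(-1)^2=(\sqrt{\overline{\sigma}^{(2)}_1}+a)^2$ as the upper bound.

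\textbf{Part (2), case $\overline{\sigma}^{(2)}_1\le\overline{\sigma}^{(2)}_0$.} Theorem~\ref{thm:sensitivity_analysis_of_var}(2),(3) already forces $\lambda\le0$, so the sign assumption adds nothing. The bound of Theorem~\ref{thm:bound_of_var} therefore still holds, and its sharpness carries over.

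\textbf{Sharpness.} For each endpoint, and for the attainability of intermediate values, I will build joint distributions of $(Y_0,D)$ with prescribed $\mathrm{Var}(Y_0)=a^2$, $\mathrm{Var}(Y_1)=\overline{\sigma}^{(2)}_1$ and correlation $\lambda$. I will take $(\tilde Y_0,D)$ bivariate Gaussian (or two-point) with variances $a^2,s^2$ and correlation $\lambda$, where $s=f(\lambda)$. Since $f$ is continuous, every $\lambda$ in the admissible interval is realizable, so the whole interval is attained, including both endpoints. This makes the closed bound unimprovable.

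The degenerate case $a=0$ needs a separate check: there $\lambda=0$ by convention and all the formulas collapse to $\overline{\mu}^{(2)}=\overline{\sigma}^{(2)}_1$. The main obstacle is verifying the monotonicity of $f$ and handling these boundary and degenerate cases ($a=0$, $s=0$ with $\lambda$ undefined) carefully.
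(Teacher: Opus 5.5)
Your proposal is correct and follows essentially the paper's route. The bounds come from the quadratic $s^2+2\lambda a s+a^2=\overline{\sigma}^{(2)}_1$ in $s=\sqrt{\overline{\mu}^{(2)}}$ together with the monotonicity of its nonnegative root (the paper's Lemma~\ref{lem:Var_updown_by_Cor}), and the case $\overline{\sigma}^{(2)}_1\le\overline{\sigma}^{(2)}_0$ is reduced to Theorem~\ref{thm:bound_of_var} exactly as you do. For sharpness, the paper uses three explicit Rademacher couplings $\widetilde{Y}_1=\sqrt{\overline{\sigma}_1^{(2)}}S$, $\sqrt{\overline{\sigma}_0^{(2)}}S+\sqrt{\overline{\sigma}_1^{(2)}-\overline{\sigma}_0^{(2)}}T$, $-\sqrt{\overline{\sigma}_1^{(2)}}S$ (i.e.\ $\lambda=1,0,-1$), which are the endpoint members of your Gaussian family; your family additionally shows that the whole interval is attained, and in either case Lemma~\ref{lem:coupling_to_scm} is what turns the joint law into an SCM.
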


Can $(\overline{\sigma}_1^{(3)},\overline{\sigma}_0^{(3)})$ or
$(\overline{\sigma}_1^{(4)},\overline{\sigma}_0^{(4)})$, which  may be available in practice, provide information about $\overline{\mu}^{(2)}$?


\begin{restatable}{theorem}{SecondfromT}
\label{thm:mu2_from_3}
Assume that
$(\overline{\sigma}^{(3)}_1,\overline{\sigma}^{(3)}_0)$ exist and are available.
Then we have 
$0\leq \overline{\mu}^{(2)} \leq \infty$.
This bound is sharp if only $(\overline{\sigma}^{(3)}_1,\overline{\sigma}^{(3)}_0)$ are available.
\end{restatable}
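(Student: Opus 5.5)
The inequality $0\le\overline{\mu}^{(2)}\le\infty$ is immediate, since $\overline{\mu}^{(2)}$ is a variance. The content of the theorem is sharpness. For arbitrary prescribed values $(\overline{\sigma}^{(3)}_1,\overline{\sigma}^{(3)}_0)=(s_1,s_0)\in\mathbb{R}^2$, I will show two things:
\begin{itemize}
\item there exist joint distributions of $(Y_1,Y_0)$ matching $(s_1,s_0)$ with $\overline{\mu}^{(2)}$ arbitrarily close to $0$;
\item there exist such distributions with $\overline{\mu}^{(2)}$ arbitrarily large, or equal to $\infty$.
\end{itemize}
Together these rule out any strictly tighter closed interval. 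Because the marginal means carry no information, I may center both potential outcomes at zero.

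\textbf{Lower end.} I would take $Y_1=Y_0+c$ exactly, so that $\overline{\mu}^{(2)}=0$. This forces $s_1=s_0$, so it only works on that diagonal. For general $(s_1,s_0)$ I would perturb: set $Y_1=Y_0+\varepsilon Z$ with $Z$ independent of $Y_0$, $\mathbb{E}[Z]=0$, and $\mathbb{E}[Z^2]=1$. Then
\[
\overline{\sigma}^{(3)}_1 = s_0 + 3\varepsilon^2\,\mathbb{E}[Y_0]\cdot 1 + \dots
\]
Since the centered $Y_0$ has mean zero, the cross terms $3\varepsilon\,\mathbb{E}[Y_0^2]\mathbb{E}[Z]$ and $3\varepsilon^2\,\mathbb{E}[Y_0]\mathbb{E}[Z^2]$ both vanish, and the expansion reduces to
\[
\overline{\sigma}^{(3)}_1 = s_0+\varepsilon^3\,\mathbb{E}[Z^3].
\]
So I choose $Z$ centered with unit variance and $\mathbb{E}[Z^3]=(s_1-s_0)/\varepsilon^3$. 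This is possible because a centered two-point distribution with unit variance can attain any skewness. The result is $\overline{\mu}^{(2)}=\varepsilon^2\to 0$, while $Y_0$ is any distribution with third central moment $s_0$ (for instance a suitable two-point law).

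\textbf{Upper end.} I take $Y_1$ and $Y_0$ independent, with $Y_0$ fixed with third moment $s_0$. For $Y_1$ I use a family with third central moment $s_1$ but variance $v\to\infty$. For example, add to a base variable with third moment $s_1$ an independent symmetric component $W$ with $\mathbb{E}[W^3]=0$ and $\operatorname{Var}(W)=v$, such as $\pm\sqrt v$ with equal probability. Then
\[
\overline{\mu}^{(2)}=\operatorname{Var}(Y_1)+\operatorname{Var}(Y_0)\ge v\to\infty.
\]
If the value $\infty$ must be attained exactly, I would instead take $W$ symmetric with finite third absolute moment but infinite second moment. That is impossible, since a finite third moment implies a finite second moment. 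I would therefore argue sharpness via the supremum being $+\infty$, in line with the paper's convention of closed intervals in $\overline{\mathbb{R}}$.

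\textbf{Main obstacle.} The delicate step is the lower end. I need to check that the perturbation trick produces exact third moments with no leftover cross terms; centering both $Y_0$ and $Z$ handles this. I also need to make precise that ``sharp'' means the bounds are approached, not attained, when $s_1\ne s_0$. Otherwise the argument is direct.
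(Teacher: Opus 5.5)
Your proof is correct. For the upper end it is essentially the paper's argument: the paper mixes a base pair $(B_1,B_0)$ with an antithetic symmetric component $(\pm A,\mp A)$ rather than adding an independent symmetric $W$, but both use symmetric mass that inflates the variance without changing third moments. Like you, the paper only shows $\overline{\mu}^{(2)}>M$ for every $M$ rather than attaining $\infty$.

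For the lower end you take a genuinely different route.

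\emph{Paper's route.} The paper keeps the ICE bounded, $\widetilde{\Delta}=\pm\delta$ with $\mathbb{E}[\widetilde{\Delta}^3]=0$. It makes $\widetilde{\Delta}$ \emph{dependent} on $\widetilde{Y}_0$: a shift of $-\delta$ on the $B$ component and $+\delta$ on the $\pm M_0$ components. The correction $\overline{\sigma}_1^{(3)}-\overline{\sigma}_0^{(3)}$ then comes from the cross term $3\mathbb{E}[\widetilde{Y}_0^2\widetilde{\Delta}]=\tfrac{3\delta}{2}(M_0^2-\mathbb{E}[B^2])$, amplified by taking $M_0$ large.

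\emph{Your route.} You make $\widetilde{\Delta}=\varepsilon Z$ independent of $\widetilde{Y}_0$, so your coupling even satisfies IED. You put the entire correction into $\mathbb{E}[\widetilde{\Delta}^3]=\varepsilon^3\mathbb{E}[Z^3]$, using that unit variance places no constraint on the third moment (the paper's Lemma~\ref{lem:two_point_23}).

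\emph{Trade-off.} Your version is shorter and needs no tuning of $M_0$. The paper's version is more robust. Because $|\widetilde{\Delta}|=|\delta|$ almost surely, every even moment $\overline{\mu}^{(m)}=\delta^m$ is small at the same time. That is why the same construction is reused for Theorem~\ref{thm:mu4_from_3} and, in generalized form, for Theorem~\ref{thm:even_m_single_k}(1). Yours does not transfer. For a centered, unit-variance $Z$ one has $\mathbb{E}[Z^4]\geq 1+\mathbb{E}[Z^3]^2$, so
$\overline{\mu}^{(4)}=\varepsilon^4\mathbb{E}[Z^4]\geq \varepsilon^4+(\overline{\sigma}_1^{(3)}-\overline{\sigma}_0^{(3)})^2/\varepsilon^2$,
which blows up as $\varepsilon\to 0$ whenever $\overline{\sigma}_1^{(3)}\neq\overline{\sigma}_0^{(3)}$.

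A minor point: to match the paper's SCM framing, you should state that any coupling of centered variables plus constants is realized by some SCM (Lemma~\ref{lem:coupling_to_scm}).
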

$(\overline{\sigma}^{(3)}_1,\overline{\sigma}_0^{(3)})$ alone does not provide a nontrivial bound for $\overline{\mu}^{(2)}$.
On the other hand, $(\overline{\sigma}^{(4)}_1, \overline{\sigma}_0^{(4)})$ provides a nontrivial upper bound as follows:
\begin{restatable}{theorem}{SecondfromF}
\label{thm:mu2_from_4}
Assume that $(\overline{\sigma}^{(4)}_1,\overline{\sigma}^{(4)}_0)$ exist and are available. Then we have
\begin{equation}
0\leq \overline{\mu}^{(2)}\leq \Big(\sqrt[4]{\overline{\sigma}_1^{(4)}}+\sqrt[4]{\overline{\sigma}_0^{(4)}}\Big)^2.
\end{equation}
This bound is sharp if only $(\overline{\sigma}^{(4)}_1,\overline{\sigma}^{(4)}_0)$ are available.
\end{restatable}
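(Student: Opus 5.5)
The proof has two parts: a validity argument via Lyapunov and Minkowski, then a sharpness argument by explicit construction.

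\textbf{Validity.} Write $U\defeq Y_1-\mathbb{E}[Y_1]$ and $V\defeq Y_0-\mathbb{E}[Y_0]$, so that $Y_1-Y_0-\mathbb{E}[Y_1-Y_0]=U-V$ and $\overline{\mu}^{(2)}=\|U-V\|_2^2$. The lower bound $0$ is trivial. For the upper bound, apply Lyapunov's (Jensen's) inequality $\|W\|_2\leq \|W\|_4$ to $W=U-V$. Then use Minkowski's inequality in $L^4$: $\|U-V\|_4\leq \|U\|_4+\|V\|_4=\sqrt[4]{\overline{\sigma}^{(4)}_1}+\sqrt[4]{\overline{\sigma}^{(4)}_0}$. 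Squaring gives the stated bound. Existence of the fourth moments guarantees that all quantities are finite.

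\textbf{Sharpness of the upper bound.} Both inequalities must be nearly tight simultaneously. Lyapunov is tight when $|U-V|$ is constant, and Minkowski is tight when $V=-cU$ with $c\geq0$. So take $U$ to be a symmetric two-point variable, $U=\pm a$ with probability $1/2$ each, where $a=\sqrt[4]{\overline{\sigma}^{(4)}_1}$, and set $V=-(b/a)U$ with $b=\sqrt[4]{\overline{\sigma}^{(4)}_0}$. Then $V=\mp b$, which has mean zero and fourth central moment $b^4$. We get $U-V=\pm(a+b)$, so $\overline{\mu}^{(2)}=(a+b)^2$. This attains the bound exactly, and adding the arbitrary given means to $U$ and $V$ gives $Y_1,Y_0$. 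The degenerate cases ($a=0$ or $b=0$) are handled by the same construction with a constant variable.

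\textbf{Sharpness of the lower bound.} We must show that $\overline{\mu}^{(2)}$ can be made arbitrarily close to $0$ (or equal to $0$) for arbitrary given $(\overline{\sigma}^{(4)}_1,\overline{\sigma}^{(4)}_0)$. This is the main obstacle, because the two marginals may have different fourth moments, so we cannot simply set $U=V$. The plan is to let $U=V+\varepsilon Z$, where $V$ has mean zero and fourth moment $\overline{\sigma}^{(4)}_0$ and the perturbation $Z$ is rare but large. Concretely, with probability $p$ set $Z=\pm M/\varepsilon$ (symmetric and independent of $V$), and $Z=0$ otherwise. Then $\overline{\mu}^{(2)}=\mathbb{E}[(\varepsilon Z)^2]=pM^2$, while the fourth moment of $U$ is $\overline{\sigma}^{(4)}_0+6\,\overline{\sigma}^{(2)}_0\,pM^2+pM^4$. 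Choosing $pM^4=\overline{\sigma}^{(4)}_1-\overline{\sigma}^{(4)}_0-6\,\overline{\sigma}^{(2)}_0\,pM^2$ with $M\to\infty$ and $p\to0$ makes $pM^2\to0$. This handles the case $\overline{\sigma}^{(4)}_1>\overline{\sigma}^{(4)}_0$; by symmetry (swapping the roles of $Y_1$ and $Y_0$) it also handles $<$. In the case of equality, take $U=V$, which gives $\overline{\mu}^{(2)}=0$.

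A cleaner alternative for the strict case is to choose $V$ itself with tiny variance but the prescribed fourth moment (a heavy rare spike), and $U$ likewise. Then $\overline{\mu}^{(2)}\leq(\sqrt{\Var U}+\sqrt{\Var V})^2$ by Theorem~\ref{thm:bound_of_var}, and this can be made arbitrarily small since the fourth moment does not constrain the variance from below. I would use this version. Hence no closed interval strictly inside $[0,(\sqrt[4]{\overline{\sigma}_1^{(4)}}+\sqrt[4]{\overline{\sigma}_0^{(4)}})^2]$ contains all feasible values, which proves sharpness.
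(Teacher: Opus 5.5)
Your proof is correct and is essentially the paper's: the upper-bound construction $\widetilde{Y}_1=\sqrt[4]{\overline{\sigma}_1^{(4)}}\,S$, $\widetilde{Y}_0=-\sqrt[4]{\overline{\sigma}_0^{(4)}}\,S$ with a symmetric sign $S$ is identical to the paper's (writing $V=-bS$ directly also avoids dividing by $a$ in the degenerate case). Your preferred lower-bound construction is also exactly what the paper does: rare symmetric spikes give each marginal the prescribed fourth moment but variance $\sqrt{p_x\overline{\sigma}_x^{(4)}}\to 0$, and the paper takes the two variables independent so that $\overline{\mu}^{(2)}$ is the sum of the two variances. The only difference is cosmetic: for validity you apply Lyapunov to $U-V$ and then Minkowski in $L^4$, whereas the paper applies Cauchy--Schwarz to each marginal to get $\|\widetilde{Y}_x\|_2\le\sqrt[4]{\overline{\sigma}_x^{(4)}}$ and then Minkowski in $L^2$, which yields the same bound.
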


Furthermore, 
by taking the intersection of the bounds in Theorems~\ref{thm:bound_of_var},~\ref{thm:mu2_from_3}, and \ref{thm:mu2_from_4},
we have the following results:

\begin{restatable}{theorem}{SecondfromComb}
\label{cor:bounding_mu2_intersection}
(1). Assume that  $(\overline{\sigma}_1^{(2)}$,\,$\overline{\sigma}_0^{(2)})$ and 
$(\overline{\sigma}_1^{(3)}$,\,$\overline{\sigma}_0^{(3)})$ exist and are available. Then we have
\begin{equation}
\label{eq:mu2_from_23}
    \Big(\sqrt{\overline{\sigma}^{(2)}_1}-\sqrt{\overline{\sigma}^{(2)}_0}\Big)^2\leq \overline{\mu}^{(2)}
    \leq \Big(\sqrt{\overline{\sigma}^{(2)}_1}+\sqrt{\overline{\sigma}^{(2)}_0}\Big)^2.
\end{equation}
{If only $(\overline{\sigma}_1^{(2)}$,\,$\overline{\sigma}_0^{(2)})$ and  $(\overline{\sigma}_1^{(3)}$,\,$\overline{\sigma}_0^{(3)})$ are available, then} this bound is sharp.

(2). Assume that $(\overline{\sigma}_1^{(2)}$,\,$\overline{\sigma}_0^{(2)})$ and $(\overline{\sigma}_1^{(4)}$,\,$\overline{\sigma}_0^{(4)})$ exist and are available, we have 
\begin{equation}
\label{eq:mu2_from_24}
    \Big(\sqrt{\overline{\sigma}^{(2)}_1}-\sqrt{\overline{\sigma}^{(2)}_0}\Big)^2  \leq \overline{\mu}^{(2)}
    \leq \Big(\sqrt{\overline{\sigma}^{(2)}_1}+\sqrt{\overline{\sigma}^{(2)}_0}\Big)^2.
\end{equation}
{If only $(\overline{\sigma}_1^{(2)}$,\,$\overline{\sigma}_0^{(2)})$ and $(\overline{\sigma}_1^{(4)}$,\,$\overline{\sigma}_0^{(4)})$ are available, then} this bound is sharp.

(3). Assume that $(\overline{\sigma}_1^{(3)}$,\,$\overline{\sigma}_0^{(3)})$ and $(\overline{\sigma}_1^{(4)}$,\,$\overline{\sigma}_0^{(4)})$ exist and are available. Then we have
\begin{equation}
\label{eq:mu2_from_34_intersection}
    0  \leq \overline{\mu}^{(2)} \leq \Big(\sqrt[4]{\overline{\sigma}_1^{(4)}}+\sqrt[4]{\overline{\sigma}_0^{(4)}}\Big)^2. 
\end{equation} 

(4). Assume that $(\overline{\sigma}_1^{(2)}$,\,$\overline{\sigma}_0^{(2)})$,
$(\overline{\sigma}_1^{(3)}$,\,$\overline{\sigma}_0^{(3)})$,
and $(\overline{\sigma}_1^{(4)}$,\,$\overline{\sigma}_0^{(4)})$ exist and are available. 
Then we have
\begin{equation}
\label{eq:bound_of_var_with_234}
\Big(\sqrt{\overline{\sigma}^{(2)}_1}-\sqrt{\overline{\sigma}^{(2)}_0}\Big)^2  \leq \overline{\mu}^{(2)}
\leq \Big(\sqrt{\overline{\sigma}^{(2)}_1}+\sqrt{\overline{\sigma}^{(2)}_0}\Big)^2.
\end{equation}
\end{restatable}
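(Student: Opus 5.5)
The argument has two parts for each item. \emph{Validity} comes from intersecting earlier bounds. \emph{Sharpness} in (1) and (2) comes from a limiting construction.

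\textbf{Validity.} Each interval in (1)--(4) is the intersection of bounds already established in Theorems~\ref{thm:bound_of_var}, \ref{thm:mu2_from_3} and \ref{thm:mu2_from_4}. Any true value of $\overline{\mu}^{(2)}$ therefore lies in it. What remains is to check that the intersection has the stated closed form.

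For (1), Theorem~\ref{thm:mu2_from_3} contributes only $[0,\infty]$, so the intersection is the interval of Theorem~\ref{thm:bound_of_var}.

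For (2) and (4), I would show that the fourth-moment upper bound is never tighter. By Lyapunov's (or Jensen's) inequality, $\overline{\sigma}^{(2)}_x=\mathbb{E}[(Y_x-\mathbb{E}Y_x)^2]\le (\mathbb{E}[(Y_x-\mathbb{E}Y_x)^4])^{1/2}$. Hence $\sqrt{\overline{\sigma}^{(2)}_x}\le \sqrt[4]{\overline{\sigma}^{(4)}_x}$ for $x=0,1$, and so
\[
\Big(\sqrt{\overline{\sigma}^{(2)}_1}+\sqrt{\overline{\sigma}^{(2)}_0}\Big)^2\le\Big(\sqrt[4]{\overline{\sigma}^{(4)}_1}+\sqrt[4]{\overline{\sigma}^{(4)}_0}\Big)^2 .
\]
Similarly, for (3) the third-moment bound is trivial, so the intersection is the interval of Theorem~\ref{thm:mu2_from_4}.

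\textbf{Sharpness in (1) and (2).} Since the bounds are closed intervals, it suffices to exhibit joint laws of $(Y_0,Y_1)$ that match all the given marginal moments and whose $\overline{\mu}^{(2)}$ comes arbitrarily close to each endpoint. If the bound could be tightened, one of these approximating values would fall outside it.

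The construction is $Y_x=\mathbb{E}[Y_x]+c_xU+W_x$, where $U,W_0,W_1$ are mutually independent and mean zero. Here $U$ is the symmetric two-point variable $\pm1$, and $c_x=\pm\sqrt{\overline{\sigma}^{(2)}_x-\delta}$. Choosing equal signs of $c_1,c_0$ targets the lower endpoint; opposite signs target the upper endpoint.

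Each $W_x$ is a small-variance component. It has variance exactly $\delta$, while its third moment (case (1)) or fourth moment (case (2)) is adjusted so that $Y_x$ reproduces the given $\overline{\sigma}^{(3)}_x$ or $\overline{\sigma}^{(4)}_x$. Such $W_x$ exist because a rare large atom controls the higher moment almost independently of the variance. For instance, a mean-corrected atom $t$ with probability $p$ has variance $\approx t^2p=\delta$ and third moment $\approx t^3p$. Taking $t$ of order $\overline{\sigma}^{(3)}/\delta$ and $p=\delta/t^2$ hits any real third moment; finite exact adjustment is done by mixing two such atoms.

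In case (2), independence gives
\[
\overline{\sigma}^{(4)}_x=c_x^4+6c_x^2\delta+\mathbb{E}W_x^4 .
\]
One then needs $\mathbb{E}W_x^4=\overline{\sigma}^{(4)}_x-(\overline{\sigma}^{(2)}_x-\delta)^2-6(\overline{\sigma}^{(2)}_x-\delta)\delta\ge \delta^2$. This holds for small $\delta$ whenever $\overline{\sigma}^{(4)}_x>(\overline{\sigma}^{(2)}_x)^2$. The equality case $\overline{\sigma}^{(4)}_x=(\overline{\sigma}^{(2)}_x)^2$ forces a symmetric two-point law, and there one takes $\delta=0$ and uses $U$ directly.

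In either case, $\overline{\mu}^{(2)}=(c_1-c_0)^2+2\delta$. As $\delta\to0$ this tends to $(\sqrt{\overline{\sigma}^{(2)}_1}\mp\sqrt{\overline{\sigma}^{(2)}_0})^2$.

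\textbf{Main obstacle.} The delicate step is constructing $W_x$ with prescribed variance $\delta\to0$ and an exactly prescribed higher moment, jointly with the cross-term bookkeeping in the fourth-moment case. The degenerate boundary cases also need care: $\overline{\sigma}^{(2)}_x=0$, and $\overline{\sigma}^{(4)}_x=(\overline{\sigma}^{(2)}_x)^2$. I would treat these separately and first, using explicit two-point or degenerate laws.
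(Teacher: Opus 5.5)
Your proposal is correct. The validity half is the same as the paper's. The third-moment bound contributes only $[0,\infty]$. For (2) and (4) the paper uses your inequality $\overline{\sigma}^{(2)}_x\le\sqrt{\overline{\sigma}^{(4)}_x}$ (via Cauchy--Schwarz) to show that the fourth-moment upper bound is never active.

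In (1) the sharpness construction is essentially the paper's, with a cosmetic difference. The paper perturbs only one coordinate: $\widetilde{Y}_0=\sqrt{\overline{\sigma}^{(2)}_0}Z_0$, where $Z_0$ already carries the required skewness, and $\widetilde{Y}_1=\alpha Z_0+W$ with $\alpha=\sqrt{\overline{\sigma}^{(2)}_1}(1-\delta)$ and $W$ independent with matched variance and third moment. You instead use a symmetric common factor and push all third-moment information into independent noise in both coordinates. Your ``rare large atom'' device is exactly Lemma~\ref{lem:two_point_23}.

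In (2) the routes differ genuinely: the paper uses a \emph{mixture} rather than a sum. With probability $1-\eta$ it takes the aligned or anti-aligned pair $(\pm\sqrt{\overline{\sigma}^{(2)}_1}S,\sqrt{\overline{\sigma}^{(2)}_0}S)$. With probability $\eta$ it takes independent symmetric three-point variables with variance $\overline{\sigma}^{(2)}_x$ and fourth moment $m_x=(\overline{\sigma}^{(4)}_x-(1-\eta)(\overline{\sigma}^{(2)}_x)^2)/\eta$. Since moments are linear under mixing, no cross terms arise. Moreover $m_x\ge(\overline{\sigma}^{(2)}_x)^2$ holds automatically, so the case $\overline{\sigma}^{(4)}_x=(\overline{\sigma}^{(2)}_x)^2$ needs no special treatment.

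Your additive route must carry the cross term $6c_x^2\delta$ and the feasibility condition $\overline{\sigma}^{(4)}_x-(\overline{\sigma}^{(2)}_x)^2\ge 4\delta(\overline{\sigma}^{(2)}_x-\delta)$. This condition fails for every $\delta\in(0,\overline{\sigma}^{(2)}_x)$ in the equality case. So you should let $\delta$ depend on $x$, setting $\delta_x=0$ only for a coordinate in the equality case, since the other coordinate cannot use $\delta=0$. You then get $\overline{\mu}^{(2)}=(c_1-c_0)^2+\delta_1+\delta_0$, which still tends to the endpoints.

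Your approach gives one template for both (1) and (2), with an explicit formula for $\overline{\mu}^{(2)}$. The paper's mixture avoids the cross-term bookkeeping and the degenerate case split in (2).
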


In general, taking the intersection of sharp bounds does not preserve the sharpness.
Indeed, the bounds in (3) and (4) of Theorem \ref{cor:bounding_mu2_intersection} are not sharp: we provide their counterexamples in Appendix \ref{sec_app:intersection_non-sharp}. 
As an exception, the bounds in (1) and (2) are sharp, which means that, given $(\overline{\sigma}_1^{(2)}$,\,$\overline{\sigma}_0^{(2)})$, additionally knowing either $(\overline{\sigma}_1^{(3)}$,\,$\overline{\sigma}_0^{(3)})$ alone or $(\overline{\sigma}_1^{(4)}$,\,$\overline{\sigma}_0^{(4)})$ alone does not further tighten the bound for $\overline{\mu}^{(2)}$. On the other hand, jointly knowing both $(\overline{\sigma}_1^{(3)}$,\,$\overline{\sigma}_0^{(3)})$ and $(\overline{\sigma}_1^{(4)}$,\,$\overline{\sigma}_0^{(4)})$, in addition to $(\overline{\sigma}_1^{(2)}$,\,$\overline{\sigma}_0^{(2)})$,  provides additional identifying information about  $\overline{\mu}^{(2)}$.

The upper bound on $\overline{\mu}^{(2)}$ can be translated into a bound on the probability that the ICE has the opposite sign from the ACE. Let $\tau\coloneqq \mathbb{E}[Y_1-Y_0]$ and suppose that $\overline{\mu}^{(2)}\leq U_2$. If $\tau>0$, then by Cantelli's inequality,
\begin{equation}
\label{eq:cantelli_positive_ace}
\mathbb{P}(Y_1-Y_0\leq 0)\leq\frac{U_2}{U_2+\tau^2}.
\end{equation}
If $\tau<0$, then
\begin{equation}
\label{eq:cantelli_negative_ace}
\mathbb{P}(Y_1-Y_0\geq 0)\leq\frac{U_2}{U_2+\tau^2}.
\end{equation}

\subsection{Bounds of $\overline{\mu}^{(3)}$
}
\label{subsec:Bounding_of_mu3}


We next study the bounding of $\overline{\mu}^{(3)}$ using $(\overline{\sigma}_1^{(2)}$,\,$\overline{\sigma}_0^{(2)})$,
$(\overline{\sigma}_1^{(3)}$,\,$\overline{\sigma}_0^{(3)})$, or
$(\overline{\sigma}_1^{(4)}$,\,$\overline{\sigma}_0^{(4)})$.
First we consider the cases where only one of these is available.
\begin{restatable}{theorem}{ThirdfromS}
\label{thm:mu3_from_2}
Assume  $(\overline{\sigma}_1^{(2)},\overline{\sigma}_0^{(2)})$ exist and are available.

(1). If $\overline{\sigma}_1^{(2)}=\overline{\sigma}_0^{(2)}=0$, we have $\overline{\mu}^{(3)}=0$.

(2). If $\overline{\sigma}_1^{(2)}>0$ or $\overline{\sigma}_0^{(2)}>0$, we have $-\infty \leq \overline{\mu}^{(3)}\leq \infty$. 
If only $(\overline{\sigma}_1^{(2)},\overline{\sigma}_0^{(2)})$ are available, then this bound is sharp.

\end{restatable}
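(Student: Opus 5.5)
For part (1), I will use that a random variable with zero variance is almost surely constant. If $\overline{\sigma}_1^{(2)}=\overline{\sigma}_0^{(2)}=0$, then $Y_1=\mathbb{E}[Y_1]$ and $Y_0=\mathbb{E}[Y_0]$ almost surely. Hence $Y_1-Y_0-\mathbb{E}[Y_1-Y_0]=0$ almost surely, and so $\overline{\mu}^{(3)}=0$.

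For part (2), the inequality $-\infty\le\overline{\mu}^{(3)}\le\infty$ holds trivially. The real content is sharpness: no closed interval strictly smaller than $\overline{\mathbb{R}}$ contains every value compatible with the given variances. It therefore suffices to show the following. For any prescribed $(\overline{\sigma}_1^{(2)},\overline{\sigma}_0^{(2)})$ with at least one entry positive, and any $M>0$, there are joint laws of $(Y_1,Y_0)$ with these marginal variances and with $\overline{\mu}^{(3)}\ge M$, and also ones with $\overline{\mu}^{(3)}\le -M$. Because negating both $Y_1$ and $Y_0$ leaves the variances unchanged and flips the sign of $\overline{\mu}^{(3)}$, only the case $\overline{\mu}^{(3)}\to+\infty$ needs a construction.

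\textbf{Construction when $\overline{\sigma}_1^{(2)}>0$.} Let $Y_0=\sqrt{\overline{\sigma}_0^{(2)}}\,\varepsilon$ with $\varepsilon=\pm1$ equiprobable. Let $Y_1=Z$ be independent of $\varepsilon$, where $Z$ is a two-point variable with mean $0$ and variance $\overline{\sigma}_1^{(2)}$: it takes the value $a=\sqrt{\overline{\sigma}_1^{(2)}(1-p)/p}$ with probability $p$ and the value $-\sqrt{\overline{\sigma}_1^{(2)}p/(1-p)}$ with probability $1-p$. Its third central moment is
\[
\overline{\sigma}_1^{(2)\,3/2}\,\frac{1-2p}{\sqrt{p(1-p)}},
\]
which tends to $+\infty$ as $p\to0$. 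By independence, $\overline{\mu}^{(3)}=\mathbb{E}[Z^3]-\mathbb{E}[(Y_0-\mathbb{E} Y_0)^3]$. The second term is $0$ because $\varepsilon$ is symmetric, so $\overline{\mu}^{(3)}\to\infty$. If $\overline{\sigma}_0^{(2)}=0$, then $Y_0$ is simply constant and the same computation applies.

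\textbf{Construction when $\overline{\sigma}_1^{(2)}=0<\overline{\sigma}_0^{(2)}$.} Take $Y_1$ constant and $Y_0=-Z'$, where $Z'$ is the skewed two-point variable above built with variance $\overline{\sigma}_0^{(2)}$. Then $\overline{\mu}^{(3)}=\mathbb{E}[Z'^3]\to\infty$.

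Joint laws exist in every case since we simply impose the couplings described. Putting the cases together, the set of attainable values of $\overline{\mu}^{(3)}$ is unbounded both above and below. Consequently any closed interval containing it must be $[-\infty,\infty]$, which establishes sharpness.

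The only real obstacle is making sure that moving $\overline{\mu}^{(3)}$ does not disturb the fixed variances. The two-point family handles this because it keeps the variance exactly fixed while its skewness diverges.
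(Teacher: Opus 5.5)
Your proof is correct and follows essentially the paper's argument: the paper also takes $\widetilde{Y}_0=\sqrt{\overline{\sigma}_0^{(2)}}S$ with an independent symmetric sign $S$ and pairs it with a skewed two-point $\widetilde{Y}_1$ of variance $\overline{\sigma}_1^{(2)}$ (Lemma~\ref{lem:two_point_23}), so that $\overline{\mu}^{(3)}=\mathbb{E}[\widetilde{Y}_1^3]$. The differences are cosmetic: the paper chooses $p$ so that every real value is attained exactly and swaps the roles of $0$ and $1$, where you instead let $p\to 0$ and use the negation symmetry; the paper also cites Lemma~\ref{lem:coupling_to_scm} to realize each coupling as an SCM, a step you leave implicit.
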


\begin{restatable}{theorem}{ThirdfromT}
\label{thm:mu3_from_3}
Assume that  $(\overline{\sigma}_1^{(3)},\overline{\sigma}_0^{(3)})$ exist and are available.
Then we have
$-\infty \leq \overline{\mu}^{(3)} \leq \infty$.
If only $(\overline{\sigma}_1^{(3)},\overline{\sigma}_0^{(3)})$ are available, then 
this bound is sharp.
\end{restatable}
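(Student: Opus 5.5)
Theorem~\ref{thm:mu3_from_3} says the trivial interval $[-\infty,\infty]$ is sharp. The inequality $-\infty\le\overline{\mu}^{(3)}\le\infty$ is immediate, so the content is sharpness. For that, it suffices to show the following: for any prescribed values $(\overline{\sigma}^{(3)}_1,\overline{\sigma}^{(3)}_0)=(s_1,s_0)\in\mathbb{R}^2$ and any $M>0$, there exist joint distributions of $(Y_1,Y_0)$ with these third marginal central moments such that $\overline{\mu}^{(3)}>M$, and others with $\overline{\mu}^{(3)}<-M$. Then no closed interval strictly inside $[-\infty,\infty]$ contains all feasible values. Since everything is centered, I may take $\mathbb{E}[Y_1]=\mathbb{E}[Y_0]=0$. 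I write $A=Y_1$, $B=Y_0$, so that
\[
\overline{\mu}^{(3)}=\mathbb{E}[(A-B)^3]=s_1-s_0-3\mathbb{E}[A^2B]+3\mathbb{E}[AB^2].
\]
The task is therefore to make $\mathbb{E}[AB^2]-\mathbb{E}[A^2B]$ arbitrarily large of either sign while holding $\mathbb{E}[A^3]=s_1$ and $\mathbb{E}[B^3]=s_0$ fixed.

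\textbf{Construction.} I use a mixture. First fix a base coupling $(A_0,B_0)$ of mean-zero marginals realizing $s_1,s_0$, for instance independent two-point variables. Then add a symmetric perturbation that leaves the marginal third moments unchanged but creates large cross terms.

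A concrete candidate is as follows. Let $U$ be a random sign ($\pm1$ with probability $1/2$ each), independent of everything else. On an event of probability $p$, set $A=tU$ and $B=-tU$ scaled by $c$. This scaling is the wrong choice, since $A^2B$ and $AB^2$ are then odd in $U$ and their expectations vanish. I need an asymmetric perturbation with zero marginal third moments instead.

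The better choice is a three-point joint law on an extra component. Take $(A,B)$ uniform on the three points $(a_i,b_i)$ with $\sum a_i=\sum b_i=0$ and $\sum a_i^3=\sum b_i^3=0$. Such points exist: take $a=(1,-1,0)$ and $b=(0,1,-1)$. Then $\sum a_i^2b_i=1$ and $\sum a_ib_i^2=-1$, so the cross term $\mathbb{E}[AB^2]-\mathbb{E}[A^2B]$ is nonzero. Scaling the points by $t$ multiplies this cross term by $t^3$ while keeping the marginal third moments at zero. Swapping the roles of the coordinates, or taking $t<0$, flips its sign.

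\textbf{Combining.} I take $(A,B)$ to be the sum of the independent base pair $(A_0,B_0)$ and the scaled perturbation $(tP,tQ)$. Centered third moments are additive for independent summands, so $\mathbb{E}[A^3]=s_1+0$ and $\mathbb{E}[B^3]=s_0+0$. The cross terms are also additive: the mixed terms factor through zero means, and terms like $\mathbb{E}[A_0]\mathbb{E}[\cdot]$ vanish. The only surviving mixed contribution from $\mathbb{E}[(A_0+tP)^2(B_0+tQ)]$ has the form $\mathbb{E}[A_0^2]\mathbb{E}[tQ]=0$, together with similar vanishing terms.

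With the choice above, $\overline{\mu}^{(3)}=\mathbb{E}[(A_0-B_0)^3]+t^3\,\mathbb{E}[(P-Q)^3]$, where $\mathbb{E}[(P-Q)^3]=\tfrac13\sum (a_i-b_i)^3=\tfrac13(1-8+1)=-2\neq0$. Letting $t\to\pm\infty$ drives $\overline{\mu}^{(3)}$ to $\mp\infty$.

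\textbf{Main obstacle.} The only delicate step is verifying that the independent-sum decomposition preserves the prescribed marginal third central moments exactly and makes the cross terms additive. For third central moments of independent zero-mean variables this additivity is standard. The rest is arithmetic.
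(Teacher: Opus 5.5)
Your proof is correct and is essentially the paper's. Your gadget $\{(t,0),(-t,t),(0,-t)\}$ is exactly the paper's $\{(c,-c),(-c,0),(0,c)\}$ with $c=-t$: both marginals have zero mean and zero third moment, while the difference has nonzero third moment. The only difference is the combination step. You superimpose the gadget on the base coupling by an independent sum, giving $\overline{\mu}^{(3)}=s_1-s_0-2t^3$; the paper instead mixes it with weight $1/2$ against an independent base pair with doubled third moments $2\overline{\sigma}_x^{(3)}$, giving $\overline{\sigma}_1^{(3)}-\overline{\sigma}_0^{(3)}+c^3$. In either case, realizability by an SCM follows from the paper's coupling lemma.
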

By Theorems~\ref{thm:mu3_from_2} and \ref{thm:mu3_from_3},  $(\overline{\sigma}_1^{(2)}$,\,$\overline{\sigma}_0^{(2)})$ or 
$(\overline{\sigma}_1^{(3)}$,\,$\overline{\sigma}_0^{(3)})$ alone does not help to obtain a nontrivial bound for  $\overline{\mu}^{(3)}$. 
The condition $\overline{\sigma}_1^{(2)}=\overline{\sigma}_0^{(2)}=0$ corresponds to an extreme case where both $Y_1$ and $Y_0$ are 
constant with probability one.

On the other hand, $(\overline{\sigma}_1^{(4)},\overline{\sigma}_0^{(4)})$ is helpful in bounding $\overline{\mu}^{(3)}$:
\begin{restatable}{theorem}{ThirdfromF}
\label{thm:mu3_from_4}
Assume that $(\overline{\sigma}_1^{(4)},\overline{\sigma}_0^{(4)})$ exist and are available.
Then we have
\begin{equation}\label{eq:mu3_bounds_4only_newextra}
\begin{aligned}
&-\frac{\sqrt{2}}{\sqrt[4]{27}}\Bigl(\sqrt[4]{\overline{\sigma}_1^{(4)}}+\sqrt[4]{\overline{\sigma}_0^{(4)}}\Bigr)^3
\leq \overline{\mu}^{(3)} \\
&\hspace{2cm}\leq \frac{\sqrt{2}}{\sqrt[4]{27}}\Bigl(\sqrt[4]{\overline{\sigma}_1^{(4)}}+\sqrt[4]{\overline{\sigma}_0^{(4)}}\Bigr)^3.
\end{aligned}
\end{equation}
If only $(\overline{\sigma}_1^{(4)},\overline{\sigma}_0^{(4)})$ are available, then 
this bound is sharp.
\end{restatable}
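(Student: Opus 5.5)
The plan is to reduce the problem to a one-dimensional moment inequality for a single mean-zero random variable, and then to attain the endpoints with two-point distributions. Write $U_1\defeq Y_1-\mathbb{E}[Y_1]$, $U_0\defeq Y_0-\mathbb{E}[Y_0]$, and $Z\defeq U_1-U_0$, so that $\mathbb{E}[Z]=0$ and $\overline{\mu}^{(3)}=\mathbb{E}[Z^3]$. Set $a\defeq\sqrt[4]{\overline{\sigma}_1^{(4)}}$ and $b\defeq\sqrt[4]{\overline{\sigma}_0^{(4)}}$. The argument has three steps.

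\textbf{Step 1 (fourth moment of $Z$).} By Minkowski's inequality in $L^4$,
\[
M\defeq\mathbb{E}[Z^4]\leq (a+b)^4 .
\]
This is the same estimate that underlies Theorem~\ref{thm:mu2_from_4}.

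\textbf{Step 2 (third moment given the fourth).} I will show that every mean-zero $Z$ satisfies $|\mathbb{E}[Z^3]|\leq \frac{\sqrt{2}}{\sqrt[4]{27}}M^{3/4}$. Let $v\defeq\mathbb{E}[Z^2]$. Because $\mathbb{E}[Z]=0$, we can subtract $v\,\mathbb{E}[Z]$ at no cost and then apply Cauchy--Schwarz:
\[
\mathbb{E}[Z^3]=\mathbb{E}[Z(Z^2-v)]\leq \sqrt{v}\,\sqrt{\mathbb{E}[(Z^2-v)^2]}=\sqrt{v(M-v^2)} .
\]
Applying the same bound to $-Z$ controls $-\mathbb{E}[Z^3]$, so the right-hand side bounds $|\mathbb{E}[Z^3]|$.

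Next, maximize $g(v)=v(M-v^2)$ over $0\leq v\leq\sqrt{M}$ (the constraint $v^2\leq M$ is Jensen). Setting $g'(v)=0$ gives $v^2=M/3$, and then $g=\tfrac{2}{3}\sqrt{M/3}\,M$. Hence
\[
|\mathbb{E}[Z^3]|\leq \frac{\sqrt{2}}{3^{3/4}}M^{3/4}=\frac{\sqrt{2}}{\sqrt[4]{27}}M^{3/4}.
\]
Combining this with Step 1 yields the stated bound \eqref{eq:mu3_bounds_4only_newextra}.

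\textbf{Step 3 (sharpness).} I will show both endpoints are attained, which suffices for the closed-interval notion of sharpness used here. The extremal $Z$ must make both inequalities tight:
\begin{itemize}
\item Cauchy--Schwarz is tight only if $Z^2-v\propto Z$, so $Z$ is two-point.
\item The optimization is tight only if $v^2=M/3$.
\end{itemize}
For a standardized two-point law with skewness $\gamma$, the kurtosis is $\gamma^2+1$, and $\gamma/(\gamma^2+1)^{3/4}$ is maximized at $\gamma^2=2$.

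So take $W$ two-point with $\mathbb{E}[W]=0$, skewness $\sqrt{2}$, and $\mathbb{E}[W^4]=1$. Then $\mathbb{E}[W^3]=\sqrt{2}/\sqrt[4]{27}$. Now couple the potential outcomes through the same $W$:
\begin{itemize}
\item For the upper endpoint, set $Y_1=\mathbb{E}[Y_1]+aW$ and $Y_0=\mathbb{E}[Y_0]-bW$. The marginal fourth central moments are $a^4$ and $b^4$ as required, and $Z=(a+b)W$, so $\mathbb{E}[Z^3]=\frac{\sqrt{2}}{\sqrt[4]{27}}(a+b)^3$.
\item For the lower endpoint, replace $W$ by $-W$.
\item The degenerate case $a=b=0$ is trivial.
\end{itemize}

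The main obstacle is Step 2: identifying the sharp constant and verifying that the Cauchy--Schwarz and Minkowski equality conditions can hold simultaneously. Minkowski is tight when $U_1$ and $-U_0$ are nonnegatively proportional, which is compatible with the two-point construction. That compatibility is why the intersection of the two inequalities remains sharp here.
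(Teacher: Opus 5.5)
Your proposal is correct and is essentially the paper's proof. It uses the same Cauchy--Schwarz bound on $\mathbb{E}[Z(Z^2-v)]$ optimized at $v^2=M/3$ (the paper's Lemma~\ref{lem:sharp_third_from_fourth_newextra}), combined with Minkowski in $L^4$, and the same extremal coupling $\widetilde{Y}_1=aW$, $\widetilde{Y}_0=-bW$ with a two-point $W$, sign-flipped for the lower endpoint. The one step you leave implicit is that any joint law of the potential outcomes can be realized by an SCM, which the paper supplies via Lemma~\ref{lem:coupling_to_scm}.
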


Moreover, if $(\overline{\sigma}_1^{(k)},\overline{\sigma}_0^{(k)})$ ($k=2,3,4$) are jointly available,
by taking the intersection of the bounds in Theorems~\ref{thm:mu3_from_2}-\ref{thm:mu3_from_4},
we have the following results:

\begin{restatable}{theorem}{ThirdfromComb}
\label{cor4}
(1). Assume $(\overline{\sigma}_1^{(2)}$,\,$\overline{\sigma}_0^{(2)})$ and 
$(\overline{\sigma}_1^{(3)}$,\,$\overline{\sigma}_0^{(3)})$ exist and are available.  

(i).
If $\overline{\sigma}_0^{(2)}>0$ and $\overline{\sigma}_1^{(2)}>0$, we have 
$-\infty \leq \overline{\mu}^{(3)} \leq \infty$.
If only $(\overline{\sigma}_1^{(2)}$,\,$\overline{\sigma}_0^{(2)})$ and  $(\overline{\sigma}_1^{(3)}$,\,$\overline{\sigma}_0^{(3)})$ are available, then this bound is sharp.

(ii).
If $\overline{\sigma}_1^{(2)}=\overline{\sigma}_0^{(2)}=0$, we have $\overline{\mu}^{(3)}=0$. Otherwise, if $\overline{\sigma}_1^{(2)}=0$, we have $\overline{\mu}^{(3)}=-\overline{\sigma}_0^{(3)}$; if $\overline{\sigma}_0^{(2)}=0$, we have $\overline{\mu}^{(3)}=\overline{\sigma}_1^{(3)}$.

(2) Assume that $(\overline{\sigma}_1^{(4)}, \overline{\sigma}_0^{(4)})$ exist and are available, and that at least one of $(\overline{\sigma}_1^{(2)}, \overline{\sigma}_0^{(2)})$ and $(\overline{\sigma}_1^{(3)}, \overline{\sigma}_0^{(3)})$ exist and are available.
Then we have
\begin{align}
\label{eq:bound_of_mu3_with_4+23}
&-\frac{\sqrt{2}}{\sqrt[4]{27}}\Bigl(\sqrt[4]{\overline{\sigma}_1^{(4)}}+\sqrt[4]{\overline{\sigma}_0^{(4)}}\Bigr)^3  \leq \overline{\mu}^{(3)}  \nonumber\\
&\hspace{2cm}\leq \frac{\sqrt{2}}{\sqrt[4]{27}}\Big(\sqrt[4]{\overline{\sigma}_1^{(4)}}+\sqrt[4]{\overline{\sigma}_0^{(4)}}\Big)^3.  
\end{align}
\end{restatable}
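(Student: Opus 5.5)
We need to prove Theorem~\ref{cor4}. The plan is to treat the two parts separately. Part (2) follows from Theorem~\ref{thm:mu3_from_4}, and part (1)(ii) from a direct degeneracy argument. The real work is the sharpness claim in (1)(i), which needs an explicit family of joint distributions.

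\textbf{Part (2) and part (1)(ii).} For (2), the bound in Theorem~\ref{thm:mu3_from_4} holds for every coupling with the given fourth marginal moments. Additional information can only shrink the feasible set, so the same bound remains valid. Intersecting with the trivial intervals from Theorems~\ref{thm:mu3_from_2} and~\ref{thm:mu3_from_3} leaves it unchanged. (No sharpness is claimed there.)

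For (1)(ii), $\overline{\sigma}_1^{(2)}=0$ means $Y_1=\mathbb{E}[Y_1]$ almost surely. Hence
\[
Y_1-Y_0-\mathbb{E}[Y_1-Y_0]=-(Y_0-\mathbb{E}[Y_0]),
\]
so $\overline{\mu}^{(3)}=-\overline{\sigma}_0^{(3)}$. The case $\overline{\sigma}_0^{(2)}=0$ is symmetric and gives $\overline{\mu}^{(3)}=\overline{\sigma}_1^{(3)}$. If both variances vanish, both outcomes are degenerate and $\overline{\mu}^{(3)}=0$, as already stated in Theorem~\ref{thm:mu3_from_2}(1).

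\textbf{Part (1)(i): reduction.} The validity of $[-\infty,\infty]$ is trivial. For sharpness, I will show that for any targets $\sigma_1^2,\sigma_0^2>0$ and $s_1,s_0\in\mathbb{R}$ there exist couplings with
\[
\overline{\sigma}_j^{(2)}=\sigma_j^2,\qquad \overline{\sigma}_j^{(3)}=s_j \quad (j=0,1),
\]
and with $\overline{\mu}^{(3)}$ taking arbitrarily large positive and arbitrarily large negative values. Then no closed interval strictly inside $[-\infty,\infty]$ is valid.

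Write $A=Y_1-\mathbb{E}[Y_1]$ and $B=Y_0-\mathbb{E}[Y_0]$, so that
\[
\overline{\mu}^{(3)}=s_1-s_0-3\mathbb{E}[A^2B]+3\mathbb{E}[AB^2].
\]
It therefore suffices to make the cross moment $\mathbb{E}[A^2B]$ arbitrarily large in absolute value, with either sign, while keeping $\mathbb{E}[AB^2]$ bounded and all marginal moments fixed.

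\textbf{Part (1)(i): construction.} The main obstacle is building these couplings with the marginals pinned exactly. I would first try a mixture.
\begin{itemize}
\item With probability $1-\varepsilon$, draw $(A,B)$ from a base pair of marginals with variances and third moments slightly adjusted.
\item With probability $\varepsilon$, draw from a ``spike'' component in which $A=\pm a$ with equal probability and $B=b$ deterministically, with $a^2\varepsilon=c$ fixed and $b$ of moderate size.
\end{itemize}
The spike contributes $\varepsilon a^2 b=cb$ to $\mathbb{E}[A^2B]$, which can be made large by taking $b$ large. It contributes $0$ to $\mathbb{E}[AB^2]$ by the symmetry of $A$, and $0$ to the third moment of $A$.

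This shows that the spike must be chosen more carefully. It adds $\varepsilon b^2$ to $\mathbb{E}[B^2]$, so $b$ must stay at order $1/\sqrt{\varepsilon}$ at most. Taking $b=\beta/\sqrt{\varepsilon}$ and $a=\alpha/\sqrt{\varepsilon}$, the contribution $\varepsilon a^2 b=\alpha^2\beta/\sqrt{\varepsilon}$ diverges as $\varepsilon\to0$. Meanwhile the spike adds only $O(1)$ to the variances, and $\alpha^2$, $\beta^2$ can be taken small enough to stay below the targets.

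The spike also shifts the means and the third moments: the shift in $\mathbb{E}[B]$ is of order $\beta\sqrt{\varepsilon}$, and the shift in $\mathbb{E}[B^3]$ is of order $\beta^3/\sqrt{\varepsilon}$. These are compensated as follows.
\begin{itemize}
\item Recenter using the remaining $1-\varepsilon$ mass.
\item Absorb the $\mathbb{E}[B^3]$ term by giving the base component of $B$ a compensating third moment. This is possible because a distribution with fixed mean and variance can have any real third moment (e.g.\ a two- or three-point distribution).
\item Choose the base coupling of $(A,B)$ independent, so that its cross moments stay bounded.
\end{itemize}
It then remains to check that the cross term $\mathbb{E}[AB^2]$ from the spike is zero and that the independent base contributes $O(1)$. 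The sign of $\overline{\mu}^{(3)}$ is set by the sign of $\beta$. The second-moment bookkeeping for the rescaled spike is the step I expect to need the most care.
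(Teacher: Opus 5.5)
Your proposal is correct and follows essentially the paper's route. Parts (1)(ii) and (2) are argued the same way. For the sharpness in (1)(i), the paper likewise mixes an independent base pair, with moments adjusted via Lemma~\ref{lem:two_point_23}, with a low-probability, large-magnitude component carrying a fixed second-moment budget, and then embeds the coupling in an SCM via Lemma~\ref{lem:coupling_to_scm}. The only difference is the spike: for the upper direction the paper uses the self-centred anti-diagonal atoms $(a,-a)$ and $(-a/2,a/2)$ with probabilities $p$ and $2p$, $p=2\eta/(3a^2)$, which needs no recentering and gives $\overline{\mu}^{(3)}=\overline{\sigma}_1^{(3)}-\overline{\sigma}_0^{(3)}+3\eta a$ directly. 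Your $(\pm a,b)$ spike instead acts through $\mathbb{E}[A^2B]$ and forces a small mean shift in the base component of $B$, which your bookkeeping correctly absorbs.
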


The bound in (2) of Theorem \ref{cor4} is not sharp: we provide a counterexample in Appendix \ref{sec_app:intersection_non-sharp}. 
On the other hand, the bound in (1) of Theorem \ref{cor4} is sharp, which means that jointly knowing $(\overline{\sigma}_1^{(2)}$,\,$\overline{\sigma}_0^{(2)})$ and 
$(\overline{\sigma}_1^{(3)}$,\,$\overline{\sigma}_0^{(3)})$  does not provide a nontrivial bound for  $\overline{\mu}^{(3)}$ (except in extreme cases). However, once $(\overline{\sigma}_1^{(4)},\overline{\sigma}_0^{(4)})$ is given, both $(\overline{\sigma}_1^{(2)}$,\,$\overline{\sigma}_0^{(2)})$ and 
$(\overline{\sigma}_1^{(3)}$,\,$\overline{\sigma}_0^{(3)})$ provide additional identifying information about   $\overline{\mu}^{(3)}$.

{\bf Bounding of skewness.}
To obtain bounds on the skewness of ICE, i.e., $\overline{\mu}^{(3)} / (\overline{\mu}^{(2)})^{3/2}$, we first compute bounds on $\overline{\mu}^{(k)}$ $(k=2,3)$.
Then, we compute the upper bound for the skewness as the ratio of the upper bound of $\overline{\mu}^{(3)}$ to the lower bound of $\overline{\mu}^{(2)}$ raised to the power $3/2$, and the lower bound is the negative of this value. 
Note that the obtained bounds may not be sharp. 


\subsection{Bounds of $\overline{\mu}^{(4)}$ 
}
\label{subsec:Bounding_of_mu4}


We consider the bounding of $\overline{\mu}^{(4)}$ using $(\overline{\sigma}_1^{(2)}$,\,$\overline{\sigma}_0^{(2)})$,
$(\overline{\sigma}_1^{(3)}$,\,$\overline{\sigma}_0^{(3)})$, or
$(\overline{\sigma}_1^{(4)}$,\,$\overline{\sigma}_0^{(4)})$.

($\overline{\sigma}_1^{(2)}$, $\overline{\sigma}_0^{(2)}$) are  helpful in getting a lower bound for $\overline{\mu}^{(4)}$: 
\begin{restatable}{theorem}{FourthfromS}
\label{thm:mu4_from_2}
Assume  $(\overline{\sigma}_1^{(2)},\overline{\sigma}_0^{(2)})$ exist and are available.

(1). If $\overline{\sigma}_1^{(2)}=\overline{\sigma}_0^{(2)}=0$, we have $\overline{\mu}^{(4)}=0$.

(2). If $\overline{\sigma}_1^{(2)}>0$ or $\overline{\sigma}_0^{(2)}>0$, we have \begin{equation}
\Big(\sqrt{\overline{\sigma}_1^{(2)}}-\sqrt{\overline{\sigma}_0^{(2)}}\Big)^4
\leq \overline{\mu}^{(4)} \leq \infty.
\end{equation}
If only $(\overline{\sigma}_1^{(2)},\overline{\sigma}_0^{(2)})$ are available, then this bound is sharp.

\end{restatable}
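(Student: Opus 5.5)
Let $D\defeq Y_1-Y_0-\mathbb{E}[Y_1-Y_0]$, so $\overline{\mu}^{(4)}=\mathbb{E}[D^4]$ and $\overline{\mu}^{(2)}=\mathbb{E}[D^2]$. The plan is to combine the lower bound on $\overline{\mu}^{(2)}$ from Theorem~\ref{thm:bound_of_var} with Jensen's (Lyapunov's) inequality. For part (1), if $\overline{\sigma}_1^{(2)}=\overline{\sigma}_0^{(2)}=0$ then $Y_1$ and $Y_0$ are almost surely equal to their means. Hence $D=0$ almost surely and $\overline{\mu}^{(4)}=0$.

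For part (2), the lower bound follows in two steps. First, $\mathbb{E}[D^4]\geq (\mathbb{E}[D^2])^2$ by Jensen applied to the convex map $t\mapsto t^2$ on $D^2$. Second, Theorem~\ref{thm:bound_of_var} gives $\mathbb{E}[D^2]\geq \big(\sqrt{\overline{\sigma}_1^{(2)}}-\sqrt{\overline{\sigma}_0^{(2)}}\big)^2\geq 0$. Squaring this preserves the inequality, so $\overline{\mu}^{(4)}\geq\big(\sqrt{\overline{\sigma}_1^{(2)}}-\sqrt{\overline{\sigma}_0^{(2)}}\big)^4$. The upper bound $\infty$ is trivial.

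Sharpness has to be checked at both ends. For the lower end, equality in Jensen requires $D^2$ to be almost surely constant, and equality in Theorem~\ref{thm:bound_of_var} requires $Y_1-\mathbb{E}[Y_1]$ and $Y_0-\mathbb{E}[Y_0]$ to be comonotone with perfect correlation. I will take $Z$ to be Rademacher ($\pm1$ with probability $1/2$) and set
\begin{equation*}
Y_1=\mathbb{E}[Y_1]+\sqrt{\overline{\sigma}_1^{(2)}}\,Z,\qquad Y_0=\mathbb{E}[Y_0]+\sqrt{\overline{\sigma}_0^{(2)}}\,Z.
\end{equation*}
Then $D=\big(\sqrt{\overline{\sigma}_1^{(2)}}-\sqrt{\overline{\sigma}_0^{(2)}}\big)Z$ and $D^4$ is the constant $\big(\sqrt{\overline{\sigma}_1^{(2)}}-\sqrt{\overline{\sigma}_0^{(2)}}\big)^4$, so the lower bound is attained exactly.

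The main obstacle is the upper end: I must show $\overline{\mu}^{(4)}$ can be made arbitrarily large while the variances stay fixed. Since the hypothesis is $\overline{\sigma}_1^{(2)}>0$ or $\overline{\sigma}_0^{(2)}>0$, assume without loss of generality that $\overline{\sigma}_1^{(2)}>0$, and let $Y_0$ be any law with the required mean and variance that is independent of $Y_1$. Choose $Y_1-\mathbb{E}[Y_1]=\sqrt{\overline{\sigma}_1^{(2)}}\,W_n$, where $W_n$ is mean zero with unit variance and $\mathbb{E}[W_n^4]\to\infty$. A concrete choice is $W_n=\pm n$ with probability $1/(2n^2)$ each and $0$ otherwise, giving $\mathbb{E}[W_n^4]=n^2$. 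By independence,
\begin{equation*}
\mathbb{E}[D^4]=\mathbb{E}[A^4]+6\,\mathbb{E}[A^2]\,\mathbb{E}[B^2]+\mathbb{E}[B^4]\geq \overline{\sigma}_1^{(2)2}\,\mathbb{E}[W_n^4]\to\infty,
\end{equation*}
where $A=Y_1-\mathbb{E}[Y_1]$, $B=Y_0-\mathbb{E}[Y_0]$, and the cross terms with odd powers vanish. The symmetric case $\overline{\sigma}_0^{(2)}>0$ is handled the same way with the roles of $Y_1$ and $Y_0$ swapped. Thus no finite upper bound is valid, and the interval is sharp.
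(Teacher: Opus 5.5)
Your proof is correct and follows the paper's argument essentially step for step. Part (1) goes through by degeneracy. The lower bound comes from $\overline{\mu}^{(4)}\geq(\overline{\mu}^{(2)})^2$ combined with the $L^2$ reverse triangle inequality, which is exactly the lower bound of Theorem~\ref{thm:bound_of_var}, and it is attained with a common Rademacher sign.

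The only difference is the upper-bound witness. The paper uses the same sparse spike $\widetilde{Y}_1=IAS$ but couples it with $\widetilde{Y}_0=\sqrt{\overline{\sigma}_0^{(2)}}S$ through the shared sign $S$. You instead take the spike independent of $Y_0$ and use the fourth-moment expansion. For that expansion, fix a concrete law for $Y_0$ (e.g.\ a scaled Rademacher) so the odd cross terms are well defined.
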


On the other hand, ($\overline{\sigma}_1^{(3)}$, $\overline{\sigma}_0^{(3)})$ are not helpful in obtaining a nontrivial bound for $\overline{\mu}^{(4)}$:
\begin{restatable}{theorem}{FourthfromT}
\label{thm:mu4_from_3}
Assume that 
$(\overline{\sigma}^{(3)}_1,\overline{\sigma}^{(3)}_0)$ exist and are available.
Then we have
$0\leq \overline{\mu}^{(4)} \leq \infty$.
If only $(\overline{\sigma}^{(3)}_1,\overline{\sigma}^{(3)}_0)$ are available, then 
this bound is sharp.
\end{restatable}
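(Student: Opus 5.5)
The bound $0\le\overline{\mu}^{(4)}\le\infty$ is valid for any joint law, since $\overline{\mu}^{(4)}=\mathbb{E}[(Y_1-Y_0-\tau)^4]\ge 0$. So everything rests on sharpness. I will show that, for any prescribed pair $(s_1,s_0)\in\mathbb{R}^2$, the set of values of $\overline{\mu}^{(4)}$ over couplings with $\overline{\sigma}_1^{(3)}=s_1$ and $\overline{\sigma}_0^{(3)}=s_0$ has infimum $0$ and supremum $\infty$. Then no closed interval strictly inside $[0,\infty]$ can contain every attainable value. Translation of means is irrelevant, so I work with centered variables throughout.

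\textbf{Lower end.} I take $Y_0$ centered with third central moment $s_0$ and set $Y_1=Y_0+\varepsilon D$. Here $D$ is centered with bounded moments and independent of $Y_0$, and $\varepsilon\to 0$. Expanding $(Y_0+\varepsilon D)^3$ and using independence gives
\[
\overline{\sigma}_1^{(3)}=s_0+3\varepsilon^2\mathbb{E}[Y_0]\mathbb{E}[D^2]+\varepsilon^3\mathbb{E}[D^3]=s_0+\varepsilon^3\mathbb{E}[D^3],
\]
because $\mathbb{E}[Y_0]=0$. Choosing $\mathbb{E}[D^3]=(s_1-s_0)/\varepsilon^3$ requires $D$ to grow, which would make $\overline{\mu}^{(4)}$ blow up. This direct choice therefore does not work.

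I would instead choose $D$ to be mostly $0$ and rarely large. Let $D$ take a value of size $a$ with probability $p$, with small compensating mass elsewhere so that $\mathbb{E}[D]=0$. Then $\mathbb{E}[D^3]\approx pa^3$ and $\mathbb{E}[D^4]\approx pa^4$. Setting $pa^3=s_1-s_0$ gives $pa^4=(s_1-s_0)a$, which is not small. So the centering of $Y_1$ alone cannot be fixed this way either.

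The better route is to let $Y_0$ and $Y_1$ share a heavy-skew component. Let $W$ be a centered variable with a rare large atom, so that its third moment is any target value while its second and fourth moments are as small as we like. This is the step I expect to be the main obstacle. I would try the two-point law $W\in\{-q/(1-q)\cdot b,\ b\}$ with $\mathbb{P}(W=b)=q$. Its third moment is approximately $qb^3$, but its fourth moment, approximately $qb^4$, is not small. I would therefore try to pick distributions to make the ICE $Y_1-Y_0$ itself small, rather than the marginals' moments.

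Concretely, I would take $Y_1=Y_0+\delta$, so that the ICE equals $\delta$ up to centering. I would then adjust the third moment of $Y_1$ via the cross term: with $\mathbb{E}[Y_0]=0$,
\[
\overline{\sigma}_1^{(3)}-s_0=3\mathbb{E}[Y_0^2\delta']+3\mathbb{E}[Y_0\delta'^2]+\mathbb{E}[\delta'^3],\qquad \delta'=\delta-\mathbb{E}[\delta].
\]
Taking $\delta'=\varepsilon g(Y_0)$ for a centered function $g$ with $\mathbb{E}[Y_0^2g(Y_0)]\ne 0$, the leading term is linear in $\varepsilon$. Since $Y_0$ may be chosen freely, I pick it with large second moment, e.g. $Y_0=cZ$ with $Z$ fixed so that $c^3\mathbb{E}[Z^3]=s_0$, adding a symmetric large-variance independent component if needed. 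Then $3\varepsilon\mathbb{E}[Y_0^2g]$ can equal $s_1-s_0$ up to $O(\varepsilon^2)$ corrections, which are absorbed by a one-dimensional continuity (intermediate value) argument in the scale of $g$. Meanwhile $\overline{\mu}^{(4)}=\varepsilon^4\mathbb{E}[g^4]$. Letting the variance of the added component grow allows $\varepsilon\to0$ while keeping $\mathbb{E}[g(Y_0)^4]$ bounded, with $g$ bounded, for instance $g=\mathrm{sign}$-type after recentering. This yields $\overline{\mu}^{(4)}\to 0$.

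\textbf{Upper end.} Start from any feasible coupling and add $R$ to both potential outcomes, where $R$ is independent of them. Adding a common $R$ does not change the ICE, so I instead use a symmetric $V$ independent of $(Y_0,Y_1)$ with large fourth moment and set $Y_1'=Y_1+V$, $Y_0'=Y_0-V$. Then $\overline{\sigma}_x^{(3)}$ is unchanged, because $V$ is symmetric and independent, so its odd moments vanish and the cross terms involve $\mathbb{E}[V]=\mathbb{E}[V^3]=0$. The ICE gains $2V$, so $\overline{\mu}^{(4)}\ge 16\mathbb{E}[V^4]+\dots\to\infty$. Combining both ends gives sharpness of $[0,\infty]$.
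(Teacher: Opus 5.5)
Your upper end is correct, and the skeleton you finally settle on for the lower end---$\widetilde Y_1=\widetilde Y_0+\varepsilon g$ with $\mathbb{E}[\widetilde Y_0^2 g]$ pushed to infinity so that $\varepsilon\to0$---is the idea behind the paper's construction. There is a gap, though. The decisive step, showing that $\mathbb{E}[Y_0^2g(Y_0)]$ can be made arbitrarily large with $g$ bounded and centered, is only asserted, and the instance you propose does not work.

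Since $g$ is centered, $\mathbb{E}[Y_0^2g]=\mathrm{Cov}(Y_0^2,g)$. What must grow is therefore the spread of $Y_0^2$, together with the alignment of $g$ with $Y_0^2$; $g$ has to behave like an even function of $Y_0$. A large $\mathbb{E}[Y_0^2]$ alone does not suffice: for a two-point centered $Y_0$, $Y_0^2$ is affine in $Y_0$ and $\mathrm{Var}(Y_0^2)=s_0^2/\mathbb{E}[Y_0^2]$, which shrinks as the variance grows.

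Concretely, take your recipe $Y_0=cZ+V$ with $Z$ bounded, $V=\pm v$ symmetric and independent of $Z$, $v>\|cZ\|_\infty$, and $g=\mathrm{sign}(Y_0)$. Then $g=\mathrm{sign}(V)$ is already centered, and
\[
\mathbb{E}[Y_0^2g]=2c\,\mathbb{E}[Z]\,v=0,\qquad \mathbb{E}[Y_0g^2]=\mathbb{E}[Y_0]=0,\qquad \mathbb{E}[g^3]=0.
\]
So $\overline{\sigma}_1^{(3)}=s_0$ for every $\varepsilon$, and no target $s_1\neq s_0$ is ever reached, however large $v$ is.

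The repair is to let $g$ track $|Y_0|$, which is what the paper does. Take $\widetilde Y_0=B$ with probability $1/2$ and $\pm M_0$ with probability $1/4$ each, where $\mathbb{E}[B]=0$ and $\mathbb{E}[B^3]=2s_0$. Put $g=-1$ on the $B$ branch and $g=+1$ on the $\pm M_0$ branch. Then $g^2\equiv1$ and $\mathbb{E}[g]=\mathbb{E}[g^3]=\mathbb{E}[\widetilde Y_0g^2]=0$, so
\[
\overline{\sigma}_1^{(3)}=s_0+\tfrac{3\varepsilon}{2}\bigl(M_0^2-\mathbb{E}[B^2]\bigr)
\]
holds exactly. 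You solve for $\varepsilon$ in closed form, with no intermediate-value step, and get $\overline{\mu}^{(4)}=\varepsilon^4\to0$ as $M_0\to\infty$.

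For the upper end, your perturbation $(Y_1+V,\,Y_0-V)$ by an independent symmetric $V$ is valid. Start from any centered feasible coupling, e.g.\ independent marginals with third moments $s_1,s_0$. The extra terms in the third moments are $\pm3\mathbb{E}[Y_x^2]\mathbb{E}[V]$, $3\mathbb{E}[Y_x]\mathbb{E}[V^2]$ and $\pm\mathbb{E}[V^3]$, all zero, and the construction gives $\overline{\mu}^{(4)}\geq16\,\mathbb{E}[V^4]$. This is an equally short alternative to the paper's mixture with the antipodal atoms $(A,-A)$ and $(-A,A)$.
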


When ($\overline{\sigma}_1^{(4)}$, $\overline{\sigma}_0^{(4)}$) are available, we can bound $\overline{\mu}^{(4)}$ as: 

\begin{restatable}{theorem}{FourthfromF}
\label{thm:mu4_from_4}
Assume that $(\overline{\sigma}^{(4)}_1,\overline{\sigma}^{(4)}_0)$ exist and are available.
Then we have 
\begin{equation}
\label{eq:mu4_from_4}
\Big(\sqrt[4]{\overline{\sigma}^{(4)}_1}-\sqrt[4]{\overline{\sigma}^{(4)}_0}\Big)^4
\leq
\overline{\mu}^{(4)}
\leq
\Big(\sqrt[4]{\overline{\sigma}^{(4)}_1}+\sqrt[4]{\overline{\sigma}^{(4)}_0}\Big)^4.
\end{equation}
If only $(\overline{\sigma}^{(4)}_1,\overline{\sigma}^{(4)}_0)$ are available, then 
this bound is sharp.
\end{restatable}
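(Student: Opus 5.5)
The bound itself is Minkowski's inequality in $L^4$, and sharpness comes from exhibiting couplings.

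\textbf{Validity.} Write $A \defeq Y_1-\mathbb{E}[Y_1]$ and $B \defeq Y_0-\mathbb{E}[Y_0]$. Then $Y_1-Y_0-\mathbb{E}[Y_1-Y_0]=A-B$, so $\overline{\mu}^{(4)}=\|A-B\|_4^4$, while $\overline{\sigma}_1^{(4)}=\|A\|_4^4$ and $\overline{\sigma}_0^{(4)}=\|B\|_4^4$. Minkowski gives $\|A-B\|_4\le \|A\|_4+\|B\|_4$. Applying Minkowski to $A=(A-B)+B$, and symmetrically to $B$, gives the reverse triangle inequality $\|A-B\|_4\ge \bigl|\|A\|_4-\|B\|_4\bigr|$. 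Raising both to the fourth power yields Eq.~\eqref{eq:mu4_from_4}.

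\textbf{Sharpness.} It suffices to show that both endpoints are attained by some joint law of $(Y_1,Y_0)$ whose fourth central moments equal the given values. The means are free, and closed intervals cannot be shrunk if their endpoints are attained.

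Set $a=\sqrt[4]{\overline{\sigma}_1^{(4)}}$ and $b=\sqrt[4]{\overline{\sigma}_0^{(4)}}$, and let $Z$ take the values $\pm1$ with probability $1/2$ each. Then $\mathbb{E}[Z]=0$ and $\mathbb{E}[Z^4]=1$.

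For the upper endpoint, take $A=aZ$ and $B=-bZ$. Then $\mathbb{E}[A^4]=a^4$, $\mathbb{E}[B^4]=b^4$, and $A-B=(a+b)Z$, so $\overline{\mu}^{(4)}=(a+b)^4$.

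For the lower endpoint, take $A=aZ$ and $B=bZ$. Then $A-B=(a-b)Z$, so $\overline{\mu}^{(4)}=(a-b)^4$.

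If $a=0$ or $b=0$, the same construction still works with a degenerate variable. Adding arbitrary constants recovers any prescribed means.

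\textbf{Main obstacle.} The only delicate point is confirming that equality in Minkowski is achievable, which requires $A$ and $B$ to be (anti-)proportional. The shared-$Z$ construction with matched fourth moments resolves this directly. No other moments are constrained, so no further feasibility checks are needed.
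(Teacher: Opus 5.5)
Your proposal is correct and follows the paper's proof essentially step for step: Minkowski's inequality in $L^4$ and its reverse form give the bound, and a shared Rademacher sign $S$ with $\widetilde{Y}_1=\sqrt[4]{\overline{\sigma}_1^{(4)}}S$ and $\widetilde{Y}_0=\mp\sqrt[4]{\overline{\sigma}_0^{(4)}}S$ attains the two endpoints. The one step the paper makes explicit and you leave implicit is realizing each such joint law as the potential outcomes of an SCM, which it does via Lemma~\ref{lem:coupling_to_scm}.
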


If $(\overline{\sigma}_1^{(k)},\overline{\sigma}_0^{(k)})$ ($k=2,3,4$) are jointly available,
by taking the intersection of the bounds in Theorems~\ref{thm:mu4_from_2}-\ref{thm:mu4_from_4},
we have: 


\begin{restatable}{theorem}{FourthfromComb}
\label{cor5}
(1). Assume $(\overline{\sigma}_1^{(2)},\overline{\sigma}_0^{(2)})$
and $(\overline{\sigma}_1^{(3)},\overline{\sigma}_0^{(3)})$ exist and are available.

(i). If $\overline{\sigma}_1^{(2)}=\overline{\sigma}_0^{(2)}=0$, then
$\overline{\mu}^{(4)}=0$.

(ii). If $\overline{\sigma}_1^{(2)}>0$ and $\overline{\sigma}_0^{(2)}>0$, then
\begin{equation}
\left(\sqrt{\overline{\sigma}_1^{(2)}}-\sqrt{\overline{\sigma}_0^{(2)}}\right)^4
\leq \overline{\mu}^{(4)} \leq \infty .
\end{equation}

(iii). If $\overline{\sigma}_1^{(2)}>0$ and $\overline{\sigma}_0^{(2)}=0$, then $\overline{\sigma}_0^{(3)}=0$, and
\begin{equation}
(\overline{\sigma}_1^{(2)})^2
+
\frac{(\overline{\sigma}_1^{(3)})^2}{\overline{\sigma}_1^{(2)}}
\leq
\overline{\mu}^{(4)}
\leq
\infty .
\end{equation}

(iv). If $\overline{\sigma}_1^{(2)}=0$ and $\overline{\sigma}_0^{(2)}>0$, then $\overline{\sigma}_1^{(3)}=0$, and
\begin{equation}
(\overline{\sigma}_0^{(2)})^2
+
\frac{(\overline{\sigma}_0^{(3)})^2}{\overline{\sigma}_0^{(2)}}
\leq
\overline{\mu}^{(4)}
\leq
\infty .
\end{equation}

If only $(\overline{\sigma}_1^{(2)}$,\,$\overline{\sigma}_0^{(2)})$ and 
$(\overline{\sigma}_1^{(3)}$,\,$\overline{\sigma}_0^{(3)})$ are available, then 
these bounds are sharp.

(2). Assume $(\overline{\sigma}_1^{(2)}$,\,$\overline{\sigma}_0^{(2)})$ and $(\overline{\sigma}_1^{(4)}$,\,$\overline{\sigma}_0^{(4)})$ exist and are available.
Then we have
\begin{align}
\label{eq:mu4_from_24_intersection}
&\max\left\{
\Big(\sqrt{\overline{\sigma}_1^{(2)}}-\sqrt{\overline{\sigma}_0^{(2)}}\Big)^4,   \Big(\sqrt[4]{\overline{\sigma}_1^{(4)}}-\sqrt[4]{\overline{\sigma}_0^{(4)}}\Big)^4 
\right\} \nonumber\\
& \hspace{2cm}\leq \overline{\mu}^{(4)} \leq \Big(\sqrt[4]{\overline{\sigma}_1^{(4)}}+\sqrt[4]{\overline{\sigma}_0^{(4)}}\Big)^4.  
\end{align}

(3). Assume $(\overline{\sigma}_1^{(3)}$,\,$\overline{\sigma}_0^{(3)})$ and $(\overline{\sigma}_1^{(4)}$,\,$\overline{\sigma}_0^{(4)})$ exist and are available.
Then we have
\begin{equation}
\label{eq:mu4_from_34_intersection}
    \Big(\sqrt[4]{\overline{\sigma}^{(4)}_1}-\sqrt[4]{\overline{\sigma}^{(4)}_0}\Big)^4 
    \leq
    \overline{\mu}^{(4)}
    \leq
    \Big(\sqrt[4]{\overline{\sigma}^{(4)}_1}+\sqrt[4]{\overline{\sigma}^{(4)}_0}\Big)^4.    
\end{equation}

(4). Assume $(\overline{\sigma}_1^{(2)}$,\,$\overline{\sigma}_0^{(2)})$,
$(\overline{\sigma}_1^{(3)}$,\,$\overline{\sigma}_0^{(3)})$,
and $(\overline{\sigma}_1^{(4)}$,\,$\overline{\sigma}_0^{(4)})$ exist and are available.
Then we have
\begin{align}
\label{eq:mu4_from_234_intersection}
&\max\left\{
\Big(\sqrt{\overline{\sigma}_1^{(2)}}-\sqrt{\overline{\sigma}_0^{(2)}}\Big)^4,   \Big(\sqrt[4]{\overline{\sigma}_1^{(4)}}-\sqrt[4]{\overline{\sigma}_0^{(4)}}\Big)^4 
\right\}  \nonumber\\
& \hspace{2cm}\leq \overline{\mu}^{(4)} \leq \Big(\sqrt[4]{\overline{\sigma}_1^{(4)}}+\sqrt[4]{\overline{\sigma}_0^{(4)}}\Big)^4.  
\end{align}
\end{restatable}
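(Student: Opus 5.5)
Parts (2), (3) and (4) make no sharpness claim, so they follow by intersecting earlier results; no new argument is needed. For (2), I will intersect Theorem~\ref{thm:mu4_from_2}(2) with Theorem~\ref{thm:mu4_from_4}, which gives the stated maximum of the two lower bounds and the upper bound $(\sqrt[4]{\overline{\sigma}_1^{(4)}}+\sqrt[4]{\overline{\sigma}_0^{(4)}})^4$. If $\overline{\sigma}_1^{(2)}=\overline{\sigma}_0^{(2)}=0$, then both $Y_x$ are a.s.\ constant, both fourth central moments vanish, and both lower bounds equal $0$, so the formula still holds. For (3), I will intersect Theorem~\ref{thm:mu4_from_3} (the trivial interval $[0,\infty]$) with Theorem~\ref{thm:mu4_from_4}. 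For (4), I will intersect all three. The real work is in part (1).

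For (1)(i), both $Y_1$ and $Y_0$ are a.s.\ constant, so the ICE equals its mean and $\overline{\mu}^{(4)}=0$. For (1)(iii), $\overline{\sigma}_0^{(2)}=0$ forces $Y_0=\mathbb{E}[Y_0]$ a.s. Hence $\overline{\sigma}_0^{(3)}=0$, and $Y_1-Y_0-\mathbb{E}[Y_1-Y_0]=Z\defeq Y_1-\mathbb{E}[Y_1]$, so $\overline{\mu}^{(4)}=\overline{\sigma}_1^{(4)}$. The lower bound is then the classical Pearson inequality $\mathbb{E}[Z^4]\ge(\mathbb{E}Z^2)^2+(\mathbb{E}Z^3)^2/\mathbb{E}Z^2$. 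I will prove it by minimizing $\mathbb{E}[(Z^2-aZ-b)^2]\ge0$ over $a,b$; equivalently, the $3\times3$ Hankel moment matrix of $(1,Z,Z^2)$ is positive semidefinite, and its determinant yields the bound.

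Sharpness of (1)(iii) comes from two explicit families. A two-point distribution for $Z$ with prescribed mean $0$, variance $\overline{\sigma}_1^{(2)}$ and third moment $\overline{\sigma}_1^{(3)}$ always exists, and it attains equality because $Z^2-aZ-b\equiv0$ on its two atoms. For the infinite upper end, I will take a three-point mixture: a small mass $\varepsilon$ is placed at symmetric far points $\pm t$ with $\varepsilon t^2$ fixed, and the remaining mass is rescaled so that the mean, variance and third moment are preserved. The fourth moment then behaves like $\varepsilon t^4\to\infty$. Part (1)(iv) is the mirror image, with $Z=-(Y_0-\mathbb{E}[Y_0])$.

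For (1)(ii), the lower bound is inherited directly from Theorem~\ref{thm:mu4_from_2}(2), and unboundedness above follows by adding rare independent far-out atoms as above. The main obstacle is sharpness of the lower bound when the third moments are arbitrary. Equality in Theorem~\ref{thm:mu4_from_2} needs $Y_1-\mathbb{E}Y_1=r(Y_0-\mathbb{E}Y_0)$ with $r=\sqrt{\overline{\sigma}_1^{(2)}/\overline{\sigma}_0^{(2)}}$ and $Y_0$ of kurtosis $1$, i.e.\ a symmetric two-point law. That coupling, however, forces $\overline{\sigma}_1^{(3)}=r^3\overline{\sigma}_0^{(3)}=0$. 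Since the bound is only claimed as a closed interval, it suffices to approach the infimum.

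I would try a perturbed comonotone construction in two layers. The bulk, of probability $1-\varepsilon$, is the two-point comonotone coupling above. On top of it I add rare events of probability $\varepsilon$ on which both potential outcomes are shifted, with $\varepsilon t^3$ held fixed. First, an event on which $Y_1$ and $Y_0$ move by the same amount $t$: the ICE is unchanged there, while both third moments move equally. Second, to create the mismatch $\overline{\sigma}_1^{(3)}-r^3\overline{\sigma}_0^{(3)}$, I would try events on which $Y_1\approx rY_0$ is shifted by $rt$ versus $t$. The cost to $\overline{\mu}^{(4)}$ on that event is about $\varepsilon(r-1)^4t^4$, which does not vanish as $\varepsilon\to0$. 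So the plan is to keep $t$ moderate and instead spread the adjustment over many atoms, or to let $r$ vary slightly across atoms, while renormalizing variances and means at each step.

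If this does not drive the excess to zero, the fallback is to compute the exact infimum of $\overline{\mu}^{(4)}$ over finitely supported couplings with the four moment constraints. By a Carathéodory argument, a bounded number of support points suffices, and I would then check that this infimum equals $(\sqrt{\overline{\sigma}_1^{(2)}}-\sqrt{\overline{\sigma}_0^{(2)}})^4$.
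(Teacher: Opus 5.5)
\textbf{Gap: sharpness of the lower bound in (1)(ii).} Your treatment of (1)(i), (1)(iii)--(iv), the unbounded upper ends, and parts (2)--(4) is fine and is essentially the paper's argument. The real content of the theorem is sharpness of the lower bound in (1)(ii), and your proposal does not supply it. You correctly note that the exact extremal coupling is incompatible with arbitrary third moments. Your first layer (common atoms $(U,U)$ on a rare event, with $\varepsilon t^3$ fixed) is exactly how the paper absorbs $\overline{\sigma}_0^{(3)}$ into both marginals. But for the mismatch $D\coloneqq\overline{\sigma}_1^{(3)}-\overline{\sigma}_0^{(3)}$ you only offer ``spread over many atoms / let $r$ vary''. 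Your fallback (compute the infimum over finitely supported couplings via Carath\'eodory and check that it equals the bound) restates the claim rather than giving a method. Moreover, the infimum is not attained, so even an extremal analysis over finite supports would still need an explicit approximating sequence.

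\textbf{The missing idea.} Generate $D$ through the cross term in
$\overline{\sigma}_1^{(3)}-\overline{\sigma}_0^{(3)}=3\mathbb{E}[\widetilde{Y}_0^2\widetilde{\Delta}]+3\mathbb{E}[\widetilde{Y}_0\widetilde{\Delta}^2]+\mathbb{E}[\widetilde{\Delta}^3]$, using an ICE that is of smaller order than the level of $\widetilde{Y}_0$.

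Your scaling fails because you tie the ICE to the level ($\widetilde{\Delta}\approx(r-1)t$). Then $\varepsilon t^3$ fixed forces a cost of order $\varepsilon t^4\to\infty$. Instead, take the following on a rare event of probability $\eta$:
\begin{itemize}
\item $\widetilde{Y}_0=W$, symmetric on $\{\pm a,\pm a/2\}$;
\item $\widetilde{Y}_1=W+H$, with $H=+h$ if $|W|=a$ and $H=-h$ if $|W|=a/2$.
\end{itemize}

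Then $\mathbb{E}[H]=\mathbb{E}[W^3]=\mathbb{E}[WH^2]=\mathbb{E}[H^3]=0$, and
$\eta\mathbb{E}[(W+H)^3]=3\eta\mathbb{E}[W^2H]=\tfrac{9}{8}\eta a^2 h$.
So $h=8D/(9\eta a^2)$ produces exactly the mismatch $D$. The variance perturbation is only $\tfrac58\eta a^2$, and the fourth-moment cost is $\eta h^4$. Taking $a=\eta^{-7/16}$ gives $\eta a^2=\eta^{1/8}\to 0$ and $\eta h^4\asymp\eta^{1/2}\to 0$.

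Mix this with your $(U,U)$ layer and with the comonotone bulk $(c_1S,c_0S)$, where $c_x^2$ is renormalized so the second moments match. This gives
$\overline{\mu}^{(4)}=(1-2\eta)(c_1-c_0)^4+\eta\mathbb{E}[H^4]\to\bigl(\sqrt{\overline{\sigma}_1^{(2)}}-\sqrt{\overline{\sigma}_0^{(2)}}\bigr)^4$ as $\eta\to 0$. This is precisely the paper's construction, and it is the step your proposal is missing.
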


The bounds in (2)-(4) of Theorem~\ref{cor5} are not sharp: we provide counterexamples in Appendix \ref{sec_app:intersection_non-sharp}.
On the other hand, the bounds in (1) of Theorem~\ref{cor5} are sharp. 
{Therefore, when $\overline{\sigma}_1^{(2)}\overline{\sigma}_0^{(2)}>0$, additionally knowing $(\overline{\sigma}_1^{(3)},\overline{\sigma}_0^{(3)})$ does not further tighten the bound for $\overline{\mu}^{(4)}$ obtained from $(\overline{\sigma}_1^{(2)},\overline{\sigma}_0^{(2)})$ alone.}
However, given  $(\overline{\sigma}_1^{(4)},\overline{\sigma}_0^{(4)})$, both $(\overline{\sigma}_1^{(2)}$,\,$\overline{\sigma}_0^{(2)})$ and 
$(\overline{\sigma}_1^{(3)}$,\,$\overline{\sigma}_0^{(3)})$ provide additional identifying information about   $\overline{\mu}^{(4)}$.

{\bf Bounding of kurtosis.}
To obtain bounds on the kurtosis of ICE, $\overline{\mu}^{(4)} / (\overline{\mu}^{(2)})^{2}$, we first compute bounds on $\overline{\mu}^{(k)}$ $(k=2,4)$. 
Then, we compute the upper (resp. lower) bound for the kurtosis as given by the ratio of the upper (resp. lower) bound of $\overline{\mu}^{(4)}$ to the square of the lower (resp. upper) bound of $\overline{\mu}^{(2)}$.
The bounds may not be sharp. 

{\bf Remark.}
If we have access to variance, skewness and kurtosis, then we can compute 3rd and 4th central moments. In the settings where we have only marginal skewness and kurtosis of POs but not marginal variance, we are not able to derive nontrivial bounds for $\overline{\mu}^{(2)}$, $\overline{\mu}^{(3)}$, or $\overline{\mu}^{(4)}$, unless we have range information about  $\overline{\sigma}_1^{(2)}$ and $\overline{\sigma}_0^{(2)}$. A discussion is given in  Appendix \ref{sec:appendix_skew_kurt_only}.

\section{Bounding higher central moments of ICE}

In this section, we study the bounding of higher orders of the central moments $\overline{\mu}^{(m)}$ of the ICE 
for arbitrary $m$.

\subsection{Bounds of even-order 
$\overline{\mu}^{(m)}$
}
\label{subsec:Bounding_of_even-order_moments}

As a generalization of Theorems \ref{thm:bound_of_var} and \ref{thm:mu4_from_4}, 
we have bounds of $\overline{\mu}^{(m)}$ for arbitrary even $m$ if $(\overline{\sigma}_1^{(m)},\overline{\sigma}_0^{(m)})$ are available.

\begin{restatable}{theorem}{Minkowskibound}
\label{thm:minkowski_bound}
Let $m\geq 2$ be an even integer.
Assume $(\overline{\sigma}^{(m)}_1,\overline{\sigma}^{(m)}_0)$ exist and are available.
Then 
we have
\begin{equation}
\label{eq:minkowski_bound}
\begin{aligned}
&\Big(\sqrt[m]{\overline{\sigma}^{(m)}_1}-\sqrt[m]{\overline{\sigma}^{(m)}_0}\Big)^m
\leq
\overline{\mu}^{(m)}\\
&\hspace{2cm}\leq
\Big(\sqrt[m]{\overline{\sigma}^{(m)}_1}+\sqrt[m]{\overline{\sigma}^{(m)}_0}\Big)^m.
\end{aligned}
\end{equation}
If only $(\overline{\sigma}^{(m)}_1,\overline{\sigma}^{(m)}_0)$ are available, then this bound is sharp.

\end{restatable}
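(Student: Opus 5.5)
The plan is to split the argument into validity (the two inequalities hold for every joint law of $(Y_1,Y_0)$ with the given marginal $m$-th central moments) and sharpness (both endpoints are attained, or approached, by explicit joint laws).

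\textbf{Validity.} Write $U\defeq Y_1-\mathbb{E}[Y_1]$ and $V\defeq Y_0-\mathbb{E}[Y_0]$, so that $Y_1-Y_0-\mathbb{E}[Y_1-Y_0]=U-V$. Since $m$ is even, $\overline{\mu}^{(m)}=\|U-V\|_m^m$, $\overline{\sigma}^{(m)}_1=\|U\|_m^m$ and $\overline{\sigma}^{(m)}_0=\|V\|_m^m$, where $\|\cdot\|_m$ is the $L^m$ norm. Minkowski's inequality gives $\|U-V\|_m\leq \|U\|_m+\|V\|_m$. The reverse triangle inequality gives $\|U-V\|_m\geq \big|\|U\|_m-\|V\|_m\big|$. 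Raising both to the $m$-th power yields Eq.~\eqref{eq:minkowski_bound}; because $m$ is even, the absolute value can be dropped. These are routine steps.

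\textbf{Sharpness.} I will construct couplings whose marginals have exactly the prescribed central moments and which attain the endpoints.
\begin{itemize}
\item For the upper bound, take any zero-mean random variable $Z$ with $\mathbb{E}[Z^m]=1$, for example a symmetric two-point variable $Z=\pm1$ with probability $1/2$ each.
\item Set $U=aZ$ and $V=-bZ$, with $a=\sqrt[m]{\overline{\sigma}^{(m)}_1}$ and $b=\sqrt[m]{\overline{\sigma}^{(m)}_0}$, and let $Y_1=\mathbb{E}[Y_1]+U$ and $Y_0=\mathbb{E}[Y_0]+V$.
\item Then $\mathbb{E}[U^m]=a^m$ and $\mathbb{E}[V^m]=b^m$, because $m$ is even and $(-b)^m=b^m$, while $U-V=(a+b)Z$. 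Hence $\overline{\mu}^{(m)}=(a+b)^m$.
\item For the lower bound, take $V=bZ$ instead, which gives $\overline{\mu}^{(m)}=(a-b)^m$.
\item The degenerate cases $a=0$ or $b=0$ are covered by the same construction.
\end{itemize}
Because both endpoints are attained, no strictly smaller closed interval can contain all feasible values. Under the paper's definition this is exactly sharpness.

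\textbf{Main obstacle.} The only real subtlety is interpretive. One must make sure that "only $(\overline{\sigma}^{(m)}_1,\overline{\sigma}^{(m)}_0)$ available" allows the marginals, including their means and their other moments, to be chosen freely. The two-point construction respects this freedom, since it only needs to match the given $m$-th moments. One must also check that sharpness needs only attainment of the endpoints, not that every intermediate value is achieved. If one wanted the full interval, a mixture coupling $V=b(\theta Z+(1-\theta)Z')$ with an independent copy $Z'$ would fill it in by continuity. That extra step is unnecessary for the stated notion of sharpness.
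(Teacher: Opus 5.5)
Your proposal is correct and follows the paper's proof essentially line for line: validity by Minkowski plus the reverse triangle inequality in $L^m$, and attainment of both endpoints by the Rademacher couplings $\widetilde{Y}_1=\sqrt[m]{\overline{\sigma}^{(m)}_1}S$, $\widetilde{Y}_0=\mp\sqrt[m]{\overline{\sigma}^{(m)}_0}S$; the paper additionally invokes Lemma~\ref{lem:coupling_to_scm} to realize these joint laws by an SCM. Your optional mixture aside would need renormalization, since $\mathbb{E}[(\theta Z+(1-\theta)Z')^m]<1$ for $\theta\in(0,1)$, but it plays no role in the argument.
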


As a generalization of Theorems \ref{thm:mu4_from_2} and~\ref{thm:mu4_from_3},
when $(\overline{\sigma}_1^{(k)},\overline{\sigma}_0^{(k)})$ for some $k<m$ are available, we  have: 
\begin{restatable}{theorem}{EvenfromSingle}
\label{thm:even_m_single_k}
Let $m\geq 2$ be an even integer. Assume that $(\overline{\sigma}_1^{(k)},\overline{\sigma}_0^{(k)})$ exist and are available for a single $k< m$.

(1). If $k$ is odd, we have $0\leq \overline{\mu}^{(m)} \leq \infty$.

(2). If $k$ is even, we have the following:

(i). If $\overline{\sigma}_1^{(k)}=\overline{\sigma}_0^{(k)}=0$, we have $\overline{\mu}^{(m)}=0$.

(ii). If $\overline{\sigma}_1^{(k)}>0$ or $\overline{\sigma}_0^{(k)}>0$, we have
\begin{equation}
\Big(\sqrt[k]{\overline{\sigma}_1^{(k)}}-\sqrt[k]{\overline{\sigma}_0^{(k)}}\Big)^m \leq \overline{\mu}^{(m)}\leq \infty .
\end{equation}
If only $(\overline{\sigma}_1^{(k)},\overline{\sigma}_0^{(k)})$ are available, then these bounds are sharp.

\end{restatable}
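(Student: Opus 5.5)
Write $A=Y_1-\mathbb{E}[Y_1]$, $B=Y_0-\mathbb{E}[Y_0]$ and $D=A-B$, so that $\overline{\mu}^{(m)}=\mathbb{E}[D^m]=\|D\|_m^m$ (here $m$ is even). For even $k$ we have $\overline{\sigma}_1^{(k)}=\|A\|_k^k$ and $\overline{\sigma}_0^{(k)}=\|B\|_k^k$. The plan is to prove validity by a norm argument and sharpness by explicit couplings.

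\textbf{Validity.} For even $k<m$, Lyapunov's inequality gives $\|D\|_m\ge\|D\|_k$. The reverse Minkowski inequality gives $\|D\|_k\ge\bigl|\|A\|_k-\|B\|_k\bigr|$. Combining and raising to the even power $m$ yields
$$\overline{\mu}^{(m)}\ge\Bigl(\sqrt[k]{\overline{\sigma}_1^{(k)}}-\sqrt[k]{\overline{\sigma}_0^{(k)}}\Bigr)^m.$$
If both $k$-th moments vanish, then $A=B=0$ almost surely, so $\overline{\mu}^{(m)}=0$; this is part (2)(i). For odd $k$, the bound $0\le\overline{\mu}^{(m)}\le\infty$ is trivially valid because $m$ is even.

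\textbf{Sharpness, upper end.} I need laws whose $k$-th central moment equals a prescribed value $s$ but whose $m$-th moment is arbitrarily large or infinite. For even $k$, take a mean-zero mixture that places mass $1-2\varepsilon$ at $0$ and mass $\varepsilon$ at each of $\pm t$. Its $k$-th moment is $2\varepsilon t^k$ and its $m$-th moment is $2\varepsilon t^m$. Setting $2\varepsilon t^k=s$ and letting $t\to\infty$ sends the $m$-th moment to infinity. I would take $Y_1$ with this law and $Y_0$ independent with any law having the prescribed moment. Since the $A$-part dominates, $\mathbb{E}[D^m]\ge\mathbb{E}[A^m]$ in the symmetric-independent case: by symmetry and independence, odd cross terms vanish and even ones are nonnegative. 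Alternatively, I would use a heavy-tailed symmetric law with finite $k$-th but infinite $m$-th moment to attain $\infty$ exactly. Since the bound is stated as a closed interval in $\overline{\mathbb{R}}$, approaching $\infty$ suffices.

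\textbf{Sharpness, lower end, even $k$.} Take $Y_1=cZ+\mathbb{E}[Y_1]$ and $Y_0=dZ+\mathbb{E}[Y_0]$, comonotone, with $Z$ taking values $\pm1$ with probability $1/2$ each, $c=\sqrt[k]{\overline{\sigma}_1^{(k)}}$ and $d=\sqrt[k]{\overline{\sigma}_0^{(k)}}$. Then $\mathbb{E}[A^k]=c^k$ and $\mathbb{E}[B^k]=d^k$, and $D=(c-d)Z$ gives $\mathbb{E}[D^m]=(c-d)^m$ exactly, so the lower bound is attained.

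\textbf{Sharpness, odd $k$.} This is the main obstacle: I must show $\overline{\mu}^{(m)}$ can be made arbitrarily close to $0$ while matching arbitrary odd-$k$ moments $\overline{\sigma}_1^{(k)}$ and $\overline{\sigma}_0^{(k)}$ (which may differ and have either sign). My first attempt is to write $Y_1=Y_0+\delta W$ and let $\delta\to0$. Then $\overline{\mu}^{(m)}=\delta^m\mathbb{E}[(W-\mathbb{E}W)^m]\to0$, but the $k$-th moment of $Y_1$ is then close to that of $Y_0$, so this only works when the prescribed values are nearly equal. To reach any target $\overline{\sigma}_1^{(k)}$, I would let $W$ take the value $\pm M$ on a rare event of probability $\eta$, chosen so that $\delta^k M^k\eta$ hits the required difference in $k$-th moments. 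Meanwhile $\delta^m M^m\eta=(\delta M)^{m-k}\cdot(\delta^kM^k\eta)$ must vanish. This requires $\delta M\to0$ with $\eta\sim(\delta M)^{-k}\to\infty$, which is impossible because $\eta$ is a probability. So small perturbations cannot help, and the odd moment difference must instead come from a common factor. I would therefore choose $Y_0$ with small variance in the bulk plus a rare far atom, and let $Y_1$ share the same far atom but slightly shifted. Because $k$ is odd, the odd moments are sensitive to the placement of tail mass far from the mean while $D$ remains small. I expect this tuning to be delicate, and would fall back on an explicit three-point construction, solving for the atom locations via a continuity argument.
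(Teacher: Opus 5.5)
Your validity argument (Lyapunov's inequality followed by the reverse Minkowski inequality in $L^k$), part (2)(i), and the comonotone $\pm1$ coupling attaining the even-$k$ lower bound coincide with the paper's proof. The genuine gap is the sharpness of the lower endpoint $0$ when $k$ is odd. You identify it as the main obstacle but never construct the required couplings; the proposal ends with an unspecified ``three-point construction'' whose tuning you expect to be delicate. Your intermediate conclusion that small perturbations $Y_1=Y_0+\delta W$ cannot help is also wrong. Your count tracks only the pure term $\delta^kM^k\eta$, which amounts to holding $Y_0$ fixed. It drops the first-order cross term $k\delta\,\mathbb{E}[\widetilde Y_0^{\,k-1}W]$. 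That term is unbounded once $\widetilde Y_0$ may carry very large values that its odd $k$-th moment does not see. Your fallback of slightly shifting a far atom is itself such a small perturbation, and it would work precisely because of this cross term.

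The missing idea, which is the paper's construction, is a \emph{symmetric pair} of far atoms shifted together. Let $\widetilde Y_0$ equal, with probability $1/2$, a bulk variable $G$ with $\mathbb{E}[G]=0$ and $\mathbb{E}[G^k]=2\overline{\sigma}_0^{(k)}$ (e.g.\ a scaled two-point law). With probability $1/4$ each, let $\widetilde Y_0$ equal $a$ or $-a$. Since $k$ is odd, $\mathbb{E}[\widetilde Y_0]=0$ and $\mathbb{E}[\widetilde Y_0^k]=\overline{\sigma}_0^{(k)}$ for every $a$. Put $\widetilde Y_1=\widetilde Y_0-\delta$ on the bulk and $\widetilde Y_1=\widetilde Y_0+\delta$ on the atoms. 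Then $\widetilde\Delta=\pm\delta$ with probability $1/2$ each, so $\mathbb{E}[\widetilde Y_1]=0$ and $\overline{\mu}^{(m)}=|\delta|^m$, while
\[
F_a(\delta)=\mathbb{E}[\widetilde Y_1^k]=\tfrac12\mathbb{E}[(G-\delta)^k]+\tfrac14\bigl[(a+\delta)^k-(a-\delta)^k\bigr].
\]
For $\delta,a>0$ the bracket is at least $2k\delta a^{k-1}$. Fix $\delta=\varepsilon^{1/m}$ and take $a$ large; this gives $F_a(-\delta)\le\overline{\sigma}_1^{(k)}\le F_a(\delta)$. Continuity in $\delta$ then yields $\delta_0\in[-\varepsilon^{1/m},\varepsilon^{1/m}]$ with $F_a(\delta_0)=\overline{\sigma}_1^{(k)}$ and $\overline{\mu}^{(m)}\le\varepsilon$.

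The atoms need not be rare: their size does the work through $a^{k-1}$. A single rare atom, as you propose, would also move the mean and $k$-th moment of $Y_0$ by $pa$ and $pa^k$, which you would then have to compensate; the symmetric pair makes this automatic.

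Separately, your upper-end construction covers only even $k$. For odd $k$ with both prescribed moments nonzero, neither marginal can be symmetric, so your independence argument does not apply. The fix is short: $\|\widetilde\Delta\|_m\ge\|\widetilde Y_1\|_m-\|\widetilde Y_0\|_m$ holds for every coupling. It therefore suffices to mix symmetric atoms $\pm t$ of mass $\varepsilon$ each into $\widetilde Y_1$, choosing the bulk's $k$-th moment equal to $\overline{\sigma}_1^{(k)}/(1-2\varepsilon)$, and let $t\to\infty$.
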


\subsection{Bounds of odd-order 
$\overline{\mu}^{(m)}$ 
}
\label{subsec:Bounding_of_odd-order_moments}

As a generalization of Theorems \ref{thm:mu3_from_2} and \ref{thm:mu3_from_3}, we have: 
\begin{restatable}{theorem}{OddUnboundedSingle}
\label{thm:odd_unbounded_from_single}
Let $m\geq 3$ be an odd number. 
Assume that $(\overline{\sigma}_1^{(k)},\overline{\sigma}_0^{(k)})$ exist and are available for a single $k\leq m$.

(1). If $k$ is even and $\overline{\sigma}_1^{(k)}=\overline{\sigma}_0^{(k)}=0$, we have $\overline{\mu}^{(m)}=0$.

(2). Otherwise, we have $-\infty \leq \overline{\mu}^{(m)} \leq \infty$.
If only $(\overline{\sigma}_1^{(k)},\overline{\sigma}_0^{(k)})$ are available, then this bound is sharp. 
\end{restatable}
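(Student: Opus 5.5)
Part (1) is immediate. If $k$ is even and $\overline{\sigma}_1^{(k)}=\overline{\sigma}_0^{(k)}=0$, then $(Y_x-\mathbb{E}[Y_x])^k\geq 0$ has zero expectation. Hence $Y_1$ and $Y_0$ are almost surely constant, so $Y_1-Y_0$ is almost surely equal to its mean and $\overline{\mu}^{(m)}=0$.

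For part (2) the bound $[-\infty,\infty]$ holds trivially, so the work is sharpness. I will show that for any admissible pair $(\overline{\sigma}_1^{(k)},\overline{\sigma}_0^{(k)})$ not in case (1), and any $M>0$, there are joint distributions of $(Y_1,Y_0)$ with exactly these $k$-th marginal central moments and $\overline{\mu}^{(m)}>M$, and others with $\overline{\mu}^{(m)}<-M$. Since $\overline{\mu}^{(m)}$ is odd under the reflection $(Y_1,Y_0)\mapsto(-Y_1,-Y_0)$, the two directions are linked. Reflection preserves even-$k$ moments and negates odd-$k$ moments. So for even $k$ one direction gives the other. For odd $k$ I build both directions separately, or apply the construction to the target values $(-\overline{\sigma}_1^{(k)},-\overline{\sigma}_0^{(k)})$ and then reflect.

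The construction has two pieces. First I need a ``skew gadget'': a family of centered random variables $Z_t$ with $\mathbb{E}[Z_t^k]$ fixed (or arbitrarily small) and $\mathbb{E}[Z_t^m]\to\pm\infty$. A natural choice is a two-point centered law, with mass $p$ at $a$ and mass $1-p$ at $-pa/(1-p)$. As $p\to 0$, with $a$ chosen so that $p a^k$ stays fixed, we get $\mathbb{E}[Z^m]\approx p a^m=(pa^k)a^{m-k}\to\infty$ when $m>k$. This handles $k<m$. For $k=m$ with $m$ odd, I instead take $Y_1$ and $Y_0$ both to contain a common large skewed component plus two-point parts, chosen so that the individual $m$-th moments match the targets. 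The difference must keep only an uncontrolled cross term. Concretely, I let $Y_0=W$ and $Y_1=W+V$ with $V$ independent of $W$, or comonotone/countermonotone couplings. Then I use the binomial expansion
\begin{equation*}
\mathbb{E}[(U+V)^m]=\sum_{\ell}\binom{m}{\ell}\mathbb{E}[U^\ell V^{m-\ell}],
\end{equation*}
which under independence and centering reduces to sums of lower-moment products, so the moments can be tuned.

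Second, I need to embed the targets. When $k$ is even and one of $\overline{\sigma}^{(k)}_x$ is positive, say $\overline{\sigma}_1^{(k)}>0$, I take $Y_0$ to be any distribution with the prescribed $k$-th moment. If $\overline{\sigma}_0^{(k)}=0$, then $Y_0$ is constant. I then take $Y_1=Y_0+D$ with $D$ the two-point gadget scaled by $p$, independent of $Y_0$. I solve for the scaling so that $\mathbb{E}[(Y_1-\mathbb{E}Y_1)^k]=\overline{\sigma}_1^{(k)}$; this is a continuity/intermediate-value argument in the scale parameter. If $\overline{\sigma}_1^{(k)}<\overline{\sigma}_0^{(k)}$ this fails, so I swap roles and use $Y_0=Y_1-D$, which flips the sign of $\overline{\mu}^{(m)}$; reflection then recovers the other sign. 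When $k$ is odd, the targets can be any reals, including zero. I choose $Y_0$ symmetric if $\overline{\sigma}_0^{(k)}=0$, or otherwise a two-point law, and add an independent $D$. Then $\mathbb{E}[(Y_1-\mathbb{E}Y_1)^k]$ expands into $\overline{\sigma}_0^{(k)}+\mathbb{E}[D^k]+\text{cross terms}$, and $\mathbb{E}[D^k]$ can be adjusted via a second two-point component of opposite skew. This keeps the $k$-th moment at the target while the $m$-th moment diverges.

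The main obstacle is the boundary case $k=m$ (odd), together with simultaneously matching the exact $k$-th moments of both marginals while driving $\overline{\mu}^{(m)}$ to $\pm\infty$. My first attempt there is a mixture $D=D_++D_-$ of two independent gadgets with opposite skew, where $\mathbb{E}[D_+^m]=A$ and $\mathbb{E}[D_-^m]=-A+c$, with $A\to\infty$. Taking $Y_0$ comonotonically coupled to $D_+$ (for instance $Y_0=-D_++{}$a fixed piece) makes $Y_1=Y_0+D$ cancel the large part in its own marginal moment, while $Y_1-Y_0=D$ retains $A$ through asymmetric placement. I expect to need a short explicit computation to verify that the $m$-th moments of both marginals stay fixed as $A\to\infty$.
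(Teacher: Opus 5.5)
\textbf{Verdict.} There is a genuine gap: the proposal never establishes sharpness when $k=m$ is odd. Part (1), the reflection argument, and your heavy-tailed two-point gadget for $k<m$ are fine in outline, apart from one slip noted at the end. You flag $k=m$ yourself as the main obstacle, and it is not a corner case, since $k=m=3$ is Theorem~\ref{thm:mu3_from_3}.

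\textbf{Why both of your routes fail at $k=m$.} The independent-increment route $Y_0=W$, $Y_1=W+V$ with $V\indep W$ cannot work there. For centered independent $W,V$,
\begin{equation*}
\mathbb{E}[(W+V)^m]=\mathbb{E}[W^m]+\mathbb{E}[V^m]+\sum_{\ell=2}^{m-2}\binom{m}{\ell}\mathbb{E}[W^\ell]\,\mathbb{E}[V^{m-\ell}],
\end{equation*}
which is exactly the IED identity of Theorem~\ref{thm:identification_under_independence}. For $m=3$ the sum is empty, so $\overline{\mu}^{(3)}=\mathbb{E}[V^3]=\overline{\sigma}_1^{(3)}-\overline{\sigma}_0^{(3)}$ is forced. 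Your fallback $D=D_++D_-$ puts the cancellation in the wrong place. Take $m=3$, with $F$ centered and independent of $D_\pm$. Then $\overline{\mu}^{(3)}=\mathbb{E}[(D_++D_-)^3]=A+(-A+c)=c$ stays bounded. Meanwhile $Y_0=-D_++F$ and $Y_1=Y_0+D=F+D_-$ have $\mathbb{E}[Y_0^3]=\mathbb{E}[F^3]-A$ and $\mathbb{E}[Y_1^3]=\mathbb{E}[F^3]-A+c$. So both marginal constraints break as $A\to\infty$, which is the opposite of what you need.

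\textbf{The missing idea.} Only the marginal $k$-th moments are pinned down, so exploit the coupling directly. Let $(U_A,V_A)$ be uniform on $\{(A,-A),(-A,0),(0,A)\}$. Each of $U_A$ and $V_A$ is uniform on $\{-A,0,A\}$, hence centered and symmetric with all odd moments zero. Yet $U_A-V_A$ takes the values $2A,-A,-A$ with probability $1/3$ each, so $\mathbb{E}[(U_A-V_A)^m]=(2^m-2)A^m/3$. Mix $(U_A,V_A)$ with probability $1/2$ with any bounded centered pair $(B_1,B_0)$ satisfying $\mathbb{E}[B_x^k]=2\overline{\sigma}_x^{(k)}$. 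This reproduces the targets exactly and gives
$\overline{\mu}^{(m)}=\tfrac12\mathbb{E}[(B_1-B_0)^m]+\tfrac{2^m-2}{6}A^m\to\infty$.
Using $(V_A,U_A)$ instead gives $-\infty$. This covers every odd $k\le m$ at once, with no intermediate-value tuning.

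\textbf{A smaller gap in the even-$k$ case.} With $Y_1=Y_0+D$, $D\indep Y_0$ centered and nondegenerate, conditional Jensen gives $\mathbb{E}[(Y_0+D)^k]>\mathbb{E}[Y_0^k]$ because $x\mapsto x^k$ is strictly convex. So in either orientation the case $\overline{\sigma}_1^{(k)}=\overline{\sigma}_0^{(k)}>0$ is unreachable. The fix is to replace the sum by a mixture. After relabeling so that $\overline{\sigma}_1^{(k)}>0$, set $(\widetilde{Y}_1,\widetilde{Y}_0)=(0,W_0)$ or $(Z,0)$ with probability $1/2$ each. Here $W_0=\pm(2\overline{\sigma}_0^{(k)})^{1/k}$ with equal probability, and $Z$ is your gadget with $\mathbb{E}[Z^k]=2\overline{\sigma}_1^{(k)}$. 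Then $\overline{\mu}^{(m)}=\mathbb{E}[Z^m]/2$, and the two marginal constraints decouple.
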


Theorem~\ref{thm:odd_unbounded_from_single} means that, when $m$ is odd, knowing $(\overline{\sigma}_1^{(k)},\overline{\sigma}_0^{(k)})$ for a single $k\leq m$ does not provide a nontrivial bound for $\overline{\mu}^{(m)}$ (except in extreme cases).  In particular,  unlike the cases where $m$ is even, we cannot obtain a nontrivial bound for $\overline{\mu}^{(m)}$ even if $(\overline{\sigma}_1^{(m)},\overline{\sigma}_0^{(m)})$ are available. 

\subsection{Bounds of $\overline{\mu}^{(m)}$ given multiple $(\overline{\sigma}_1^{(k)},\overline{\sigma}_0^{(k)})$}
\label{subsec:combining_several}

We obtain bounds for $\overline{\mu}^{(m)}$ when multiple $(\overline{\sigma}_1^{(k)},\overline{\sigma}_0^{(k)})$ are jointly available by 
intersecting bounds in Theorems~\ref{thm:minkowski_bound}--\ref{thm:odd_unbounded_from_single}. 

\begin{restatable}{corollary}{BoundingMum}\label{cor-mum}
Assume  $(\overline{\sigma}_1^{(k)},\overline{\sigma}_0^{(k)})$ exist and are available for all $k\leq m$, and  
$\overline{\sigma}_1^{(k)}+\overline{\sigma}_0^{(k)}>0$ for all even $k$. We have: 

(1). If $m$ is even, we have
\begin{equation}
\begin{aligned}
&\max_{k \leq m,\,k \,\text{ is even}} \Big(\sqrt[k]{\overline{\sigma}_1^{(k)}}-\sqrt[k]{\overline{\sigma}_0^{(k)}}\Big)^m \\
&\qquad \qquad\leq 
\overline{\mu}^{(m)} \leq \Big(\sqrt[m]{\overline{\sigma}_1^{(m)}}+\sqrt[m]{\overline{\sigma}_0^{(m)}}\Big)^m .
\end{aligned}
\end{equation}

(2). If $m$ is odd, we have $-\infty \leq \overline{\mu}^{(m)} \leq \infty$.


\end{restatable}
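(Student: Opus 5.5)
The plan is to obtain Corollary~\ref{cor-mum} by intersecting the single-$k$ bounds of Theorems~\ref{thm:minkowski_bound}, \ref{thm:even_m_single_k}, and \ref{thm:odd_unbounded_from_single}. Any joint distribution of $(Y_1,Y_0)$ consistent with all the given moments is, in particular, consistent with each single pair $(\overline{\sigma}_1^{(k)},\overline{\sigma}_0^{(k)})$. Hence $\overline{\mu}^{(m)}$ lies in every single-$k$ interval, and therefore in their intersection. Sharpness is not claimed, so only validity of the intersection needs checking.

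For even $m$, I would collect the interval from each $k\le m$ as follows.
\begin{itemize}
\item For odd $k<m$, Theorem~\ref{thm:even_m_single_k}(1) gives the trivial interval $[0,\infty]$, which contributes nothing.
\item For even $k<m$, the hypothesis $\overline{\sigma}_1^{(k)}+\overline{\sigma}_0^{(k)}>0$ places us in case (2)(ii) of Theorem~\ref{thm:even_m_single_k}. It yields the lower bound $\big(\sqrt[k]{\overline{\sigma}_1^{(k)}}-\sqrt[k]{\overline{\sigma}_0^{(k)}}\big)^m$ and upper bound $\infty$.
\item For $k=m$, Theorem~\ref{thm:minkowski_bound} gives the lower bound $\big(\sqrt[m]{\overline{\sigma}_1^{(m)}}-\sqrt[m]{\overline{\sigma}_0^{(m)}}\big)^m$ and the finite upper bound $\big(\sqrt[m]{\overline{\sigma}_1^{(m)}}+\sqrt[m]{\overline{\sigma}_0^{(m)}}\big)^m$. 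This lower bound is exactly the $k=m$ term of the stated maximum.
\end{itemize}
Intersecting gives the maximum of the lower bounds over even $k\le m$, together with the single finite upper bound from $k=m$. The case $k=m$ must be included in the maximum, since it is the term coming from Theorem~\ref{thm:minkowski_bound} rather than from Theorem~\ref{thm:even_m_single_k}.

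For odd $m$, each single-$k$ interval from Theorem~\ref{thm:odd_unbounded_from_single} is $[-\infty,\infty]$. Case (1) of that theorem, where $k$ is even with $\overline{\sigma}_1^{(k)}=\overline{\sigma}_0^{(k)}=0$, is excluded by the positivity hypothesis. The intersection is therefore $[-\infty,\infty]$, which is what the statement asserts; no sharpness argument is needed.

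The main point to be careful about is the role of the hypothesis $\overline{\sigma}_1^{(k)}+\overline{\sigma}_0^{(k)}>0$ for even $k$. It rules out the degenerate identification cases, which would otherwise force $\overline{\mu}^{(m)}=0$ and are handled separately in the cited theorems. Everything else is a direct intersection of previously established intervals.
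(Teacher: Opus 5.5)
Your proposal is correct and follows the paper's approach. The paper gives no separate proof of this corollary; it states that the bounds come from intersecting the single-$k$ bounds of Theorems~\ref{thm:minkowski_bound}, \ref{thm:even_m_single_k}, and \ref{thm:odd_unbounded_from_single}, and your bookkeeping is exactly that intersection: trivial intervals for odd $k$, Theorem~\ref{thm:even_m_single_k}(2)(ii) for even $k<m$, and $k=m$ supplying both the last term of the maximum and the only finite upper bound.
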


These bounds are obtained by simply intersecting the individual  bounds and are not sharp.


\paragraph{Remark.}
We do not investigate the general setting in which  $(\overline{\sigma}_1^{(k)},\overline{\sigma}_0^{(k)})$ are available for $k> m$, of which Theorem \ref{thm:mu3_from_4} is  a special instance. The setting is very challenging.





\section{Case Studies}


In this section, we analyze the two empirical case studies described in Section~\ref{sec2}.
To support our findings in finite-sample settings, we also  conducted simulated numerical experiments, the results of which are reported in Appendix~\ref{sec_app:numerical_experiments}.

\subsection{Case Study in [Moseley et al. 2002]}


{\bf Summary of the study.}
We analyze the study by \citet{Moseley2002} introduced in Section \ref{sec2}.  
The mean and standard deviation (SD) of the PFS score ($Y$) for the placebo surgery group ($X=0$) and the arthroscopic d\'ebridement group ($X=1$)
are shown in Table \ref{tab:moseley2002-pfs}. ACE is reported as $4.9$, which means arthroscopic débridement increases PFS.

{\bf Results.}
Figure~\ref{fig1} presents the variance of the ICE $\overline{\mu}^{(2)}$ as a function of $\lambda$, as given in Theorem~\ref{thm:sensitivity_analysis_of_var} case (1). 
$\overline{\mu}^{(2)}$ is monotonically decreasing in $\lambda$.
This implies that $\overline{\mu}^{(2)}$ decreases as the correlation between the PO under placebo surgery and the ICE  increases.
The lower bound of $\overline{\mu}^{(2)}$ is $19.360$, and the upper bound  is $806.560$.
If researchers are willing to assume that $\lambda \geq 0$ (i.e., a positive correlation between $Y_0$ and $Y_1-Y_0$), we obtain a tighter bound from $19.360$ to $124.960$.
The estimated variance of the ICE under the IED condition ($\lambda=0$) is $124.960$.
 We also obtain a lower bound of $5.762\times 10^{-4}$ for the kurtosis of the ICE. 
Note that we cannot compute confidence intervals of the estimates from the reported point estimates in \citep{Moseley2002}.

Our results indicate relatively large heterogeneity in the ICE (the variance is substantial relative to the ACE). 
These findings further highlight the risk of arthroscopic débridement, as there are a substantial number of patients whose ICE differs markedly from the ACE,  including potentially adverse effects.



\begin{figure}[tb]
\centering
\input{mu2_app7_1}
\caption{The variance of the ICE as a function of $\lambda$ for the case study in \citep{Moseley2002}. 
}
\label{fig1}
\end{figure}

\subsection{Case Study in [Špirtovic et al. 2023]}

{\bf Summary of the study.}
We analyze the study by \citep{Spirtovic2023} introduced in Section \ref{sec2}. The  mean, SD, skewness, and kurtosis of MM\% ($Y$) in the control group ($X=0$) and the experimental group ($X=1$) are shown in Table \ref{tab:spirtovic_mm_moments}.
ACE is reported as $-1.344$.

{\bf Results.}
Table \ref{tab:app2} reports the estimated bounds for the variance, skewness, and kurtosis of the ICE, 
together with their estimates under the IED condition.
The bounds for the variance, skewness, and kurtosis are quite wide and therefore less informative.

For this study, the IED condition states that the treatment-induced 
{change} in the outcome is independent of the outcome under the control treatment. This assumption is plausible if the mechanisms generating the treatment effect are distinct from those determining the outcome under control. 
Under the IED condition, the distribution of the ICE exhibits large dispersion relative to the mean, is positively skewed, and is less kurtotic than a Gaussian distribution.
Thus, a substantial number of participants have ICEs that differ markedly from the ACE, a smaller subset of participants may benefit substantially more than the ACE, and extreme outliers are rare.


\begin{table}[tb]
\centering
\caption{Estimated bounds and IED-based estimates of the variance, skewness, and kurtosis of the ICE for the case study in \citep{Spirtovic2023}.}
\label{tab:app2}
\scalebox{1}{
\begin{tabular}{c|cc}
    \hline
   Target  &  & Estimates  \\
   \hline \hline
    \multirow{3}{*}{Variance} & Under IED &  $6.776$  \\
     & UB & $96.727$  \\
     & LB & $0.475$ \\
     \hline
     \multirow{3}{*}{Skewness} &Under IED &  $4.045$  \\
     &  UB & $4563.173$  \\
     & LB & $-4563.173$ \\
     \hline
    \multirow{3}{*}{Kurtosis} &Under IED &  $1.348$  \\
     &  UB & $1.430\times10^5$  \\
     &LB & $3.490\times10^{-5}$ \\
    \hline
\end{tabular}
}
\end{table}


\section{Conclusion}


This paper studies the identification and bounding of the central moments of the ICE using only the marginal central moments of each PO. We establish new identification results  and derive sharp bounds when identification is not attainable.
By relying solely on marginal central moment information, our framework 
broadens the applicability of moment-based analyses of causal effects. In particular, our results enable researchers to conduct heterogeneity analysis 
using historically accumulated studies that report only summary statistics rather than full datasets.

Our framework can be extended to study stratum-specific moments of the ICE if marginal moments are available within strata of observed covariates $C$: our identification and bounding arguments can be applied within each stratum defined by $C=c$. These stratum-specific moments of the ICE describe the heterogeneity within a subpopulation defined by $C=c$, while standard conditional average causal effects (CACE), $\mathbb{E}[Y_1-Y_0|C=c]$, represent the average effects in the subpopulation $C=c$.

The identification results rely on a strong, untestable IED assumption, whose appropriateness cannot be verified directly from observed data, but must instead be assessed based on domain knowledge and the scientific context of the study. In practice, researchers must evaluate whether such an assumption is plausible or acceptable for their application, and thereby determine whether the resulting identification is credible.
The bounds without IED can be wide, especially for higher-order moments. The width of sharp bounds reflects fundamental non-identifiability given the limited information available. A contribution of our work is to explicitly characterize what can and cannot be learned under moment-only information. In this sense, wide bounds are still informative: they quantify the extent to which higher-order features of treatment effect heterogeneity are inherently underdetermined given only moment information without stronger assumptions. Still, information on variance can often be learned and yield meaningful insights. In addition, our bounds from moment-only information can, in some cases, be tighter than those obtained by existing methods that assume access to full marginal distributions. This is illustrated in the numerical experiments in Appendix~\ref{sec_app:numerical_experiments}: in Table~\ref{tab:numericalexp_2}, the lower bounds for  $\overline{\mu}^{(2)}$ and  $\overline{\mu}^{(4)}$ given $(\overline{\sigma}_1^{(2)}, \overline{\sigma}_0^{(2)})$ are substantially sharper than the corresponding bounds by the method in \citep{Kawakami2025_moments}.

A limitation of this study is that we do not establish sharp bounds for $\overline{\mu}^{(m)}$ when multiple  $(\overline{\sigma}_1^{(k)}, \overline{\sigma}_0^{(k)})$ pairs are jointly available. 
Moreover, the bounds derived for the skewness and kurtosis of the ICE are not sharp. 
Deriving sharp bounds in these settings appears to be substantially more challenging and remains a  future research direction.


The bounds derived in this paper may be further tightened under additional structural assumptions such as a causal graph \citep{cozman2000credal, zaffalon2020structural,Zhang2022_Partial,shridharan2023causal,shridharan2023scalable,zaffalon2024efficient,Duarte2024,Arroyo2025,Maiti2025} or monotonicity assumptions on POs \citep{Manski1997,Pearl1999,Tian2000,Mueller2023,Mueller2025,Zhang2025,Hashimoto2026}. 
Assessing how these additional assumptions can sharpen the bounds is an important direction for future research.

\begin{acknowledgements} 
The authors thank the anonymous reviewers for their time and thoughtful comments.
\end{acknowledgements}

\bibliography{uai2026-template}

\newpage
\appendix
\onecolumn


\title{Identification and Bounding of Central Moments of Causal Effects Using Marginal Moments Information (Supplementary Material)}
\maketitle

\section{Counterexamples}

\subsection{
Counterexample mentioned in Section \ref{subsec:Identification_under_IED}
}
\label{sec_app:counterexample_assump2}

{\bf Counterexample for identifiability using $(\overline{\sigma}_1^{(2)},\overline{\sigma}_0^{(2)})$ under Assumption~2 in \citep{Kawakami2025_moments}.} 
We provide two SCMs that satisfy Assumption~2 in \citep{Kawakami2025_moments} and have the same 
$(\overline{\sigma}_1^{(2)},\overline{\sigma}_0^{(2)})$, while $\overline{\mu}^{(2)}$ is different.

Let $U\sim \mathrm{Unif}[-1,1]$ and consider the following two SCMs:
\begin{equation}
\label{eq:scm_assump2_counterexample}
\begin{aligned}
&\text{(A)}: Y\coloneqq U,\\
&\text{(B)}: Y\coloneqq (1-X)U+cXU^3,\qquad c\coloneqq \sqrt{7/3}.
\end{aligned}
\end{equation}

Assumption~2 in \citep{Kawakami2025_moments} requires that, in an SCM with $Y= f_Y(X,U)$, the function $u\mapsto f_Y(x,u)$ is monotonic in $u$
for every $x\in\{0,1\}$ almost surely with respect to $\mathbb{P}_U$.
We remark that both SCM (A) and (B) satisfy this assumption.

In both SCMs we have $\overline{\sigma}_0^{(2)}=\Var(U)=1/3$ and
$\overline{\sigma}_1^{(2)}=\Var(cU^3)=c^2\mathbb{E}[U^6]=(7/3)\cdot(1/7)=1/3$.
However, $\overline{\mu}^{(2)}=\Var(Y_1-Y_0)$ equals $0$ in SCM (A), while in SCM (B),
\begin{equation}
\begin{aligned}
\overline{\mu}^{(2)}
&=\mathbb{E}[(cU^3-U)^2]
=c^2\mathbb{E}[U^6]-2c\mathbb{E}[U^4]+\mathbb{E}[U^2]\\
&=\frac{2}{3}-\frac{2}{5}\sqrt{\frac{7}{3}}.
\end{aligned}
\end{equation}
Therefore, under Assumption~2, identifying $\overline{\mu}^{(2)}$ generally requires more information than
$(\overline{\sigma}_1^{(2)},\overline{\sigma}_0^{(2)})$, e.g., the marginal distributions of each PO.

\subsection{Counterexamples Supporting the Sharpness Discussion 
in Section \ref{sec6}}
\label{sec_app:intersection_non-sharp}

Throughout this section, we put $\widetilde{Y}_x\coloneqq Y_x-\mathbb{E}[Y_x]$ ($x\in\{0,1\}$) and $\widetilde{\Delta}\coloneqq (Y_1-Y_0)-\mathbb{E}[Y_1-Y_0]=\widetilde{Y}_1-\widetilde{Y}_0$.
Then $\overline{\sigma}_x^{(m)}=\mathbb{E}[\widetilde{Y}_x^m]$ and $\overline{\mu}^{(m)}=\mathbb{E}[\widetilde{\Delta}^m]$.

Note that, to show that a proposed closed bound $[L,U]$ is \emph{not} sharp, we need to obtain an example of moment information such that $L'\leq \overline{\mu}^{(m)}\leq U'$ under that condition, where $L'>L$ or $U'<U$.
We remark that to show the bounds themselves are unattainable is \emph{not} sufficient to conclude that they are not sharp. Indeed, even if a bound is not attained, there remains a possibility that arbitrarily close value to the bound is attainable, whence the bound is still sharp. 

In this section, for each non-sharp bound in the main part, we provide an example with $L'\leq \overline{\mu}^{(m)}\leq U'$, where $L'>L$ and/or $U'<U$.

We use the following lemma:

\begin{lemma}
\label{lem:two_point_extremal_34}
(1) Let $Z$ be a random variable with $\mathbb{E}[Z]=0$, $\mathbb{E}[Z^3]=\sqrt{2}$, and $\mathbb{E}[Z^4]=3$.
Then we have $\mathbb{E}[Z^2]=1$, and the distribution of $Z$ is given by
\begin{equation}
\label{eq:zpzm_prob_1}
\mathbb{P}\left(Z=\frac{\sqrt{2}+\sqrt{6}}{2}\right) = \frac{1}{2}-\frac{\sqrt{3}}{6},\qquad
\mathbb{P}\left(Z=\frac{\sqrt{2}-\sqrt{6}}{2}\right) = \frac{1}{2}+\frac{\sqrt{3}}{6}.
\end{equation}

(2) Let $Z$ be a random variable with $\mathbb{E}[Z]=0$, $\mathbb{E}[Z^3]=-\sqrt{2}$, and $\mathbb{E}[Z^4]=3$.
Then we have $\mathbb{E}[Z^2]=1$, and the distribution of $Z$ is given by
\begin{equation}
\label{eq:zpzm_prob_2}
\mathbb{P}\left(Z=-\frac{\sqrt{2}+\sqrt{6}}{2}\right) = \frac{1}{2}-\frac{\sqrt{3}}{6},\qquad
\mathbb{P}\left(Z=-\frac{\sqrt{2}-\sqrt{6}}{2}\right) = \frac{1}{2}+\frac{\sqrt{3}}{6}.
\end{equation}
\end{lemma}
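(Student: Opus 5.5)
The plan is the equality case of the moment inequality $\mathbb{E}[Z^4]\ge (\mathbb{E}[Z^3])^2/\mathbb{E}[Z^2]+(\mathbb{E}[Z^2])^2$, which holds whenever $\mathbb{E}[Z]=0$. Write $s=\mathbb{E}[Z^2]$, $t=\mathbb{E}[Z^3]$, $q=\mathbb{E}[Z^4]$. For any real $a,b$ we have
\begin{equation*}
0\le \mathbb{E}[(Z^2-aZ-b)^2]=q-2at-2bs+a^2s+b^2 ,
\end{equation*}
where the cross term $2ab\,\mathbb{E}[Z]$ vanishes. Minimizing over $b$ gives $b=s$, and minimizing over $a$ gives $a=t/s$; this requires $s>0$, which holds because $s=0$ would force $Z=0$ a.s. and then $t=0\neq\pm\sqrt2$. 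Substituting the minimizers yields
\begin{equation*}
q\ge \frac{t^2}{s}+s^2 .
\end{equation*}

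With $t^2=2$ and $q=3$ this reads $3\ge 2/s+s^2$, that is, $s^3-3s+2\le 0$. Since
\begin{equation*}
s^3-3s+2=(s-1)^2(s+2)
\end{equation*}
and $s>0$, the left side is nonnegative, so it must vanish, giving $s=1$. This proves $\mathbb{E}[Z^2]=1$.

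Equality now holds in the minimized inequality, so $\mathbb{E}[(Z^2-aZ-b)^2]=0$ with $b=s=1$ and $a=t/s=\pm\sqrt2$. Hence $Z^2\mp\sqrt2 Z-1=0$ almost surely, and $Z$ is supported on the roots of this quadratic. In case (1), $z^2-\sqrt2 z-1=0$ has roots
\begin{equation*}
z_\pm=\frac{\sqrt2\pm\sqrt6}{2}.
\end{equation*}
Let $p=\mathbb{P}(Z=z_+)$. The condition $\mathbb{E}[Z]=0$ gives $p z_+ + (1-p)z_-=0$, so
\begin{equation*}
p=\frac{-z_-}{z_+-z_-}=\frac{(\sqrt6-\sqrt2)/2}{\sqrt6}=\frac12-\frac{\sqrt3}{6},
\end{equation*}
which matches Eq.~\eqref{eq:zpzm_prob_1}. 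As a consistency check, the second and third moments follow automatically from the quadratic relation: $\mathbb{E}[Z^2]=\sqrt2\,\mathbb{E}[Z]+1=1$ and $\mathbb{E}[Z^3]=\sqrt2\,\mathbb{E}[Z^2]+\mathbb{E}[Z]=\sqrt2$.

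Case (2) follows by applying case (1) to $-Z$, which has mean $0$, third moment $\sqrt2$ and fourth moment $3$. The only delicate point in the argument is justifying $s>0$, and that is handled by the observation above that $t\neq0$.
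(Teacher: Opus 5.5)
Your proof is correct and is essentially the paper's argument. The paper applies Cauchy--Schwarz to $\mathbb{E}[Z(Z^2-v)]$ to get $2\le v(3-v^2)$, which is exactly your minimized inequality $q\ge t^2/s+s^2$; it then uses the same factorization $(v-1)^2(v+2)$ and the same equality case $Z^2=\pm\sqrt{2}Z+1$. Your least-squares formulation, which avoids the ``linear dependence'' step, and your reduction of (2) to (1) via $-Z$ are tidier packaging of the same idea.

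One small addition: the paper also proves the converse, that the stated two-point law actually has these moments, and later relies on this for existence in its counterexamples. Your consistency check omits the fourth moment, but $Z^4=(\sqrt{2}Z+1)^2$ gives $\mathbb{E}[Z^4]=2+1=3$ in one line.
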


\begin{proof}
(1) Write $v\coloneqq \mathbb{E}[Z^2]$.
Then $\mathbb{E}[Z^3]=\mathbb{E}[Z(Z^2-v)]$.
By the Cauchy-Schwarz inequality, we have 
\begin{equation}\label{eq:cs_key}
2=\mathbb{E}[Z^3]^2 \leq \mathbb{E}[Z^2]\mathbb{E}[(Z^2-v)^2] = v(\mathbb{E}[Z^4]-v^2) = v(3-v^2).
\end{equation}
For $v\geq 0$, $v(3-v^2)-2=-(v-1)^2(v+2)\geq 0$ holds only at $v=1$.
Therefore \eqref{eq:cs_key} forces $v=1$ and equality in the Cauchy-Schwarz inequality.
Hence $Z$ and $Z^2-1$ are linearly dependent almost surely, and $\mathbb{E}[Z(Z^2-1)]=\mathbb{E}[Z^3]=\sqrt{2}$ implies $Z^2-1=\sqrt{2}Z$ almost surely.
The roots of $z^2-\sqrt{2}z-1=0$ are $\dfrac{\sqrt{2}\pm\sqrt{6}}{2}$, and $Z$ takes these values almost surely.
The probabilities in \eqref{eq:zpzm_prob_1} follow from $\mathbb{E}[Z]=0$.
Conversely, assume that \eqref{eq:zpzm_prob_1} holds. Then $\mathbb{E}[Z]=0$.
Moreover, since $Z$ takes values in the set of roots of $z^2-\sqrt{2}z-1=0$, we have $Z^2=\sqrt{2}Z+1$ almost surely, and hence
\begin{equation*}
Z^3=ZZ^2=\sqrt{2}Z^2+Z=3Z+\sqrt{2},\qquad
Z^4=(Z^2)^2=(\sqrt{2}Z+1)^2=4\sqrt{2}Z+3
\end{equation*}
almost surely. Then we obtain $\mathbb{E}[Z^3]=\sqrt{2}$ and $\mathbb{E}[Z^4]=3$.

(2) Write $v\coloneqq \mathbb{E}[Z^2]$. Eq.~\eqref{eq:cs_key} holds, and it forces $v=1$ and equality in the Cauchy-Schwarz inequality.
Hence $Z$ and $Z^2-1$ are linearly dependent almost surely, and $\mathbb{E}[Z(Z^2-1)]=\mathbb{E}[Z^3]=-\sqrt{2}$ implies $Z^2-1=-\sqrt{2}Z$ almost surely.
The roots of $z^2+\sqrt{2}z-1=0$ are $\dfrac{-\sqrt{2}\pm\sqrt{6}}{2}$, and $Z$ takes these values almost surely.
The probabilities in \eqref{eq:zpzm_prob_2} follow from $\mathbb{E}[Z]=0$.
Conversely, assume that \eqref{eq:zpzm_prob_2} holds. Then $\mathbb{E}[Z]=0$.
Moreover, since $Z$ takes values in the set of roots of $z^2+\sqrt{2}z-1=0$, we have $Z^2=-\sqrt{2}Z+1$ almost surely, and hence
\begin{equation}
Z^3=ZZ^2=-\sqrt{2}Z^2+Z=3Z-\sqrt{2},\qquad
Z^4=(Z^2)^2=(-\sqrt{2}Z+1)^2=-4\sqrt{2}Z+3
\end{equation}
almost surely. Then we have $\mathbb{E}[Z^3]=-\sqrt{2}$ and $\mathbb{E}[Z^4]=3$. 
\end{proof}


{\bf Counterexample for sharpness of the bound \eqref{eq:mu2_from_34_intersection}.} 
Let
$\overline{\sigma}_1^{(3)}=\sqrt{2}$, $\overline{\sigma}_0^{(3)}=-\sqrt{2}$, $\overline{\sigma}_1^{(4)}=\overline{\sigma}_0^{(4)}=3$.
Let $(\widetilde{Y}_1,\widetilde{Y}_0)$ be a distribution compatible with it.
By Lemma \ref{lem:two_point_extremal_34}, $\widetilde{Y}_1$ and  $\widetilde{Y}_0$ have a distribution in \eqref{eq:zpzm_prob_1} and \eqref{eq:zpzm_prob_2}, respectively.
Let $x\coloneqq \mathbb{P}\left(\widetilde{Y}_1=\frac{\sqrt{2}+\sqrt{6}}{2},\,\widetilde{Y}_0=-\frac{\sqrt{2}-\sqrt{6}}{2}\right)$.
Then $0\leq x\leq \frac{1}{2}-\frac{\sqrt{3}}{6}$, and
\begin{equation}\label{eq:mu2_linear_x}
\begin{aligned}
\mathbb{E}[\widetilde{Y}_1\widetilde{Y}_0]
&=-\Big(\frac{1}{2}-\frac{\sqrt{3}}{6}-x\Big)\Big(\frac{\sqrt{2}+\sqrt{6}}{2}\Big)^2
-2x\frac{\sqrt{2}+\sqrt{6}}{2}\frac{\sqrt{2}-\sqrt{6}}{2}
-\Big(\frac{1}{2}+\frac{\sqrt{3}}{6}-x\Big)\Big(\frac{\sqrt{2}-\sqrt{6}}{2}\Big)^2\\
&=-\Big(\frac{1}{2}-\frac{\sqrt{3}}{6}-x\Big)(2+\sqrt{3})
+2x
-\Big(\frac{1}{2}+\frac{\sqrt{3}}{6}-x\Big)(2-\sqrt{3})
=-1+6x,\\
\overline{\mu}^{(2)}
&=\mathbb{E}[(\widetilde{Y}_1-\widetilde{Y}_0)^2]
=\mathbb{E}[\widetilde{Y}_1^2]-2\mathbb{E}[\widetilde{Y}_1\widetilde{Y}_0]+\mathbb{E}[\widetilde{Y}_0^2]
=2-2\mathbb{E}[\widetilde{Y}_1\widetilde{Y}_0]=2-2(-1+6x)=4-12x.
\end{aligned}
\end{equation}

Therefore
$2\sqrt{3}-2 \leq \overline{\mu}^{(2)} \leq 4$.
On the other hand, \eqref{eq:mu2_from_34_intersection} gives a bound $0\leq \overline{\mu}^{(2)}\leq 4\sqrt{3}$.
Hence \eqref{eq:mu2_from_34_intersection} is not sharp.

{\bf Counterexample for sharpness of the bound \eqref{eq:bound_of_var_with_234}.} 

\underline{\emph{Lower bound.}}
Let $\overline{\sigma}_1^{(2)}=\overline{\sigma}_0^{(2)}=1$,
$\overline{\sigma}_1^{(3)}=\sqrt{2}$, $\overline{\sigma}_0^{(3)}=-\sqrt{2}$
and $\overline{\sigma}_1^{(4)}=\overline{\sigma}_0^{(4)}=3$.
Let $(\widetilde{Y}_1,\widetilde{Y}_0)$ be a distribution compatible with it.
By Lemma \ref{lem:two_point_extremal_34}, such an $(\widetilde{Y}_1,\widetilde{Y}_0)$ exists, and $\widetilde{Y}_1$ and $\widetilde{Y}_0$ have the distribution in \eqref{eq:zpzm_prob_1} and \eqref{eq:zpzm_prob_2}, respectively.

Eq.~\eqref{eq:bound_of_var_with_234} gives a bound $0\leq \overline{\mu}^{(2)}\leq 4$.
However, we have $2\sqrt{3}-2 \leq \overline{\mu}^{(2)} \leq 4$, so the lower bound in \eqref{eq:bound_of_var_with_234} is not sharp.

\underline{\emph{Upper bound.}}
Let $\overline{\sigma}_1^{(2)}=\overline{\sigma}_0^{(2)}=1$,
$\overline{\sigma}_1^{(3)}=\overline{\sigma}_0^{(3)}=\sqrt{2}$
and $\overline{\sigma}_1^{(4)}=\overline{\sigma}_0^{(4)}=3$.
Let $(\widetilde{Y}_1,\widetilde{Y}_0)$ be a distribution compatible with it.
By Lemma \ref{lem:two_point_extremal_34}, such an $(\widetilde{Y}_1,\widetilde{Y}_0)$ exists, and both $\widetilde{Y}_1$ and $\widetilde{Y}_0$ have distribution in \eqref{eq:zpzm_prob_1}.
Eq.~\eqref{eq:bound_of_var_with_234} gives a bound $0\leq \overline{\mu}^{(2)}\leq 4$.

Let $x\coloneqq \mathbb{P}\left(\widetilde{Y}_1=\frac{\sqrt{2}+\sqrt{6}}{2},\,\widetilde{Y}_0=\frac{\sqrt{2}+\sqrt{6}}{2}\right)$.
Then $0\leq x\leq \frac{1}{2}-\frac{\sqrt{3}}{6}$, and
\begin{equation}
\begin{aligned}
\overline{\mu}^{(2)}
&=\mathbb{E}[\widetilde{\Delta}^2]
=\mathbb{E}[(\widetilde{Y}_1-\widetilde{Y}_0)^2]\\
&=\Big(\frac{\sqrt{2}+\sqrt{6}}{2}-\frac{\sqrt{2}-\sqrt{6}}{2}\Big)^2
\left\{\mathbb{P}\Big(\widetilde{Y}_1=\frac{\sqrt{2}+\sqrt{6}}{2},\widetilde{Y}_0=\frac{\sqrt{2}-\sqrt{6}}{2}\Big)
+\mathbb{P}\Big(\widetilde{Y}_1=\frac{\sqrt{2}-\sqrt{6}}{2},\widetilde{Y}_0=\frac{\sqrt{2}+\sqrt{6}}{2}\Big)\right\}\\
&=6\cdot 2\Big(\frac{1}{2}-\frac{\sqrt{3}}{6}-x\Big)
=12\Big(\frac{1}{2}-\frac{\sqrt{3}}{6}-x\Big)\\
&\leq 12\Big(\frac{1}{2}-\frac{\sqrt{3}}{6}\Big)=6-2\sqrt{3}.
\end{aligned}
\end{equation}

Since $6-2\sqrt{3}<4$, the upper bound in \eqref{eq:bound_of_var_with_234} is not sharp.

{\bf Counterexample for sharpness of the bound \eqref{eq:bound_of_mu3_with_4+23}.}

\underline{\emph{The case $(\overline{\sigma}_1^{(2)},\overline{\sigma}_0^{(2)})$ and $(\overline{\sigma}_1^{(4)},\overline{\sigma}_0^{(4)})$ are available.}}
Let $\overline{\sigma}_1^{(2)}=\overline{\sigma}_1^{(4)}=1$ and $\overline{\sigma}_0^{(2)}=\overline{\sigma}_0^{(4)}=0$.
(These are realized by, e.g., $\widetilde{Y}_0=0$ and $\widetilde{Y}_1$ with $\mathbb{P}(\widetilde{Y}_1=1)=\mathbb{P}(\widetilde{Y}_1=-1)=1/2$.)

Then $\overline{\sigma}_0^{(4)}=0$ implies $\widetilde{Y}_0=0$ almost surely.
Moreover, by the Cauchy-Schwarz inequality,
\begin{equation}
\mathbb{E}[\widetilde{Y}_1^2]^2\leq \mathbb{E}[\widetilde{Y}_1^4].
\end{equation}
Since $\mathbb{E}[\widetilde{Y}_1^2]=\mathbb{E}[\widetilde{Y}_1^4]=1$, equality holds and hence $\widetilde{Y}_1^2=1$ almost surely.
Since $\mathbb{E}[\widetilde{Y}_1]=0$, this implies $\mathbb{P}(\widetilde{Y}_1=1)=\mathbb{P}(\widetilde{Y}_1=-1)=1/2$.
Therefore $\widetilde{\Delta}=\widetilde{Y}_1$ almost surely and $\overline{\mu}^{(3)}=\mathbb{E}[\widetilde{\Delta}^3]=\mathbb{E}[\widetilde{Y}_1^3]=0$.
In contrast, the bound \eqref{eq:bound_of_mu3_with_4+23} reduces to
$-\frac{\sqrt{2}}{\sqrt[4]{27}}\leq \overline{\mu}^{(3)}\leq \frac{\sqrt{2}}{\sqrt[4]{27}}$.
In particular, neither bound in \eqref{eq:bound_of_mu3_with_4+23} is sharp.

\underline{\emph{The case $(\overline{\sigma}_1^{(3)},\overline{\sigma}_0^{(3)})$ and $(\overline{\sigma}_1^{(4)},\overline{\sigma}_0^{(4)})$ are available.}}
Let $\overline{\sigma}_1^{(3)}=0$, $\overline{\sigma}_1^{(4)}=1$, $\overline{\sigma}_0^{(3)}=0$, and $\overline{\sigma}_0^{(4)}=0$.
(These are realized by, e.g., $\widetilde{Y}_0=0$ and $\widetilde{Y}_1$ with $\mathbb{P}(\widetilde{Y}_1=1)=\mathbb{P}(\widetilde{Y}_1=-1)=1/2$.)

Then $\overline{\sigma}_0^{(4)}=0$ implies $\widetilde{Y}_0=0$ almost surely and hence $\widetilde{\Delta}=\widetilde{Y}_1$ almost surely.
Therefore $\overline{\mu}^{(3)}=\mathbb{E}[\widetilde{\Delta}^3]=\mathbb{E}[\widetilde{Y}_1^3]=\overline{\sigma}_1^{(3)}=0$.
In contrast, the bound \eqref{eq:bound_of_mu3_with_4+23} reduces to
$-\frac{\sqrt{2}}{\sqrt[4]{27}}\leq \overline{\mu}^{(3)}\leq \frac{\sqrt{2}}{\sqrt[4]{27}}$.
In particular, neither bound in \eqref{eq:bound_of_mu3_with_4+23} is sharp.

\underline{\emph{The case $(\overline{\sigma}_1^{(2)},\overline{\sigma}_0^{(2)})$, $(\overline{\sigma}_1^{(3)},\overline{\sigma}_0^{(3)})$ and $(\overline{\sigma}_1^{(4)},\overline{\sigma}_0^{(4)})$ are available.}}
Let
$\overline{\sigma}_1^{(2)}=\overline{\sigma}_1^{(4)}=1$, $\overline{\sigma}_1^{(3)}=0$, and $\overline{\sigma}_0^{(2)}=\overline{\sigma}_0^{(3)}=\overline{\sigma}_0^{(4)}=0$.
(These are realized by, e.g., $\widetilde{Y}_0=0$ and $\widetilde{Y}_1$ with $\mathbb{P}(\widetilde{Y}_1=1)=\mathbb{P}(\widetilde{Y}_1=-1)=1/2$.)

Then $\overline{\sigma}_0^{(4)}=0$ implies $\widetilde{Y}_0=0$ almost surely and hence $\widetilde{\Delta}=\widetilde{Y}_1$ almost surely.
Therefore $\overline{\mu}^{(3)}=\mathbb{E}[\widetilde{\Delta}^3]=\mathbb{E}[\widetilde{Y}_1^3]=\overline{\sigma}_1^{(3)}=0$.
In contrast, the bound \eqref{eq:bound_of_mu3_with_4+23} reduces to
$-\frac{\sqrt{2}}{\sqrt[4]{27}}\leq \overline{\mu}^{(3)}\leq \frac{\sqrt{2}}{\sqrt[4]{27}}$.
In particular, neither bound in \eqref{eq:bound_of_mu3_with_4+23} is sharp.

{\bf Counterexample for sharpness of the bound \eqref{eq:mu4_from_34_intersection}.} 

\underline{\emph{Lower bound.}}
Let $\overline{\sigma}_1^{(3)}=\sqrt{2}$, $\overline{\sigma}_0^{(3)}=-\sqrt{2}$ and $\overline{\sigma}_1^{(4)}=\overline{\sigma}_0^{(4)}=3$, and
let $(\widetilde{Y}_1,\widetilde{Y}_0)$ be a distribution compatible with it.
By Lemma \ref{lem:two_point_extremal_34}, $\widetilde{Y}_1$ and  $\widetilde{Y}_0$ have the distribution in \eqref{eq:zpzm_prob_1} and \eqref{eq:zpzm_prob_2}, respectively.
Eq.~\eqref{eq:mu4_from_34_intersection} gives $0\leq \overline{\mu}^{(4)}\leq 48$.

Let $x\coloneqq \mathbb{P}\left(\widetilde{Y}_1=\frac{\sqrt{2}+\sqrt{6}}{2},\,\widetilde{Y}_0=-\frac{\sqrt{2}-\sqrt{6}}{2}\right)$.
Then
\begin{equation}\label{eq:mu4_linear_x_lb}
\begin{aligned}
\overline{\mu}^{(4)}
&=\mathbb{E}[(\widetilde{Y}_1-\widetilde{Y}_0)^4]\\
&=(\sqrt{2}+\sqrt{6})^4\Big(\frac{1}{2}-\frac{\sqrt{3}}{6}-x\Big)
+(\sqrt{2})^4\cdot 2x
+(\sqrt{2}-\sqrt{6})^4\Big(\frac{1}{2}+\frac{\sqrt{3}}{6}-x\Big)\\
&=(112+64\sqrt{3})\Big(\frac{1}{2}-\frac{\sqrt{3}}{6}-x\Big)
+8x
+(112-64\sqrt{3})\Big(\frac{1}{2}+\frac{\sqrt{3}}{6}-x\Big)
=48-216x.
\end{aligned}
\end{equation}

Since $0\leq x\leq \frac{1}{2}-\frac{\sqrt{3}}{6}$, we have $ \overline{\mu}^{(4)} \geq 48-216(\frac{1}{2}-\frac{\sqrt{3}}{6})=36\sqrt{3}-60>0$.
Thus the lower bound in \eqref{eq:mu4_from_34_intersection} is not sharp.

\underline{\emph{Upper bound.}}
Let 
$\overline{\sigma}_1^{(3)}=\overline{\sigma}_0^{(3)}=\sqrt{2}$
and $\overline{\sigma}_1^{(4)}=\overline{\sigma}_0^{(4)}=3$, and 
let $(\widetilde{Y}_1,\widetilde{Y}_0)$ be a distribution compatible with it.
By Lemma \ref{lem:two_point_extremal_34}, such an $(\widetilde{Y}_1,\widetilde{Y}_0)$ exists, and both $\widetilde{Y}_1$ and $\widetilde{Y}_0$ have the distribution in \eqref{eq:zpzm_prob_1}.

Eq.~\eqref{eq:mu4_from_34_intersection} gives $0\leq \overline{\mu}^{(4)}\leq 48$.
With $x\coloneqq \mathbb{P}\left(\widetilde{Y}_1=\frac{\sqrt{2}+\sqrt{6}}{2},\,\widetilde{Y}_0=\frac{\sqrt{2}+\sqrt{6}}{2}\right)$, we have $0\leq x\leq \frac{1}{2}-\frac{\sqrt{3}}{6}$.
Then we have
\begin{equation}\label{eq:mu4_linear_x_ub}
\begin{aligned}
\overline{\mu}^{(4)}
&=\mathbb{E}[(\widetilde{Y}_1-\widetilde{Y}_0)^4]
=(\sqrt{6})^4\mathbb{P}(\widetilde{Y}_1\neq \widetilde{Y}_0)
=36\cdot 2\Big(\frac{1}{2}-\frac{\sqrt{3}}{6}-x\Big)
=72\Big(\frac{1}{2}-\frac{\sqrt{3}}{6}-x\Big)\\
&\leq 72\Big(\frac{1}{2}-\frac{\sqrt{3}}{6}\Big)=36-12\sqrt{3}.
\end{aligned}
\end{equation}

Since $36-12\sqrt{3}<48$, the upper bound in \eqref{eq:mu4_from_34_intersection} is not sharp.

{\bf Counterexample for sharpness of the bounds \eqref{eq:mu4_from_24_intersection} and \eqref{eq:mu4_from_234_intersection}.} 
Let
$\overline{\sigma}_1^{(2)}=1$, $\overline{\sigma}_0^{(2)}=\frac{1}{10}$, $\overline{\sigma}_1^{(4)}=\overline{\sigma}_0^{(4)}=1$.
Let $(\widetilde{Y}_1,\widetilde{Y}_0)$ be any distribution which satisfies $\overline{\sigma}_1^{(2)}=1$, $\overline{\sigma}_0^{(2)}=\frac{1}{10}$, $\overline{\sigma}_1^{(4)}=\overline{\sigma}_0^{(4)}=1$.
Such an $(\widetilde{Y}_1,\widetilde{Y}_0)$ exists, e.g.,  $(\widetilde{Y}_1,\widetilde{Y}_0)$ with $\mathbb{P}(\widetilde{Y}_1=1)=\mathbb{P}(\widetilde{Y}_1=-1)=\frac{1}{2}$ and 
\begin{equation}
\mathbb{P}(\widetilde{Y}_0=0)=\frac{99}{100}, \quad \mathbb{P}(\widetilde{Y}_0=\sqrt{10})=\mathbb{P}(\widetilde{Y}_0=-\sqrt{10})=\frac{1}{200}.
\end{equation}

Since $\overline{\sigma}_1^{(4)}=(\overline{\sigma}_1^{(2)})^2$, 
the equality in the Cauchy-Schwarz inequality $(\overline{\sigma}_1^{(2)})^2=\mathbb{E}[\widetilde{Y}_1^2]^2 \leq \mathbb{E}[(\widetilde{Y}_1^2)^2]\mathbb{E}[1^2]=\overline{\sigma}_1^{(4)}$ holds.
Therefore, we have $\widetilde{Y}_1^2=c$ almost surely for some $c\in \mathbb{R}$, and we obtain $c=1$ since $\overline{\sigma}_1^{(2)}=1$.
Thus $\widetilde{Y}_1^3=\widetilde{Y}_1$ and $\mathbb{E}[\widetilde{Y}_1^2\widetilde{Y}_0^2]=\mathbb{E}[\widetilde{Y}_0^2]=\frac{1}{10}$.

Now we have
\begin{equation}\label{eq:mu4_expand_24}
\overline{\mu}^{(4)}=\mathbb{E}[(\widetilde{Y}_1-\widetilde{Y}_0)^4]=\frac{13}{5}-4\mathbb{E}[\widetilde{Y}_1\widetilde{Y}_0(1+\widetilde{Y}_0^2)].
\end{equation}
By the Cauchy-Schwarz inequality,
\begin{equation}
\big(\mathbb{E}[\widetilde{Y}_1\widetilde{Y}_0(1+\widetilde{Y}_0^2)]\big)^2 \leq \mathbb{E}[\widetilde{Y}_1^2\widetilde{Y}_0^2]\mathbb{E}[(1+\widetilde{Y}_0^2)^2] \\
= \frac{1}{10}(1+2\mathbb{E}[\widetilde{Y}_0^2]+\mathbb{E}[\widetilde{Y}_0^4])=\frac{11}{50}
\end{equation}
holds.
Thus \eqref{eq:mu4_expand_24} implies
\begin{equation}\label{eq:mu4_interval_strict_24}
\frac{13}{5}-4\sqrt{\frac{11}{50}} \leq \overline{\mu}^{(4)} \leq \frac{13}{5}+4\sqrt{\frac{11}{50}}.
\end{equation}

On the other hand, \eqref{eq:mu4_from_24_intersection} gives a bound $(1-\frac{1}{\sqrt{10}})^4 \leq \overline{\mu}^{(4)} \leq 16$.
Since $\sqrt{11/50}<1$ and $11/50<1/4$, \eqref{eq:mu4_interval_strict_24} implies $\overline{\mu}^{(4)}<\frac{33}{5}<16$ and $\overline{\mu}^{(4)}>\frac{3}{5}$.
Moreover, since $\sqrt{10}<4$ we have $(1-\frac{1}{\sqrt{10}})^4<(3/4)^4<\frac{3}{5}$.
Thus neither bound in \eqref{eq:mu4_from_24_intersection} is sharp.

Also, \eqref{eq:mu4_from_234_intersection} coincides with \eqref{eq:mu4_from_24_intersection}, so \eqref{eq:mu4_from_234_intersection} is not sharp either.

\section{Bounding $\overline{\mu}^{(m)}$ using only marginal skewness and kurtosis}
\label{sec:appendix_skew_kurt_only}


If we have access to variance, skewness and kurtosis, then we can compute 3rd and 4th central moments. In some cases, researchers may have access only to the skewness and kurtosis, but not to the marginal central moments $(\overline{\sigma}_1^{(k)},\overline{\sigma}_0^{(k)})$ for $k = 2, 3, 4$.
In this section, we then assume that only skewness $\gamma_x$ and kurtosis $\kappa_x$, $\{(\gamma_x,\kappa_x)\}_{x\in\{0,1\}}$,  exist and are available and none of $\{\overline{\sigma}^{(k)}_x\}_{x\in\{0,1\},k\in\{2,3,4\}}$ are available.
We discuss the evaluation of the central moments $\overline{\mu}^{(m)}$ for $m = 2, 3, 4$ under this setting.
The marginal skewness $\gamma_x$ and kurtosis $\kappa_x$ are given by 
\begin{equation}
\gamma_x \coloneqq \frac{\overline{\sigma}^{(3)}_x}{(\overline{\sigma}^{(2)}_x)^{3/2}},\qquad
\kappa_x \coloneqq \frac{\overline{\sigma}^{(4)}_x}{(\overline{\sigma}^{(2)}_x)^2}.
\end{equation}

If $(\overline{\sigma}_1^{(2)},\overline{\sigma}_0^{(2)})$ are fixed, 
$\overline{\sigma}^{(3)}_x$ and $\overline{\sigma}^{(4)}_x$ are uniquely determined as
\begin{equation}
\overline{\sigma}^{(3)}_x=\gamma_x(\overline{\sigma}^{(2)}_x)^{3/2},\qquad
\overline{\sigma}^{(4)}_x=\kappa_x(\overline{\sigma}^{(2)}_x)^2,
\end{equation}
and we have the following bounds from Theorems \ref{cor:bounding_mu2_intersection}(4), \ref{cor4}(2), and \ref{cor5}(4):

\begin{proposition}
Assume that $\gamma_1,\kappa_1,\gamma_0,\kappa_0$ exist and are available, and $(\overline{\sigma}_1^{(2)},\overline{\sigma}_0^{(2)})$ are fixed.
Then we have

(1). $\Big(\sqrt{\overline{\sigma}^{(2)}_1}-\sqrt{\overline{\sigma}^{(2)}_0}\Big)^2  \leq \overline{\mu}^{(2)}
\leq \Big(\sqrt{\overline{\sigma}^{(2)}_1}+\sqrt{\overline{\sigma}^{(2)}_0}\Big)^2$.

(2). $-\frac{\sqrt{2}}{\sqrt[4]{27}}\Bigl(\sqrt[4]{(\overline{\sigma}_1^{(2)})^2\kappa_1}+\sqrt[4]{(\overline{\sigma}_0^{(2)})^2\kappa_0}\Bigr)^3  \leq \overline{\mu}^{(3)}  \leq \frac{\sqrt{2}}{\sqrt[4]{27}}\Bigl(\sqrt[4]{(\overline{\sigma}_1^{(2)})^2\kappa_1}+\sqrt[4]{(\overline{\sigma}_0^{(2)})^2\kappa_0}\Bigr)^3$.

(3). $\max\Big\{
\big(\sqrt{\overline{\sigma}_1^{(2)}}-\sqrt{\overline{\sigma}_0^{(2)}}\big)^4,  \big(\sqrt[4]{(\overline{\sigma}_1^{(2)})^2\kappa_1}-\sqrt[4]{(\overline{\sigma}_0^{(2)})^2\kappa_0}\big)^4 
\Big\} 
\leq \overline{\mu}^{(4)} \leq \big(\sqrt[4]{(\overline{\sigma}_1^{(2)})^2\kappa_1}+\sqrt[4]{(\overline{\sigma}_0^{(2)})^2\kappa_0}\big)^4$.
\end{proposition}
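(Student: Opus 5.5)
This proposition is a direct substitution into earlier results, so the plan is to reduce to them. Once $\overline{\sigma}_1^{(2)}$ and $\overline{\sigma}_0^{(2)}$ are fixed and $(\gamma_x,\kappa_x)$ are known, I define
\[
\overline{\sigma}^{(3)}_x=\gamma_x(\overline{\sigma}^{(2)}_x)^{3/2},\qquad \overline{\sigma}^{(4)}_x=\kappa_x(\overline{\sigma}^{(2)}_x)^2 .
\]
By the definitions of skewness and kurtosis, these must equal the true third and fourth central moments of $Y_x$ in every joint distribution of $(Y_1,Y_0)$ compatible with the given information. Hence the information set $\{\gamma_x,\kappa_x,\overline{\sigma}^{(2)}_x\}_{x\in\{0,1\}}$ determines $(\overline{\sigma}_1^{(k)},\overline{\sigma}_0^{(k)})$ for $k=2,3,4$. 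Every admissible distribution therefore also satisfies the hypotheses of Theorems~\ref{cor:bounding_mu2_intersection}(4), \ref{cor4}(2), and \ref{cor5}(4).

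With that reduction, the three items follow by substitution.
\begin{itemize}
\item For (1), I apply Theorem~\ref{cor:bounding_mu2_intersection}(4) directly. Its bound involves only the second moments, so no substitution is needed.
\item For (2), I apply Theorem~\ref{cor4}(2), using the available pair $(\overline{\sigma}_1^{(2)},\overline{\sigma}_0^{(2)})$ together with the fourth moments. I then replace $\sqrt[4]{\overline{\sigma}_x^{(4)}}$ by $\sqrt[4]{(\overline{\sigma}_x^{(2)})^2\kappa_x}$.
\item For (3), I make the same replacement in the lower and upper bounds of Theorem~\ref{cor5}(4).
\end{itemize}

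The only point that needs care is well-definedness when $\overline{\sigma}^{(2)}_x=0$, where $\gamma_x$ and $\kappa_x$ are undefined. The hypothesis that $\gamma_x$ and $\kappa_x$ exist implicitly forces $\overline{\sigma}^{(2)}_x>0$, so the substitution is legitimate. I would also note that $\kappa_x\ge 1>0$, so the fourth roots are real. I expect no real obstacle beyond stating this reduction cleanly. The bounds inherit their validity, though not sharpness, from the cited theorems.
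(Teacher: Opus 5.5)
Your proposal is correct and follows the paper's proof: it substitutes $\overline{\sigma}^{(3)}_x=\gamma_x(\overline{\sigma}^{(2)}_x)^{3/2}$ and $\overline{\sigma}^{(4)}_x=\kappa_x(\overline{\sigma}^{(2)}_x)^2$ into Theorems~\ref{cor:bounding_mu2_intersection}(4), \ref{cor4}(2), and \ref{cor5}(4). Your remarks that $\gamma_x,\kappa_x$ being defined forces $\overline{\sigma}^{(2)}_x>0$, and that $\kappa_x\ge 1$ keeps the fourth roots real, are valid details that the paper leaves implicit.
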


\begin{proof}
Substituting $\overline{\sigma}^{(3)}_x=\gamma_x(\overline{\sigma}^{(2)}_x)^{3/2}$ and $\overline{\sigma}_x^{(4)}=(\overline{\sigma}_x^{(2)})^2\kappa_x$ into the bounds in Theorems \ref{cor:bounding_mu2_intersection}(4), \ref{cor4}(2), and \ref{cor5}(4), we have the proposition.
\end{proof}

If we know the bounds of $(\overline{\sigma}_1^{(2)},\overline{\sigma}_0^{(2)})$, we have the following bounds:

\begin{corollary}
Assume that $\gamma_1,\kappa_1,\gamma_0,\kappa_0$ exist and are available, and $0\leq l_1^{(2)}\leq \overline
{\sigma}_1^{(2)}\leq u_1^{(2)}$, $0\leq l_0^{(2)}\leq \overline{\sigma}_0^{(2)}\leq u_0^{(2)}$ for some $l_1^{(2)}, u_1^{(2)}, l_0^{(2)}, u_0^{(2)}\in \mathbb{R}$.
Then we have

(1). $\displaystyle{\min_{l_1^{(2)}\leq x_1\leq u_1^{(2)},\,l_0^{(2)}\leq x_0\leq u_0^{(2)}}}\Big(\sqrt{x_1}-\sqrt{x_0}\Big)^2  \leq \overline{\mu}^{(2)}
\leq \Big(\sqrt{u_1^{(2)}}+\sqrt{u_0^{(2)}}\Big)^2$.

(2). $-\frac{\sqrt{2}}{\sqrt[4]{27}}\Bigl(\sqrt[4]{(u_1^{(2)})^2\kappa_1}+\sqrt[4]{(u_0^{(2)})^2\kappa_0}\Bigr)^3  \leq \overline{\mu}^{(3)}  \leq \frac{\sqrt{2}}{\sqrt[4]{27}}\Bigl(\sqrt[4]{(u_1^{(2)})^2\kappa_1}+\sqrt[4]{(u_0^{(2)})^2\kappa_0}\Bigr)^3$.

(3). $\displaystyle{\min_{l_1^{(2)}\leq x_1\leq u_1^{(2)},\,l_0^{(2)}\leq x_0\leq u_0^{(2)}}}\max\Big\{
\big(\sqrt{x_1}-\sqrt{x_0}\big)^4,  \big(\sqrt[4]{(x_1)^2\kappa_1}-\sqrt[4]{(x_0)^2\kappa_0}\big)^4 
\Big\} 
\leq \overline{\mu}^{(4)} \leq \big(\sqrt[4]{(u_1^{(2)})^2\kappa_1}+\sqrt[4]{(u_0^{(2)})^2\kappa_0}\big)^4$.
\end{corollary}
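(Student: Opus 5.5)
The plan is to derive the corollary directly from the preceding proposition by a monotonicity and minimization argument over the unknown variances. Fix any admissible distribution of $(Y_1,Y_0)$ with the given skewness and kurtosis $\gamma_x,\kappa_x$. Its true variances $(\overline{\sigma}_1^{(2)},\overline{\sigma}_0^{(2)})$ are some pair $(x_1,x_0)$ with $l_x^{(2)}\leq x_x\leq u_x^{(2)}$. Applying the preceding proposition at $(x_1,x_0)$, which itself rests on Theorems~\ref{cor:bounding_mu2_intersection}(4), \ref{cor4}(2), and \ref{cor5}(4) after substituting $\overline{\sigma}_x^{(3)}=\gamma_x x_x^{3/2}$ and $\overline{\sigma}_x^{(4)}=\kappa_x x_x^2$, gives the three pointwise bounds with $(x_1,x_0)$ in place of the true variances. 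Each claimed bound is then obtained by taking the worst case over the box $[l_1^{(2)},u_1^{(2)}]\times[l_0^{(2)},u_0^{(2)}]$. For the lower bounds this means taking the minimum over the box. For the upper bounds it means taking the maximum, which we then evaluate explicitly.

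\textbf{Lower bounds.} In (1) and (3), the lower bounds are exactly the minimum of the pointwise lower bound over the box, so nothing further is needed. The lower bound in (2) is the negative of the upper bound, so it follows once the upper bound is established.

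\textbf{Upper bounds.} We show that each pointwise upper bound is coordinatewise nondecreasing in $(x_1,x_0)$, so that its maximum over the box is attained at $(u_1^{(2)},u_0^{(2)})$.
\begin{itemize}
\item In (1), $(\sqrt{x_1}+\sqrt{x_0})^2$ is increasing in each argument.
\item In (2) and (3), $\sqrt[4]{x_x^2\kappa_x}=\sqrt{x_x}\,\kappa_x^{1/4}$ is nondecreasing in $x_x$, because $\kappa_x\geq 0$: it is a ratio of a fourth moment to a squared variance, and in fact $\kappa_x\geq 1$. The outer maps $t\mapsto t^3$ and $t\mapsto t^4$ are increasing on $t\geq 0$.
\end{itemize}
Since the sum of nonnegative nondecreasing terms is nondecreasing, the maximum over the box is attained at the upper corner, which gives the stated expressions.

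\textbf{Main obstacle.} The one step needing care is well-definedness. The quantities $\gamma_x$ and $\kappa_x$ presuppose $\overline{\sigma}_x^{(2)}>0$, so the box should implicitly exclude zero variance, or the bounds should be read as limits. On the boundary, however, the pointwise bounds extend continuously, with $\sqrt[4]{x_x^2\kappa_x}\to 0$, so the minima and maxima over the closed box are still the correct worst cases. I would note this briefly rather than treat it as a separate case.
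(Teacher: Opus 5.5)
Your proposal is correct and matches the argument the paper intends; the paper states this corollary without a separate proof, as an immediate consequence of the preceding proposition. You evaluate the proposition's bounds at the true variance pair, which lies in the box, minimize the lower bounds over the box, and use coordinatewise monotonicity of $\sqrt[4]{x_x^2\kappa_x}=\sqrt{x_x}\,\kappa_x^{1/4}$ to place the maximum of the upper bounds at $(u_1^{(2)},u_0^{(2)})$. Your zero-variance remark is also fine: since $\gamma_x,\kappa_x$ exist, the true variances are positive, and minimizing over the larger closed box can only weaken the lower bounds, never invalidate them.
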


However, if we do not have the bounds of  $(\overline{\sigma}_1^{(2)},\overline{\sigma}_0^{(2)})$, we cannot derive nontrivial bounds by this approach.

\section{Proofs}


\label{sec:proof}

In this section, we provide proofs of the theorems stated in the main part.
Throughout this section, we put $\widetilde{Y}_x\coloneqq Y_x-\mathbb{E}[Y_x]$  ($x=0,1$), $\Delta\coloneqq Y_1-Y_0$ and $\widetilde{\Delta}\coloneqq \Delta-\mathbb{E}[\Delta]$.

\subsection{Lemmas}

In this subsection, we present several lemmas used in the proofs of the theorems.

\begin{lemma}
\label{lem:coupling_to_scm}
Let $(Z_1,Z_0)$ be any pair of real-valued random variables, and let $(c_1,c_0)\in\mathbb{R}^2$ be any constants.
Then there exists an SCM ${\cal M}$ with endogenous variables ${\boldsymbol V}=\{X,Y\}$, where $X \in \{0,1\}$,
such that the induced potential outcomes $(Y_1,Y_0)$ satisfy
\begin{equation}
(Y_1,Y_0)\overset{d}{=}(Z_1+c_1,\ Z_0+c_0).
\end{equation}
\end{lemma}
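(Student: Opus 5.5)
The plan is to build the SCM explicitly, taking the exogenous noise to be the pair $(Z_1,Z_0)$ itself. Concretely, I would set ${\boldsymbol U}=\{U_X,U_Y\}$, where $U_Y=(U_Y^{(1)},U_Y^{(0)})$ is a random vector in $\mathbb{R}^2$ with the same joint law as $(Z_1,Z_0)$. For $U_X$ I would take any $\{0,1\}$-valued variable, for instance a Bernoulli$(1/2)$ draw independent of $U_Y$, or even a constant, since only the potential outcomes matter here. The structural equations would be
\begin{equation*}
X \coloneqq U_X,\qquad Y\coloneqq f_Y(X,U_Y)\coloneqq X\,(U_Y^{(1)}+c_1)+(1-X)\,(U_Y^{(0)}+c_0).
\end{equation*}
Both functions are Borel measurable, so this is a legitimate SCM on ${\boldsymbol V}=\{X,Y\}$ with $X\in\{0,1\}$.

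Next I would compute the potential outcomes from the intervention semantics. Under $do(X=x)$ the equation for $X$ is replaced by the constant $x$, and the mechanism for $Y$ is kept. This gives $Y_1=f_Y(1,U_Y)=U_Y^{(1)}+c_1$ and $Y_0=f_Y(0,U_Y)=U_Y^{(0)}+c_0$, both defined on the same probability space as $U_Y$.

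The joint law of $(Y_1,Y_0)$ is therefore the pushforward of the law of $U_Y$ under the map $(a,b)\mapsto(a+c_1,b+c_0)$. Since $U_Y\overset{d}{=}(Z_1,Z_0)$, this pushforward equals the law of $(Z_1+c_1,Z_0+c_0)$, which is the claim.

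I do not expect a real obstacle. The only point needing care is that the statement concerns the joint distribution of the potential outcomes, not only their marginals. This is why both coordinates must be read off a single exogenous vector $U_Y$ that carries the full joint law of $(Z_1,Z_0)$, rather than coming from two separately specified noises. A law realizing that joint distribution always exists, for example on $\mathbb{R}^2$ with the Borel $\sigma$-algebra.

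If one also wants the observational data to identify the marginal moments, as in an RCT, one can take $U_X\indep U_Y$. This choice is not needed for the lemma itself.
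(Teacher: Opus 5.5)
Your proposal is correct and matches the paper's proof: the paper also takes exogenous $(U_1,U_0)$ with the joint law of $(Z_1,Z_0)$, sets $X\coloneqq U_X$, and uses $Y\coloneqq (U_0+c_0)+X\{(U_1+c_1)-(U_0+c_0)\}$. That equation is algebraically identical to your $f_Y$, so under $do(X=x)$ one reads off $Y_x=U_x+c_x$. You are right that reading both coordinates off a single exogenous vector is what preserves the joint law.
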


\begin{proof}
Let ${\boldsymbol U}=\{U_X,U_0,U_1\}$ be exogenous variables such that $(U_0,U_1)$ has the same joint distribution as $(Z_0,Z_1)$,
and $U_X\in\{0,1\}$ is arbitrary.
Define the structural equations by
\begin{equation}
\label{eq:scm_coupling_embedding_general}
\begin{aligned}
X &\coloneqq U_X,\\
Y &\coloneqq (U_0+c_0) + X\{(U_1+c_1)-(U_0+c_0)\}.
\end{aligned}
\end{equation}
Consider the sub-model ${\cal M}_x$ induced by the intervention $do(X=x)$.
In ${\cal M}_x$, \eqref{eq:scm_coupling_embedding_general} yields
\begin{equation}
Y = (U_0+c_0) + x\{(U_1+c_1)-(U_0+c_0)\} = U_x + c_x,
\end{equation}
where $c_x\in\{c_0,c_1\}$ corresponds to $x\in\{0,1\}$.
Therefore $(Y_1,Y_0)=(U_1+c_1,U_0+c_0)$ almost surely, and hence
$(Y_1,Y_0)\overset{d}{=}(Z_1+c_1,Z_0+c_0)$.
\end{proof}

\begin{lemma}
\label{lem:Var_updown_by_Cor}
Let $\overline{\sigma}_1^{(2)},\overline{\sigma}_0^{(2)}\geq 0$, and let $f_\pm$ be a function defined by 
\begin{equation}
    f_\pm (\lambda)=-\sqrt{\overline{\sigma}^{(2)}_0}\lambda \pm  \sqrt{\overline{\sigma}^{(2)}_0 \lambda^2 +(\overline{\sigma}^{(2)}_1-\overline{\sigma}^{(2)}_0)},
\end{equation}
where the domain is the set of $\lambda$ such that $-1\leq \lambda\leq 1$ and $\overline{\sigma}^{(2)}_0 \lambda^2+\overline{\sigma}^{(2)}_1-\overline{\sigma}^{(2)}_0\geq 0$.
Then we have the following:
 

(i) If $\overline{\sigma}^{(2)}_1 \geq \overline{\sigma}^{(2)}_0$, $f_+$ and $f_-$ are both defined in $[-1,1]$ and non-increasing.

(ii) If $\overline{\sigma}^{(2)}_1<\overline{\sigma}^{(2)}_0$, $f_+$ (resp. $f_-$) is defined in $\left[-1,-\sqrt{1-\overline{\sigma}^{(2)}_1/\overline{\sigma}^{(2)}_0}\right]\cup \left[\sqrt{1-\overline{\sigma}^{(2)}_1/\overline{\sigma}^{(2)}_0},1\right]$, decreasing (resp. increasing) if $\lambda<0$ and increasing (resp. decreasing) if $\lambda>0$, within its domain.
\end{lemma}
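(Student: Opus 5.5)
Write $a\coloneqq\sqrt{\overline{\sigma}^{(2)}_0}\geq 0$ and $d\coloneqq\overline{\sigma}^{(2)}_1-\overline{\sigma}^{(2)}_0$, so that
\begin{equation*}
f_\pm(\lambda)=-a\lambda\pm\sqrt{a^2\lambda^2+d}.
\end{equation*}
First I will determine the domain, then compute the derivative on the interior, where $a^2\lambda^2+d>0$, and read off the sign.

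\textbf{Domain.} If $d\geq 0$, the radicand is nonnegative for every $\lambda\in[-1,1]$, so both $f_\pm$ are defined on all of $[-1,1]$. If $d<0$, then $a>0$, and the condition $a^2\lambda^2\geq -d$ is equivalent to
\begin{equation*}
|\lambda|\geq\sqrt{-d}/a=\sqrt{1-\overline{\sigma}^{(2)}_1/\overline{\sigma}^{(2)}_0}.
\end{equation*}
Since $\overline{\sigma}^{(2)}_1\geq 0$, this threshold lies in $(0,1]$, which gives the stated union of two intervals.

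\textbf{Derivative.} Where the radicand is positive,
\begin{equation*}
f_\pm'(\lambda)=-a\pm\frac{a^2\lambda}{\sqrt{a^2\lambda^2+d}}=\frac{a\bigl(-\sqrt{a^2\lambda^2+d}\pm a\lambda\bigr)}{\sqrt{a^2\lambda^2+d}}.
\end{equation*}
The sign question therefore reduces to comparing $\sqrt{a^2\lambda^2+d}$ with $\pm a\lambda$.

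\textbf{Case (i), $d\geq 0$.} Here $\sqrt{a^2\lambda^2+d}\geq a|\lambda|\geq \pm a\lambda$, so $f_\pm'\leq 0$ at every point where the radicand is positive. If $a=0$, both functions are constant, hence non-increasing. If $d=0$ and $a>0$, the radicand vanishes only at $\lambda=0$, and there $f_\pm(\lambda)=-a\lambda\pm a|\lambda|$ is continuous. A function that is continuous on $[-1,1]$ and non-increasing on each side of an isolated point is non-increasing on the whole interval, so (i) follows. This handling of the degenerate points is the only delicate part of case (i).

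\textbf{Case (ii), $d<0$.} Now $\sqrt{a^2\lambda^2+d}<a|\lambda|$. Take $f_+$. When $\lambda>0$, the bracket $-\sqrt{\cdot}+a\lambda$ is positive, so $f_+$ is increasing. When $\lambda<0$, the bracket is $-\sqrt{\cdot}-a|\lambda|<0$, so $f_+$ is decreasing. For $f_-$ the bracket is $-\sqrt{\cdot}-a\lambda$: it is negative for $\lambda>0$ (decreasing) and positive for $\lambda<0$, since then $a|\lambda|>\sqrt{\cdot}$ (increasing).

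At the endpoints $|\lambda|=\sqrt{-d}/a$ the derivative blows up, but $f_\pm$ is continuous there. Strict monotonicity on the open part of each interval therefore extends to the closed interval. I expect this endpoint and continuity bookkeeping to be the only real obstacle; the sign analysis itself is routine.
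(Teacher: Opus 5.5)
Your proof is correct and follows essentially the same route as the paper. Both arguments determine the domain from the sign of the radicand, differentiate, and compare $\sqrt{\overline{\sigma}_0^{(2)}\lambda^2+(\overline{\sigma}_1^{(2)}-\overline{\sigma}_0^{(2)})}$ with $\sqrt{\overline{\sigma}_0^{(2)}}\,|\lambda|$ to read off the sign of $f_\pm'$. The only differences are minor: the paper treats $\overline{\sigma}_1^{(2)}=\overline{\sigma}_0^{(2)}$ by writing $f_\pm$ out explicitly as piecewise-linear functions instead of using your continuity argument, and it leaves implicit the endpoint and continuity bookkeeping in case (ii) that you spell out.
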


\begin{proof}
If $\overline{\sigma}_0^{(2)}=0$, $f_\pm(\lambda)=\pm\sqrt{
\overline{\sigma}_1^{(2)}}$ and (i) holds. The assumption of (ii) does not hold when $\overline{\sigma}_0^{(2)}=0$.

In the following, we assume $\overline{\sigma}_0^{(2)}>0$.
For simplicity of notation, we put $S(\lambda)\coloneqq \sqrt{\overline{\sigma}^{(2)}_0 \lambda^2+(\overline{\sigma}^{(2)}_1-\overline{\sigma}^{(2)}_0)}$ and $T(\lambda)\coloneqq\frac{\overline{\sigma}^{(2)}_0 \lambda}{S(\lambda)}$.

\underline{\emph{Domain.}}
If $\overline{\sigma}^{(2)}_1\geq\overline{\sigma}^{(2)}_0$, we have  $\overline{\sigma}^{(2)}_0 \lambda^2+(\overline{\sigma}^{(2)}_1-\overline{\sigma}^{(2)}_0)\geq 0$ for any $\lambda$, hence $f_\pm$ is defined for all $\lambda\in [-1,1]$. 
If $\overline{\sigma}^{(2)}_1<\overline{\sigma}^{(2)}_0$, $\overline{\sigma}^{(2)}_0 \lambda^2+(\overline{\sigma}^{(2)}_1-\overline{\sigma}^{(2)}_0)\geq 0$ holds if and only if $\lambda^2\geq 1-\overline{\sigma}^{(2)}_1/\overline{\sigma}^{(2)}_0$. Therefore $f_\pm$ is defined in $\left[-1,-\sqrt{(1-\overline{\sigma}^{(2)}_1/\overline{\sigma}^{(2)}_0)}\right]\cup \left[\sqrt{(1-\overline{\sigma}^{(2)}_1/\overline{\sigma}^{(2)}_0)},1\right]$.

\underline{\emph{Increase/Decrease.}}
We remark that we have
\begin{equation}
    f'_\pm(\lambda) = -\sqrt{\overline{\sigma}^{(2)}_0}\pm\frac{\overline{\sigma}^{(2)}_0\lambda}{\sqrt{\overline{\sigma}^{(2)}_0 \lambda^2 +(\overline{\sigma}^{(2)}_1-\overline{\sigma}^{(2)}_0)}}.
\end{equation}

(i): If $\overline{\sigma}^{(2)}_1 >\overline{\sigma}^{(2)}_0$, then $T(0)=0$, and if we further assume $\lambda\ne0$, we have $S(\lambda) > \sqrt{\overline{\sigma}^{(2)}_0}|\lambda|$ so
\begin{equation}
    |T(\lambda)|=\dfrac{\overline{\sigma}^{(2)}_0 |\lambda|}{S(\lambda)} < \dfrac{\overline{\sigma}^{(2)}_0|\lambda|}{\sqrt{\overline{\sigma}^{(2)}_0}|\lambda|}=\sqrt{\overline{\sigma}^{(2)}_0}.
\end{equation}
Thus $|T(\lambda)|\leq \sqrt{\overline{\sigma}^{(2)}_0}$ for all $\lambda\in[-1,1]$. Then we have $f'_\pm(\lambda) \leq 0$, therefore $f_+,f_-$ are both non-increasing.

If $\overline{\sigma}^{(2)}_1=\overline{\sigma}^{(2)}_0$, $f_{\pm} (\lambda)=-\sqrt{\overline{\sigma}^{(2)}_0}\lambda\pm \sqrt{\overline{\sigma}^{(2)}_0}|\lambda|$, so we have
\begin{equation}
\label{eq:case_sig1=sig0}
f_+(\lambda)=
\begin{cases} 0 & (\lambda\geq 0) \\ -2\sqrt{\overline{\sigma}^{(2)}_0}\lambda & (\lambda \leq 0) \end{cases},\quad 
f_-(\lambda)=\begin{cases} -2\sqrt{\overline{\sigma}^{(2)}_0}\lambda & (\lambda\geq 0) \\ 0 & (\lambda \leq 0)\end{cases}.
\end{equation}
Then $f_+, f_-$ are both non-increasing.

(ii): If $\overline{\sigma}^{(2)}_1<\overline{\sigma}^{(2)}_0$, we have $|T(\lambda)|>\sqrt{\overline{\sigma}^{(2)}_0}$. Therefore we have
\begin{equation}
    f'_+(\lambda)\begin{cases}
        <0 & (\lambda<0) \\ >0 &(\lambda>0)
    \end{cases},\qquad
    f'_-(\lambda)\begin{cases}
        >0 &(\lambda<0) \\ <0 & (\lambda>0)
    \end{cases}
\end{equation}
(in the domain of $f_\pm$). Hence we have the claim.
\end{proof}

\begin{lemma}
\label{lem:sharp_third_from_fourth_newextra}
Let $Z$ be a random variable with $\mathbb{E}[Z]=0$ and $\mathbb{E}[Z^4]<\infty$. 
Then
\begin{equation}\label{eq:sharp_third_from_fourth_newextra}
\bigl|\mathbb{E}[Z^3]\bigr|
\leq
\frac{\sqrt{2}}{\sqrt[4]{27}}\Bigl(\sqrt[4]{\mathbb{E}[Z^4]}\Bigr)^{3}.
\end{equation}
Moreover, there exists a random variable $Z_0$ with $\mathbb{E}[Z_0]=0$, $\mathbb{E}[Z_0^4]=1$, and $\mathbb{E}[Z_0^3]=\sqrt{2}/\sqrt[4]{27}$. 
\end{lemma}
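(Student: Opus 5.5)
The plan is to reduce the third moment to a covariance and apply Cauchy--Schwarz, in the same spirit as the proof of Lemma~\ref{lem:two_point_extremal_34}. Put $v\coloneqq \mathbb{E}[Z^2]$ and $q\coloneqq \mathbb{E}[Z^4]$. Both are finite, and $v^2\leq q$ by Jensen/Cauchy--Schwarz. Since $\mathbb{E}[Z]=0$, we have $\mathbb{E}[Z^3]=\mathbb{E}[Z(Z^2-v)]$. Cauchy--Schwarz then gives
\begin{equation*}
\mathbb{E}[Z^3]^2\leq \mathbb{E}[Z^2]\,\mathbb{E}[(Z^2-v)^2]=v(q-v^2).
\end{equation*}
This bounds $|\mathbb{E}[Z^3]|$ by a function of $v$ alone, with $q$ fixed. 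Since $v$ is not given, I would then maximize $g(v)\coloneqq v(q-v^2)$ over $0\leq v\leq \sqrt{q}$. From $g'(v)=q-3v^2$, the maximum is at $v^\ast=\sqrt{q/3}$, where
\begin{equation*}
g(v^\ast)=\sqrt{q/3}\cdot\tfrac{2q}{3}=\tfrac{2}{3\sqrt{3}}\,q^{3/2}.
\end{equation*}
Taking square roots gives $|\mathbb{E}[Z^3]|\leq \sqrt{2}\,3^{-3/4}q^{3/4}=\frac{\sqrt2}{\sqrt[4]{27}}\bigl(\sqrt[4]{q}\bigr)^3$, which is \eqref{eq:sharp_third_from_fourth_newextra}. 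The degenerate case $q=0$ forces $Z=0$ almost surely and is trivial.

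For attainment, I would follow the equality conditions exactly. Equality requires $v=v^\ast$ and $Z^2-v=cZ$ almost surely for some constant $c$. So, with $q=1$, I take $v=1/\sqrt3$ and let $Z$ be supported on the two roots of $z^2-cz-v=0$. These roots have product $-v<0$, so they have opposite signs, and there is a unique choice of probabilities making $\mathbb{E}[Z]=0$. This gives a valid distribution, and $\mathbb{E}[Z^2]=v$ holds automatically from $Z^2=cZ+v$.

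Next I would compute the higher moments by reduction, as in Lemma~\ref{lem:two_point_extremal_34}. From $Z^3=cZ^2+vZ$ we get $\mathbb{E}[Z^3]=cv$. From $Z^4=(cZ+v)^2$ we get $\mathbb{E}[Z^4]=c^2v+v^2$. Imposing $\mathbb{E}[Z^4]=1$ with $v=1/\sqrt3$ gives $c^2=(2/3)\sqrt3$, so I take $c=\sqrt{2}\,3^{-1/4}>0$. Then $\mathbb{E}[Z_0^3]=c/\sqrt3=\sqrt2\,3^{-3/4}=\sqrt2/\sqrt[4]{27}$, which is the required $Z_0$.

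The only step needing care is the optimization over $v$. Because the bound depends on the unknown variance, the sharp constant comes from maximizing $v(q-v^2)$ over $v$ rather than from Cauchy--Schwarz alone. The remaining computations are routine.
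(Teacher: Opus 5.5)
Your proposal is correct and follows essentially the same route as the paper: the identity $\mathbb{E}[Z^3]=\mathbb{E}[Z(Z^2-v)]$, Cauchy--Schwarz, and maximization of $v(q-v^2)$; the paper normalizes via $u=s/\sqrt{\kappa}$, which is equivalent. Your extremal law, obtained from the equality condition $Z^2=cZ+v$, is exactly the paper's two-point $Z_0$ supported on $a_0$ and $-b_0$, since $a_0b_0=1/\sqrt{3}=v$ and $a_0-b_0=\sqrt{2}\,3^{-1/4}=c$.
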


\begin{proof}
If $Z=0$ almost surely, we have \eqref{eq:sharp_third_from_fourth_newextra}. In the following, we assume $\mathbb{P}(Z=0)<1$.

Let $s\coloneqq \mathbb{E}[Z^2]$ and $\kappa\coloneqq \mathbb{E}[Z^4]$.
By the Cauchy--Schwarz inequality, $s^2\leq \kappa$.
Since $\mathbb{E}[Z]=0$,
$\mathbb{E}[Z^3] = \mathbb{E}\bigl[Z\bigl(Z^2-s\bigr)\bigr]$,
and the Cauchy--Schwarz inequality yields
$\bigl|\mathbb{E}[Z^3]\bigr|^2
\leq \mathbb{E}[Z^2]\cdot \mathbb{E}\bigl[(Z^2-s)^2\bigr]
= s(\kappa-s^2)$.
Hence
\begin{equation}
\bigl|\mathbb{E}[Z^3]\bigr|
\leq
\sqrt{s(\kappa-s^2)}
=
\Bigl(\sqrt[4]{\kappa}\Bigr)^3\sqrt{u(1-u^2)},
\qquad
u\coloneqq \frac{s}{\sqrt{\kappa}}\in[0,1].
\end{equation}
The function $u\mapsto u(1-u^2)$ is maximized over $[0,1]$ at $u=1/\sqrt{3}$, with
\begin{equation}
\max_{u\in[0,1]}\sqrt{u(1-u^2)}
=
\sqrt{\frac{2}{3\sqrt{3}}}
=
\frac{\sqrt{2}}{\sqrt[4]{27}}.
\end{equation}
Substituting this bound gives \eqref{eq:sharp_third_from_fourth_newextra}.

For any random variable $Z$ with probabilities $\mathbb{P}(Z=a)=b/(a+b)$ and $\mathbb{P}(Z=-b)=a/(a+b)$ ($a,b\geq 0,\,a+b>0$), one checks
\begin{equation}
\mathbb{E}[Z]=0,\qquad 
\mathbb{E}[Z^2]=ab, \qquad 
\mathbb{E}[Z^3]=ab(a-b), \qquad
\mathbb{E}[Z^4]=ab(a^2-ab+b^2).
\end{equation}

We define a random variable $Z_0$ by
\begin{equation}
\mathbb{P}(Z_0=a_0)\coloneqq \frac{b_0}{a_0+b_0},
\quad \mathbb{P}(Z_0=-b_0)\coloneqq \frac{a_0}{a_0+b_0}\quad\text{ where }\quad 
a_0\coloneqq \sqrt{\frac{3+2\sqrt{3}}{3}},
\quad b_0\coloneqq \sqrt{\frac{2\sqrt{3}-3}{3}}.
\end{equation}
Then
$\mathbb{E}[Z_0^4] = a_0b_0\left((a_0^2+b_0^2)-a_0b_0\right) = \frac{1}{\sqrt{3}}\Bigl(\frac{4\sqrt{3}}{3}-\frac{1}{\sqrt{3}}\Bigr) = 1$.
Also,
$(a_0-b_0)^2=(a_0^2+b_0^2)-2a_0b_0=\frac{4\sqrt{3}}{3}-\frac{2}{\sqrt{3}}=\frac{2}{\sqrt{3}}$,
so
$\mathbb{E}[Z_0^3] = a_0b_0(a_0-b_0) = \frac{1}{\sqrt{3}}\sqrt{\frac{2}{\sqrt{3}}}
= \frac{\sqrt{2}}{\sqrt[4]{27}}$.
\end{proof}

\begin{lemma}
\label{lem:two_point_23}
For any $v>0$ and any $t\in\mathbb{R}$, there exists a random variable $W$
such that $\mathbb{E}[W]=0$, $\mathbb{E}[W^2]=v$ and $\mathbb{E}[W^3]=t$.
\end{lemma}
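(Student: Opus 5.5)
The plan is to build $W$ explicitly as a two-point random variable, reusing the parametrization from the proof of Lemma~\ref{lem:sharp_third_from_fourth_newextra}. For $a,b>0$, let $W$ take the value $a$ with probability $b/(a+b)$ and the value $-b$ with probability $a/(a+b)$. As computed there, this gives $\mathbb{E}[W]=0$, $\mathbb{E}[W^2]=ab$ and $\mathbb{E}[W^3]=ab(a-b)$. It therefore suffices to find $a,b>0$ with $ab=v$ and $ab(a-b)=t$, that is, with $ab=v$ and $a-b=t/v$.

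Set $d\coloneqq t/v\in\mathbb{R}$. Writing $b=a-d$, the condition $ab=v$ becomes the quadratic $a^2-da-v=0$. Its roots are $a=\bigl(d\pm\sqrt{d^2+4v}\bigr)/2$. Because $v>0$, we have $\sqrt{d^2+4v}>|d|$. Taking the plus sign then gives
\begin{equation*}
a=\frac{d+\sqrt{d^2+4v}}{2}>0,\qquad b=a-d=\frac{-d+\sqrt{d^2+4v}}{2}>0.
\end{equation*}
Both probabilities $b/(a+b)$ and $a/(a+b)$ are then well defined, lie in $(0,1)$, and sum to one. Substituting back gives $\mathbb{E}[W^2]=ab=(d^2+4v-d^2)/4=v$ and $\mathbb{E}[W^3]=v\cdot d=t$, as required.

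There is no real obstacle here. The only point needing care is that both support points are strictly positive in magnitude, and this follows directly from $v>0$. That hypothesis is also what rules out the degenerate case: if $v=0$, then $W=0$ almost surely and $\mathbb{E}[W^3]$ would be forced to equal $0$.
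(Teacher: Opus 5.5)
Your proof is correct and takes essentially the same route as the paper: both use a mean-zero two-point law, and your family is exactly the paper's under $a=\sqrt{v(1-p)/p}$, $b=\sqrt{vp/(1-p)}$, $p=b/(a+b)$. The only difference is how the third moment is matched. The paper obtains $p$ from the intermediate value theorem applied to $\phi(p)=(1-2p)/\sqrt{p(1-p)}$, whereas you solve $a^2-da-v=0$ in closed form, which gives $W$ explicitly and avoids the continuity argument.
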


\begin{proof}
Let $\kappa \coloneqq t/v^{3/2}\in\mathbb{R}$ and define a function $\phi:(0,1)\to\mathbb{R}$ by
\begin{equation}
\phi(p)\coloneqq \frac{1-2p}{\sqrt{p(1-p)}}.
\end{equation}
Then $\phi$ is continuous,
$\lim_{p\downarrow 0}\phi(p)=+\infty$, and $\lim_{p\uparrow 1}\phi(p)=-\infty$.
Hence there exists $p\in(0,1)$ such that $\phi(p)=\kappa$.
Define $W$ by
\begin{equation}
W=
\begin{cases}
\ \sqrt{v(1-p)/p} & \text{with probability\ } p,\\
-\sqrt{vp/(1-p)} & \text{with probability\ } 1-p.
\end{cases}
\end{equation}
Then we have $\mathbb{E}[W]=0$, $\mathbb{E}[W^2]=v$
and $\mathbb{E}[W^3]=v^{3/2}\phi(p)=t$.
\end{proof}

\begin{lemma}\label{lem:construction_mu4_24}
Assume $\overline{\sigma}_x^{(4)}\ge \bigl(\overline{\sigma}_x^{(2)}\bigr)^2$ for $x\in\{0,1\}$.
There exist random variables $A,B\geq 0$ such that
\begin{equation}\label{eq:AB_moments_24}
\begin{gathered}
\mathbb{E}[A]=\overline{\sigma}_1^{(2)},\qquad
\mathbb{E}[A^2]=\overline{\sigma}_1^{(4)},\qquad
\mathbb{E}[B]=\overline{\sigma}_0^{(2)},\qquad
\mathbb{E}[B^2]=\overline{\sigma}_0^{(4)},
\\
\mathbb{E}[AB]=\overline{\sigma}_1^{(2)}\overline{\sigma}_0^{(2)}
+\sqrt{\overline{\sigma}_1^{(4)}-\bigl(\overline{\sigma}_1^{(2)}\bigr)^2}
\sqrt{\overline{\sigma}_0^{(4)}-\bigl(\overline{\sigma}_0^{(2)}\bigr)^2}
 \eqqcolon \beta_{2,2}^{\mathrm{CS}}.
\end{gathered}
\end{equation}
Moreover, there exist random signs $S$ independent of $(A,B)$ such that, defining $(\widetilde Y_1,\widetilde Y_0):=(S\sqrt{A},S\sqrt{B})$, we have $\mathbb{E}[\widetilde Y_x]=0$ and $\mathbb{E}[\widetilde Y_x^2]=\overline{\sigma}_x^{(2)}$, $\mathbb{E}[\widetilde Y_x^4]=\overline{\sigma}_x^{(4)}$ for $x\in\{0,1\}$.
\end{lemma}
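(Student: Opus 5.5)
We need to prove Lemma \ref{lem:construction_mu4_24}: under $\overline{\sigma}_x^{(4)}\ge(\overline{\sigma}_x^{(2)})^2$, there are nonnegative $A,B$ with prescribed first and second moments and cross moment $\mathbb{E}[AB]=\beta_{2,2}^{\mathrm{CS}}$. Afterwards we symmetrize with a random sign.

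\textbf{Step 1 (standardized pair).} Write $m_1=\overline{\sigma}_1^{(2)}$, $m_0=\overline{\sigma}_0^{(2)}$, $s_1=\sqrt{\overline{\sigma}_1^{(4)}-m_1^2}$ and $s_0=\sqrt{\overline{\sigma}_0^{(4)}-m_0^2}$. The target $\beta_{2,2}^{\mathrm{CS}}=m_1m_0+s_1s_0$ is exactly the Cauchy--Schwarz maximum of $\mathbb{E}[AB]$. It is attained when $A-m_1$ and $B-m_0$ are nonnegatively proportional. So I will look for a single random variable $V$ with $\mathbb{E}[V]=0$ and $\mathbb{E}[V^2]=1$, and set $A=m_1+s_1V$ and $B=m_0+s_0V$. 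Then the means are $m_1,m_0$, the second moments are $m_x^2+s_x^2=\overline{\sigma}_x^{(4)}$, and $\mathbb{E}[AB]=m_1m_0+s_1s_0$ automatically.

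\textbf{Step 2 (nonnegativity — the main obstacle).} We need $m_x+s_xV\ge0$ a.s. for both $x$, that is, $V\ge -m_x/s_x$ whenever $s_x>0$.

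\emph{Choice of $V$.} Take the two-point variable with $\mathbb{P}(V=a)=p$ and $\mathbb{P}(V=-1/a)$ otherwise, where $a\ge 1$ and $p=1/(1+a^2)$. This gives mean $0$ and variance $1$, and the lower atom $-1/a$ can be made arbitrarily close to $0$.

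\emph{Choice of $a$.} Choose $a\ge\max_x s_x/m_x$ over those $x$ with $m_x>0$. Then $s_x/a\le m_x$, so nonnegativity holds.

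\emph{Degenerate case $m_x=0$.} The hypothesis $m_x\ge0$ gives $\overline{\sigma}_x^{(2)}=0$. This forces $\widetilde Y_x=0$ a.s. in any realizable setting, and then $\overline{\sigma}_x^{(4)}=0$, so $s_x=0$. If instead $\overline{\sigma}_x^{(4)}>0=m_x$ is only formally allowed by the hypothesis, no $A\ge0$ with $\mathbb{E}[A]=0$ and $\mathbb{E}[A^2]>0$ exists. I would therefore note that the lemma is applied only to moment values coming from actual distributions, where $m_x=0$ implies $s_x=0$, and set $A\equiv0$ in that case. I expect this bookkeeping to be the only delicate point.

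\textbf{Step 3 (signs).} Let $S$ be a Rademacher sign, independent of $(A,B)$, and put $\widetilde Y_1=S\sqrt A$ and $\widetilde Y_0=S\sqrt B$. By independence and symmetry, $\mathbb{E}[\widetilde Y_x]=\mathbb{E}[S]\,\mathbb{E}[\sqrt{\cdot}]=0$. Since $S^2=1$, we get $\mathbb{E}[\widetilde Y_1^2]=\mathbb{E}[A]=\overline{\sigma}_1^{(2)}$ and $\mathbb{E}[\widetilde Y_1^4]=\mathbb{E}[A^2]=\overline{\sigma}_1^{(4)}$, and likewise for $x=0$. Since $\sqrt A$ takes only two values, all moments are finite, and the sign and square-root construction introduces no further difficulty.
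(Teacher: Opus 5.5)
Your proposal is correct and follows the paper's proof essentially step for step. The paper uses the same comonotone construction $A=m_1+s_1Z$, $B=m_0+s_0Z$ with the same two-point standardized $Z$ (its $t$ is your $1/a$, with nonnegativity ensured by $t<m_x/s_x$). It resolves the degenerate case by the same observation that $\overline{\sigma}_x^{(2)}=0$ forces $\overline{\sigma}_x^{(4)}=0$ for genuine central moments, and finishes with an independent Rademacher sign. The only cosmetic difference is that the paper splits into explicit cases ($A\equiv 0$, $B$ constant, both variances positive), whereas your single formula covers them uniformly, since $s_x=0$ or $m_x=s_x=0$ automatically gives a constant or zero variable.
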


\begin{proof}
By definition of $\overline{\sigma}_x^{(k)}$, we have $\overline{\sigma}_x^{(2)}=0 \iff \overline{\sigma}_x^{(4)}=0$.

If $\overline{\sigma}_1^{(2)}=0$ and $\overline{\sigma}_0^{(2)}>0$, take $A=0$ almost surely. If $\overline{\sigma}_0^{(4)}=(\overline{\sigma}_0^{(2)})^2$, take $B=\overline{\sigma}_0^{(2)}$ almost surely. 
Otherwise, choose $t>0$ such that $t<\frac{\overline{\sigma}_0^{(2)}}{\sqrt{\overline{\sigma}_0^{(4)}-(\overline{\sigma}_0^{(2)})^2}}$, let $Z$ be a random variable with
\begin{equation}\label{eq:Z_law_24}
\mathbb{P}(Z=-t)=\frac{1}{1+t^2},\qquad \mathbb{P}(Z=1/t)=\frac{t^2}{1+t^2}.
\end{equation}
whence $\mathbb{E}[Z]=0$ and $\mathbb{E}[Z^2]=1$.
Then, set $B=\overline{\sigma}_0^{(2)}+\sqrt{\overline{\sigma}_0^{(4)}-(\overline{\sigma}_0^{(2)})^2}Z$. 
Then $A,B\geq0$, $\mathbb{E}[A]=\mathbb{E}[A^2]=0$, $\mathbb{E}[B]=\overline{\sigma}_0^{(2)}$, $\mathbb{E}[B^2]=\overline{\sigma}_0^{(4)}$, and $\mathbb{E}[AB]=0=\beta_{2,2}^{\mathrm{CS}}$. 
The case $\overline{\sigma}_0^{(2)}=0$ and $\overline{\sigma}_1^{(2)}>0$ is symmetric.

If further  $\overline{\sigma}_1^{(2)}=\overline{\sigma}_0^{(2)}=0$, we can take $A=B=0$ almost surely.

Assume now that $\overline{\sigma}_1^{(2)}>0$ and $\overline{\sigma}_0^{(2)}>0$. Take $t\in \mathbb{R}$ with
$0 < t <
\min\Big\{
\frac{\overline{\sigma}_1^{(2)}}{\sqrt{\overline{\sigma}_1^{(4)}-(\overline{\sigma}_1^{(2)})^2}},
\,
\frac{\overline{\sigma}_0^{(2)}}{\sqrt{\overline{\sigma}_0^{(4)}-(\overline{\sigma}_0^{(2)})^2}}
\Big\}$, where we interpret $a/0\coloneqq +\infty$ for $a>0$.
Set $p\coloneqq 1/(1+t^2)$.
Let $Z$ be a random variable with \eqref{eq:Z_law_24}.
Define
\begin{equation}
A\coloneqq \overline{\sigma}_1^{(2)}+\sqrt{\overline{\sigma}_1^{(4)}-\bigl(\overline{\sigma}_1^{(2)}\bigr)^2}\,Z,
\qquad
B\coloneqq \overline{\sigma}_0^{(2)}+\sqrt{\overline{\sigma}_0^{(4)}-\bigl(\overline{\sigma}_0^{(2)}\bigr)^2}\,Z.
\end{equation}
Since $Z\in\{-t,1/t\}$, we have
$A\geq \overline{\sigma}_1^{(2)}-\sqrt{\overline{\sigma}_1^{(4)}-(\overline{\sigma}_1^{(2)})^2}t\geq 0$,
 $B\geq \overline{\sigma}_0^{(2)}-\sqrt{\overline{\sigma}_0^{(4)}-(\overline{\sigma}_0^{(2)})^2}t\geq 0$.
By definition and $\mathbb{E}[Z]=0$, $\mathbb{E}[A]=\overline{\sigma}_1^{(2)}$ and $\mathbb{E}[B]=\overline{\sigma}_0^{(2)}$.
Further, by $\mathbb{E}[Z^2]=1$,
\begin{equation}
\mathbb{E}[A^2]=\bigl(\overline{\sigma}_1^{(2)}\bigr)^2+\bigl(\overline{\sigma}_1^{(4)}-(\overline{\sigma}_1^{(2)})^2\bigr)=\overline{\sigma}_1^{(4)},
\qquad
\mathbb{E}[B^2]=\overline{\sigma}_0^{(4)}.
\end{equation}
Also,
\begin{equation}
\mathbb{E}[AB]
=\overline{\sigma}_1^{(2)}\overline{\sigma}_0^{(2)}
+\sqrt{\overline{\sigma}_1^{(4)}-(\overline{\sigma}_1^{(2)})^2}\,
\sqrt{\overline{\sigma}_0^{(4)}-(\overline{\sigma}_0^{(2)})^2}\,
\mathbb{E}[Z^2]
=\beta_{2,2}^{\mathrm{CS}}.
\end{equation}
Finally, take $S$ independent of $Z$ with $\mathbb{P}(S=1)=\mathbb{P}(S=-1)=1/2$.
Since $(A,B)$ is a function of $Z$, $S$ is independent of $(A,B)$.
Set $(\widetilde Y_1,\widetilde Y_0)=(S\sqrt{A},S\sqrt{B})$.
Then $\mathbb{E}[\widetilde Y_x]=0$ and $\mathbb{E}[\widetilde Y_x^2]=\mathbb{E}[A]$ or $\mathbb{E}[B]$, $\mathbb{E}[\widetilde Y_x^4]=\mathbb{E}[A^2]$ or $\mathbb{E}[B^2]$, as required.
\end{proof}

\begin{lemma}\label{lem:exist_odd_moment}
Let $k\geq 3$ be an odd integer and let $t\in\mathbb{R}$.
There exists a random variable $Z$ such that $\mathbb{E}[Z]=0$ and $\mathbb{E}[Z^k]=t$.
\end{lemma}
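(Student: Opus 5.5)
The plan is to use a two-point construction, as in Lemma~\ref{lem:two_point_23}, and let continuity and the intermediate value theorem pick the parameter. For $p\in(0,1)$, define the mean-zero two-point variable
\begin{equation*}
Z_p=\begin{cases}\ \sqrt{(1-p)/p} & \text{with probability } p,\\ -\sqrt{p/(1-p)} & \text{with probability } 1-p.\end{cases}
\end{equation*}
A direct check gives $\mathbb{E}[Z_p]=\sqrt{p(1-p)}-\sqrt{p(1-p)}=0$ and $\mathbb{E}[Z_p^2]=1$.

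Next, write $\psi(p)\coloneqq\mathbb{E}[Z_p^k]$. Since $k$ is odd,
\begin{equation*}
\psi(p)=p\Big(\tfrac{1-p}{p}\Big)^{k/2}-(1-p)\Big(\tfrac{p}{1-p}\Big)^{k/2},
\end{equation*}
which is continuous on $(0,1)$. As $p\downarrow 0$, the first term behaves like $p^{1-k/2}$ and tends to $+\infty$ because $k\geq 3$; the second term tends to $0$. Hence $\psi(p)\to+\infty$. By the symmetry $\psi(1-p)=-\psi(p)$, we also get $\psi(p)\to-\infty$ as $p\uparrow 1$. The intermediate value theorem then gives $p^*$ with $\psi(p^*)=t$, and $Z\coloneqq Z_{p^*}$ works.

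An alternative route avoids the limit computation. Take any mean-zero $Z$ with $\mathbb{E}[Z^k]\neq 0$ and scale it by $c=(t/\mathbb{E}[Z^k])^{1/k}$. The real $k$-th root exists for either sign because $k$ is odd, and the case $t=0$ is handled by $Z=0$. I would present the scaling argument as the main proof, since it is shorter.

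For the scaling argument, a convenient choice is $\mathbb{P}(Z=2)=1/3$, $\mathbb{P}(Z=-1)=2/3$. Then $\mathbb{E}[Z]=0$ and $\mathbb{E}[Z^k]=(2^k-2)/3>0$. The only delicate step is confirming this odd moment is nonzero, which holds since $2^k>2$ for $k\geq 3$; everything else is routine.
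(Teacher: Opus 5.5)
Your main scaling argument is exactly the paper's proof: the same mean-zero variable with $\mathbb{P}(Z=2)=1/3$, $\mathbb{P}(Z=-1)=2/3$ and $\mathbb{E}[Z^k]=(2^k-2)/3$, rescaled by the real $k$-th root of $t/\mathbb{E}[Z^k]$, with $t=0$ handled by $Z=0$. Your alternative intermediate-value argument with the two-point $Z_p$ is also correct; it mirrors Lemma~\ref{lem:two_point_23} and uses $k\geq 3$ for the limit as $p\downarrow 0$. It is simply not needed.
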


\begin{proof}
If $t=0$, take $Z=0$.
Otherwise, define a random variable $T$ by $\mathbb{P}(T=2)=1/3$ and $\mathbb{P}(T=-1)=2/3$.
Then $\mathbb{E}[T]=0$ and $\mathbb{E}[T^k]=(2^k-2)/3$.
Set $Z\coloneqq \sqrt[k]{t/\mathbb{E}[T^k]}T$.
Then $\mathbb{E}[Z]=0$ and $\mathbb{E}[Z^k]=t$.
\end{proof}

\subsection{Proofs}


\Identification*

\begin{proof}
Let $\Delta\coloneqq Y_1-Y_0$.
Since $Y_1-\mathbb{E}[Y_1]=(Y_0-\mathbb{E}[Y_0])+(\Delta-\mathbb{E}[\Delta])$, by the binomial theorem 
\begin{equation}
\overline{\sigma}_1^{(m)}
=\sum_{\ell=0}^m \binom{m}{\ell}\mathbb{E}[(Y_0-\mathbb{E}[Y_0])^{m-\ell}(\Delta-\mathbb{E}[\Delta])^\ell].
\end{equation}
Under the condition $Y_0\indep \Delta$, the expectation factorizes into $\mathbb{E}[(Y_0-\mathbb{E}[Y_0])^{m-\ell}(\Delta-\mathbb{E}[\Delta])^\ell]=\mathbb{E}[(Y_0-\mathbb{E}[Y_0])^{m-\ell}]\mathbb{E}[(\Delta-\mathbb{E}[\Delta])^\ell]=\overline{\sigma}_0^{(m-\ell)}\overline{\mu}^{(\ell)}$.
Using $\overline{\sigma}_0^{(1)}=0$ and $\overline{\mu}^{(1)}=0$, we obtain
\begin{equation}
\overline{\sigma}_1^{(m)}
=\overline{\sigma}_0^{(m)}+\overline{\mu}^{(m)}+\sum_{\ell=2}^{m-2}\binom{m}{\ell}\overline{\sigma}_0^{(m-\ell)}\overline{\mu}^{(\ell)},
\end{equation}
which is equivalent to \eqref{eq:identification_under_independence}.
\end{proof}

\SensAn*

\begin{proof}
If $\overline{\sigma}_0^{(2)}=0$, then $\widetilde{Y}_0=0$ almost surely. Hence $\overline{\mu}^{(2)}=\overline{\sigma}_1^{(2)}$ and $\lambda=0$ by definition. This proves (1) when $\overline{\sigma}_1^{(2)}>\overline{\sigma}_0^{(2)}=0$ and (2) when $\overline{\sigma}_1^{(2)}=\overline{\sigma}_0^{(2)}=0$. 

In the following, assume $\overline{\sigma}_0^{(2)}>0$.
Since
$\mathbb{E}[(Y_1-Y_0-\mathbb{E}[Y_1-Y_0])(Y_0-\mathbb{E}[Y_0])]=\Cov(Y_1-Y_0,Y_0)=\lambda \sqrt{\overline{\mu}^{(2)}}\sqrt{\overline{\sigma}^{(2)}_0}$,
we have 
$\overline{\mu}^{(2)} =\overline{\sigma}^{(2)}_1-\overline{\sigma}^{(2)}_0-2 \lambda\sqrt{\overline{\mu}^{(2)}}\sqrt{\overline{\sigma}^{(2)}_0}$ .
In other words, $\sqrt{\overline{\mu}^{(2)}}$ is a solution to the equation
\begin{equation}
\label{eq:equation_of_var_in_x}
    x^2+2\lambda\sqrt{\overline{\sigma}_0^{(2)}}x-(\overline{\sigma}^{(2)}_1-\overline{\sigma}^{(2)}_0)=0.
\end{equation}

The solutions to this equation are given by $f_\pm(\lambda)$, which are defined in Lemma \ref{lem:Var_updown_by_Cor}. 
Since $\sqrt{\overline{\mu}^{(2)}}\geq 0$, at least one of $f_\pm(\lambda)$ is nonnegative. Hence  $\lambda$ satisfies $f_+(\lambda)\geq 0$. 
We remark that $f_\pm(-1)=\sqrt{\overline{\sigma}_0^{(2)}} \pm\sqrt{\overline{\sigma}_1^{(2)}}$ and $f_\pm(1)= -\sqrt{\overline{\sigma}_0^{(2)}} \pm\sqrt{\overline{\sigma}_1^{(2)}}$.

(1): If $\overline{\sigma}^{(2)}_1>\overline{\sigma}^{(2)}_0$, 
$f_-(\lambda)<0$ for all $\lambda\in [-1,1]$ since $f_-$ is decreasing (Lemma \ref{lem:Var_updown_by_Cor}) and $f_-(-1)<0$. 
Moreover, $f_+(\lambda)\geq 0$ for all $\lambda\in [-1,1]$ since $f_+$ is decreasing and $f_+(1)\geq 0$.
Therefore, $\lambda$ can take any value of $[-1,1]$ and  $\sqrt{\overline{\mu}^{(2)}}$ is equal to $f_+(\lambda)= -\sqrt{\overline{\sigma}^{(2)}_0}\lambda + \sqrt{\overline{\sigma}^{(2)}_1-(1-\lambda^2)\overline{\sigma}^{(2)}_0}$.

(2): Assume $\overline{\sigma}^{(2)}_1=\overline{\sigma}^{(2)}_0$. Then 
$f_\pm(\lambda)=\sqrt{\overline{\sigma}_0^{(2)}}(-\lambda\pm|\lambda|)$.
If $\lambda> 0$, we have $f_+(\lambda)=0,\,f_-(\lambda)<0$ and hence we get $\overline{\mu}^{(2)}=0$. However, $\Var(Y_1-Y_0)=0$ in this case, so by definition we have $\lambda=0$; hence $\lambda>0$ is infeasible.
If $\lambda=0$, we have $\overline{\mu}^{(2)}=0$.
If $\lambda<0$, $f_+(\lambda)=2\sqrt{\overline{\sigma}_0^{(2)}}|\lambda|$ and $f_-(\lambda)=0$. Since $\overline{\mu}^{(2)}=0$ contradicts with $\lambda<0$, we have $\sqrt{\overline{\mu}^{(2)}}=2\sqrt{\overline{\sigma}_0^{(2)}}|\lambda|$ .
By summarizing the above results, we have the claim.

(3): Assume $\overline{\sigma}^{(2)}_1<\overline{\sigma}^{(2)}_0$, and let $s\coloneqq \sqrt{1-\overline{\sigma}^{(2)}_1/\overline{\sigma}^{(2)}_0}$. Both $f_+$ and $f_-$ are defined in $[-1,-s]\cup [s,1]$, and we see
\begin{equation}
\begin{gathered}
\label{eq:value_of_f_pm(1)}
    f_-(-s)
    = \sqrt{\overline{\sigma}^{(2)}_0-\overline{\sigma}^{(2)}_1}  >0, \qquad  f_-(s) = -\sqrt{\overline{\sigma}^{(2)}_0-\overline{\sigma}^{(2)}_1} <0, \\
    f_+(-s) = \sqrt{\overline{\sigma}^{(2)}_0-\overline{\sigma}^{(2)}_1}>0, \qquad
    f_+(1)=\sqrt{\overline{\sigma}^{(2)}_1}-\sqrt{\overline{\sigma}^{(2)}_0}<0.
\end{gathered}
\end{equation}
By Lemma \ref{lem:Var_updown_by_Cor}, $f_\pm(\lambda)<0$ for all $\lambda\in [s,1]$, and $f_\pm(\lambda)>0$ for all $\lambda\in [-1,-s]$.
Therefore, the set of $\lambda$ compatible with $\overline{\sigma}^{(2)}_1<\overline{\sigma}^{(2)}_0$ is $[-1,-s]$, and for $\lambda$ in this range, $\sqrt{\overline{\mu}^{(2)}}$ can be equal to $f_+(\lambda)$ or $f_-(\lambda)$. Hence we have the claim. 
\end{proof}

\SecondfromS*

\begin{proof}
Let $f_\pm$ be as in Lemma \ref{lem:Var_updown_by_Cor}.
Note that $f_\pm$ are continuous in their domain.
To prove the inequality, it suffices to show
\begin{equation}
    \label{eq:bound_of_Root_of_var(lem)}
    \left|\sqrt{\overline{\sigma}^{(2)}_1}-\sqrt{\overline{\sigma}^{(2)}_0} \right|
    \leq \sqrt{\overline{\mu}^{(2)}}
    \leq   \sqrt{\overline{\sigma}^{(2)}_1}+\sqrt{\overline{\sigma}^{(2)}_0}.
\end{equation}

If $\sqrt{\overline{\sigma}^{(2)}_1}>\sqrt{\overline{\sigma}^{(2)}_0}$, we have $\sqrt{\overline{\mu}^{(2)}}\leq \max_\lambda f_+(\lambda)=f_+(-1)=\sqrt{\overline{\sigma}^{(2)}_1}+\sqrt{\overline{\sigma}^{(2)}_0}$
and $\sqrt{\overline{\mu}^{(2)}}\geq \min_\lambda f_+(\lambda)=f_+(1)=\sqrt{\overline{\sigma}^{(2)}_1}-\sqrt{\overline{\sigma}^{(2)}_0}$ by Lemma \ref{lem:Var_updown_by_Cor}. 

If $\sqrt{\overline{\sigma}^{(2)}_1}=\sqrt{\overline{\sigma}^{(2)}_0}$,
we see $0\leq\sqrt{\overline{\mu}^{(2)}}\leq 2\sqrt{\overline{\sigma}_0^{(2)}}$ as a consequence of \eqref{eq:case_sig1=sig0} since $\sqrt{\overline{\mu}^{(2)}}$ is nonnegative. 

If $\sqrt{\overline{\sigma}^{(2)}_1}<\sqrt{\overline{\sigma}^{(2)}_0}$,
we see  
$f_-(-1) =\sqrt{\overline{\sigma}^{(2)}_0}-\sqrt{\overline{\sigma}^{(2)}_1}$
and 
$f_+(-1) =\sqrt{\overline{\sigma}^{(2)}_0}+\sqrt{\overline{\sigma}^{(2)}_1}$.
Combining these with \eqref{eq:value_of_f_pm(1)}, we see that the range of $\sqrt{\overline{\mu}^{(2)}}$ is $\left[\sqrt{\overline{\sigma}^{(2)}_0}-\sqrt{\overline{\sigma}^{(2)}_1}, \sqrt{\overline{\sigma}^{(2)}_0-\overline{\sigma}^{(2)}_1}\right] \cup \left[\sqrt{\overline{\sigma}^{(2)}_0-\overline{\sigma}^{(2)}_1},\sqrt{\overline{\sigma}^{(2)}_0}+\sqrt{\overline{\sigma}^{(2)}_1}\right]
=\left[\sqrt{\overline{\sigma}^{(2)}_0}-\sqrt{\overline{\sigma}^{(2)}_1},\sqrt{\overline{\sigma}^{(2)}_0}+\sqrt{\overline{\sigma}^{(2)}_1}\right]$ by Lemma \ref{lem:Var_updown_by_Cor}, which implies \eqref{eq:bound_of_Root_of_var(lem)}.

Take a random variable $S$ with $\mathbb{P}(S=1)=\mathbb{P}(S=-1)=1/2$.
Set $\widetilde{Y}_0=\sqrt{\overline{\sigma}_0^{(2)}}S$ and
$\widetilde{Y}_1=\sqrt{\overline{\sigma}_1^{(2)}}S$. Then
$\overline{\mu}^{(2)}=(\sqrt{\overline{\sigma}_1^{(2)}}-\sqrt{\overline{\sigma}_0^{(2)}})^2$.
If we set $\widetilde{Y}_1=-\sqrt{\overline{\sigma}_1^{(2)}}S$ instead, then
$\overline{\mu}^{(2)}=(\sqrt{\overline{\sigma}_1^{(2)}}+\sqrt{\overline{\sigma}_0^{(2)}})^2$.
By Lemma \ref{lem:coupling_to_scm}, these distributions are realized by some SCMs, so the bound \eqref{eq:bound_of_var} is sharp.
\end{proof}

\VarSignBound*

\begin{proof}
Let $f_\pm$ be as in Lemma \ref{lem:Var_updown_by_Cor}.

\underline{\emph{Bounds.}}
(1):
If $\overline{\sigma}_1^{(2)}<\overline{\sigma}_0^{(2)}$, the proof of Theorem \ref{thm:sensitivity_analysis_of_var} implies $\lambda<0$. 
Hence $\overline{\sigma}_1^{(2)} \geq \overline{\sigma}_0^{(2)}$.
If $\overline{\sigma}_1^{(2)}>\overline{\sigma}_0^{(2)}$, by Lemma \ref{lem:Var_updown_by_Cor} we have $f_+(1)\leq \sqrt{\overline{\mu}^{(2)}}\leq f_+(0)$, which is equivalent to \eqref{eq:bound_of_var_lambda_ge0}. The case $\overline{\sigma}_1^{(2)}=\overline{\sigma}_0^{(2)}$ is Theorem \ref{thm:sensitivity_analysis_of_var}(2).

(2):
If $\overline{\sigma}_1^{(2)}>\overline{\sigma}_0^{(2)}$, by Lemma \ref{lem:Var_updown_by_Cor}, we have $f_+(0)\leq \sqrt{\overline{\mu}^{(2)}}\leq f_+(-1)$, which is equivalent to \eqref{eq:bound_of_var_lambda_le0}.
If $\overline{\sigma}_1^{(2)}\leq \overline{\sigma}_0^{(2)}$, we see \eqref{eq:bound_of_var} since $\lambda>0$ does not hold in this case. 


\underline{\emph{Sharpness.}}
Take random variables $S,T$ with $\mathbb{P}(S=1)=\mathbb{P}(S=-1)=\mathbb{P}(T=1)=\mathbb{P}(T=-1)=1/2$ and assume that $S$ and $T$ are independent.
Set $\widetilde{Y}_0\coloneqq \sqrt{\overline{\sigma}_0^{(2)}}S$.

Set $\widetilde{Y}_1\coloneqq \sqrt{\overline{\sigma}_1^{(2)}}S$.
Then $\Var(\widetilde{Y}_x)=\overline{\sigma}_x^{(2)}$ and $\overline{\mu}^{(2)}=(\sqrt{\overline{\sigma}_1^{(2)}}-\sqrt{\overline{\sigma}_0^{(2)}})^2$.
If $\overline{\sigma}_0^{(2)}>0$ and $\overline{\sigma}_1^{(2)}\neq\overline{\sigma}_0^{(2)}$, then $\lambda=1$ when $\overline{\sigma}_1^{(2)}>\overline{\sigma}_0^{(2)}$ and $\lambda=-1$ when $\overline{\sigma}_1^{(2)}<\overline{\sigma}_0^{(2)}$.
If $\Var(Y_1-Y_0)\Var(Y_0)=0$, then $\lambda=0$ by definition.

If $\overline{\sigma}_1^{(2)}\geq \overline{\sigma}_0^{(2)}$, set $\widetilde{Y}_1\coloneqq \sqrt{\overline{\sigma}_0^{(2)}}S+\sqrt{\overline{\sigma}_1^{(2)}-\overline{\sigma}_0^{(2)}}T$.
Then $\Var(\widetilde{Y}_x)=\overline{\sigma}_x^{(2)}$, $\overline{\mu}^{(2)}=\overline{\sigma}_1^{(2)}-\overline{\sigma}_0^{(2)}$, and $\lambda=0$.

Set $\widetilde{Y}_1\coloneqq -\sqrt{\overline{\sigma}_1^{(2)}}S$.
Then $\Var(\widetilde{Y}_x)=\overline{\sigma}_x^{(2)}$, $\overline{\mu}^{(2)}=(\sqrt{\overline{\sigma}_1^{(2)}}+\sqrt{\overline{\sigma}_0^{(2)}})^2$, and $\lambda=-1$ if $\overline{\sigma}_0^{(2)}>0$, while $\lambda=0$ if $\overline{\sigma}_0^{(2)}=0$ by definition.

Under $\lambda\geq 0$, the first and second constructions attain both bounds of \eqref{eq:bound_of_var_lambda_ge0}.
Under $\lambda\leq 0$, the second and third constructions attain both bounds of \eqref{eq:bound_of_var_lambda_le0}.
If $\overline{\sigma}_1^{(2)}\leq \overline{\sigma}_0^{(2)}$, the first and third constructions attain both bounds of \eqref{eq:bound_of_var} and satisfy $\lambda\leq 0$.
Lemma \ref{lem:coupling_to_scm} yields an SCM attaining each bound, hence we have the sharpness.

\end{proof}

\SecondfromT*

\begin{proof}
We have $0\leq \overline{\mu}^{(2)}\leq \infty$ by definition.

\underline{\emph{Sharpness (lower bound).}}
Fix $\varepsilon>0$.
Choose a random variable $B$ with $\mathbb{E}[B]=0$ and $\mathbb{E}[B^3]=2\overline{\sigma}_0^{(3)}$.
Take a constant $M_0>0$ such that $M_0^2>\mathbb{E}[B^2]$ and $|\delta|<\sqrt{\varepsilon}$, where $\delta \coloneqq \dfrac{2(\overline{\sigma}_1^{(3)}-\overline{\sigma}_0^{(3)})}{3(M_0^2-\mathbb{E}[B^2])}$.
Then define $\widetilde{Y}_0$ by
\begin{equation}
\widetilde{Y}_0 \coloneqq
\begin{cases}
B & \text{with probability\ } 1/2,\\
M_0 & \text{with probability\ } 1/4,\\
-M_0 & \text{with probability\ } 1/4
\end{cases}
\end{equation}
and define $\widetilde{Y}_1$ by
\begin{equation}
\widetilde{Y}_1 \coloneqq
\begin{cases}
\widetilde{Y}_0-\delta & \text{if } \widetilde{Y}_0=B,\\
\widetilde{Y}_0+\delta & \text{if } \widetilde{Y}_0\in\{M_0,-M_0\}
\end{cases}.
\end{equation}
Then $\mathbb{E}[\widetilde{Y}_0]=\mathbb{E}[\widetilde{\Delta}]=0$ (recall $\widetilde{\Delta}=\widetilde{Y}_1-\widetilde{Y}_0$), hence $\mathbb{E}[\widetilde{Y}_1]=0$.
Moreover $\widetilde{\Delta}^2\equiv \delta^2$ and $\mathbb{E}[\widetilde{\Delta}^3]=0$, so
\begin{equation}
\mathbb{E}[\widetilde{Y}_1^3]=\mathbb{E}[\widetilde{Y}_0^3]+3\mathbb{E}[\widetilde{Y}_0^2\widetilde{\Delta}].
\end{equation}
Since $\mathbb{E}[\widetilde{Y}_0^3]=\overline{\sigma}_0^{(3)}$ and
$\mathbb{E}[\widetilde{Y}_0^2\widetilde{\Delta}]
=-\frac{1}{2} \mathbb{E}[B^2]\delta+\frac{1}{2} M_0^2\delta
=\frac{\delta}{2}(M_0^2-\mathbb{E}[B^2])$,
we have $\mathbb{E}[\widetilde{Y}_1^3]=\overline{\sigma}_1^{(3)}$.
We also see $\overline{\mu}^{(2)}=\mathbb{E}[\widetilde{\Delta}^2]=\delta^2<\varepsilon$.
By Lemma \ref{lem:coupling_to_scm}, these are realized by an SCM.

\underline{\emph{Sharpness (upper bound).}}
Fix $M>0$.
Let $B_0,B_1$ be random variables with $\mathbb{E}[B_x]=0,\,\mathbb{E}[B_x^3]=2\overline{\sigma}_x^{(3)}$ ($x\in\{0,1\}$).
Take a constant $A>0$ and define $(\widetilde{Y}_1,\widetilde{Y}_0)$ by
\begin{equation}
(\widetilde{Y}_1,\widetilde{Y}_0)\coloneqq
\begin{cases}
(B_1,B_0) & \text{with probability\ } 1/2,\\
(A,-A) & \text{with probability\ } 1/4,\\
(-A,A) & \text{with probability\ } 1/4.
\end{cases}
\end{equation}
Then $\mathbb{E}[\widetilde{Y}_1^3]=\overline{\sigma}_1^{(3)}$ and $\mathbb{E}[\widetilde{Y}_0^3]=\overline{\sigma}_0^{(3)}$.
We have $\widetilde{\Delta}=\widetilde{Y}_1-\widetilde{Y}_0=\pm 2A$ when $(\widetilde{Y}_1,\widetilde{Y}_0)=(\pm A,\mp A)$, and we have
$\overline{\mu}^{(2)}=\mathbb{E}[\widetilde{\Delta}^2]\geq \frac12 (2A)^2=2A^2$.
Taking $A>\sqrt{M/2}$ yields $\overline{\mu}^{(2)}>M$.
By Lemma \ref{lem:coupling_to_scm}, these are realized by an SCM.
\end{proof}

\SecondfromF*

\begin{proof}
\underline{\emph{Bounds.}}
By the Cauchy--Schwarz inequality, for each $x\in\{0,1\}$ we have $\mathbb{E}[\widetilde{Y}_x^2]\leq \sqrt{\mathbb{E}[\widetilde{Y}_x^4]}\sqrt{1^2}=\sqrt{\overline{\sigma}_x^{(4)}}$.
Hence $\|\widetilde{Y}_x\|_2\leq \sqrt[4]{\overline{\sigma}_x^{(4)}}$.
By Minkowski's inequality in $L^2$,
$\|\widetilde{Y}_1-\widetilde{Y}_0\|_2 \leq \|\widetilde{Y}_1\|_2+\|\widetilde{Y}_0\|_2 \leq \sqrt[4]{(\overline{\sigma}_1^{(4)})}+\sqrt[4]{\overline{\sigma}_0^{(4)}}$.
Hence 
$\overline{\mu}^{(2)} \leq (\sqrt[4]{\overline{\sigma}_1^{(4)}}+\sqrt[4]{\overline{\sigma}_0^{(4)}})^2$.
The lower bound $0\leq \overline{\mu}^{(2)}$ follows from the definition.

\underline{\emph{Sharpness (lower bound).}}
Fix $\varepsilon>0$.
Choose $p_x\in(0,1)$ ($x\in\{0,1\}$) with
$\sqrt{p_1\overline{\sigma}_1^{(4)}}+\sqrt{p_0\overline{\sigma}_0^{(4)}}< \varepsilon$.
Let $\widetilde{Y}_1$ and $\widetilde{Y}_0$ be independent random variables which satisfy
\begin{equation}
\mathbb{P}(\widetilde{Y}_x=\pm a_x)=p_x/2,
\qquad
\mathbb{P}(\widetilde{Y}_x=0)=1-p_x,
\qquad\text{ where } \quad 
a_x\coloneqq \sqrt[4]{\overline{\sigma}_x^{(4)}/p_x}
\quad (x\in\{0,1\}).
\end{equation}
Then $\mathbb{E}[\widetilde{Y}_x]=0$ and $\mathbb{E}[\widetilde{Y}_x^4]=\overline{\sigma}_x^{(4)}$ hold for $x=0,1$.
Moreover, by independence and $\mathbb{E}[\widetilde{Y}_x]=0$,
\begin{equation}
\overline{\mu}^{(2)}=\mathbb{E}[(\widetilde{Y}_1-\widetilde{Y}_0)^2]
=\mathbb{E}[\widetilde{Y}_1^2]+\mathbb{E}[\widetilde{Y}_0^2]
=\sqrt{p_1\overline{\sigma}_1^{(4)}}+\sqrt{p_0\overline{\sigma}_0^{(4)}}
< \varepsilon.
\end{equation}
By Lemma \ref{lem:coupling_to_scm}, these are realized by an SCM.

\underline{\emph{Sharpness (upper bound).}}
We show that the upper bound is attainable. Take a random variable $S$ with
$\mathbb{P}(S=1)=\mathbb{P}(S=-1)=1/2$, and set
\begin{equation}
\widetilde{Y}_1\coloneqq \sqrt[4]{\overline{\sigma}_1^{(4)}} S,
\qquad
\widetilde{Y}_0\coloneqq -\sqrt[4]{\overline{\sigma}_0^{(4)}} S.
\end{equation}
Then $\mathbb{E}[\widetilde{Y}_x^4]=\overline{\sigma}_x^{(4)}$ and $\overline{\mu}^{(2)}=(\sqrt[4]{\overline{\sigma}_1^{(4)}}+\sqrt[4]{\overline{\sigma}_0^{(4)}})^2$.
By Lemma \ref{lem:coupling_to_scm}, these are realized by an SCM.
\end{proof}

\SecondfromComb*


\begin{proof}
(1): The inequality follows from Theorem \ref{thm:bound_of_var}.
In the following, we show the sharpness.

If $\overline{\sigma}_1^{(2)}\overline{\sigma}_0^{(2)}=0$, then the lower and upper endpoints in \eqref{eq:mu2_from_23} coincide, and the sharpness is clear.

Fix $\varepsilon>0$ and assume $\overline{\sigma}_1^{(2)}\overline{\sigma}_0^{(2)}>0$.
By Lemma~\ref{lem:two_point_23}, choose a random variable $Z_0$ with $\mathbb{E}[Z_0]=0$, $\mathbb{E}[Z_0^2]=1$ and $\mathbb{E}[Z_0^3]=\overline{\sigma}_0^{(3)}/(\overline{\sigma}_0^{(2)})^{3/2}$.
Set $\widetilde{Y}_0\coloneqq \sqrt{\overline{\sigma}_0^{(2)}}Z_0$.
Pick $\delta\in(0,1)$ with $2\delta\sqrt{\overline{\sigma}_1^{(2)}\overline{\sigma}_0^{(2)}} < \varepsilon$ and set $\alpha\coloneqq \sqrt{\overline{\sigma}_1^{(2)}}(1-\delta)$.

\underline{\emph{Sharpness (lower bound).}}
Take a random variable $W$ independent of $Z_0$ such that
$\mathbb{E}[W]=0$, $\mathbb{E}[W^2]=\overline{\sigma}_1^{(2)}-\alpha^2$ and
$\mathbb{E}[W^3]=\overline{\sigma}_1^{(3)}-\alpha^3\mathbb{E}[Z_0^3]$
(Lemma~\ref{lem:two_point_23}).
Let $\widetilde{Y}_1\coloneqq \alpha Z_0+W$.
Then $\mathbb{E}[\widetilde{Y}_1^2]=\overline{\sigma}_1^{(2)}$ and $\mathbb{E}[\widetilde{Y}_1^3]=\overline{\sigma}_1^{(3)}$.
Moreover, $\mathbb{E}[\widetilde{Y}_1\widetilde{Y}_0]=\alpha\sqrt{\overline{\sigma}_0^{(2)}}$, so
\begin{equation}
\overline{\mu}^{(2)}
=\overline{\sigma}_1^{(2)}+\overline{\sigma}_0^{(2)}-2\alpha\sqrt{\overline{\sigma}_0^{(2)}}
=(\sqrt{\overline{\sigma}_1^{(2)}}-\sqrt{\overline{\sigma}_0^{(2)}})^2+2\delta\sqrt{\overline{\sigma}_1^{(2)}\overline{\sigma}_0^{(2)}}.
\end{equation}
By the choice of $\delta$, we obtain $\overline{\mu}^{(2)} < (\sqrt{\overline{\sigma}_1^{(2)}}-\sqrt{\overline{\sigma}_0^{(2)}})^2+\varepsilon$.
By Lemma \ref{lem:coupling_to_scm}, these are realized by an SCM.

\underline{\emph{Sharpness (upper bound).}}
Take a random variable $W'$ independent of $Z_0$ such that
$\mathbb{E}[W']=0$, $\mathbb{E}[W'^2]=\overline{\sigma}_1^{(2)}-\alpha^2$ and
$\mathbb{E}[W'^3]=\overline{\sigma}_1^{(3)}+\alpha^3\mathbb{E}[Z_0^3]$
(Lemma~\ref{lem:two_point_23}).
Let $\widetilde{Y}_1\coloneqq -\alpha Z_0+W'$.
Then $\mathbb{E}[\widetilde{Y}_1^2]=\overline{\sigma}_1^{(2)}$ and $\mathbb{E}[\widetilde{Y}_1^3]=\overline{\sigma}_1^{(3)}$.
Moreover, $\mathbb{E}[\widetilde{Y}_1\widetilde{Y}_0]=-\alpha\sqrt{\overline{\sigma}_0^{(2)}}$, so
\begin{equation}
\overline{\mu}^{(2)}
=\overline{\sigma}_1^{(2)}+\overline{\sigma}_0^{(2)}+2\alpha\sqrt{\overline{\sigma}_0^{(2)}}
=(\sqrt{\overline{\sigma}_1^{(2)}}+\sqrt{\overline{\sigma}_0^{(2)}})^2-2\delta\sqrt{\overline{\sigma}_1^{(2)}\overline{\sigma}_0^{(2)}}.
\end{equation}
By the choice of $\delta$, we obtain $\overline{\mu}^{(2)} > (\sqrt{\overline{\sigma}_1^{(2)}}+\sqrt{\overline{\sigma}_0^{(2)}})^2-\varepsilon$.
By Lemma \ref{lem:coupling_to_scm}, these are realized by an SCM.


(2): The inequality follows from Theorem~\ref{thm:bound_of_var}: note that we have 
$\left(\sqrt{\overline{\sigma}_1^{(2)}}+\sqrt{\overline{\sigma}_0^{(2)}}\right)^2 \leq \left(\sqrt[4]{\overline{\sigma}_1^{(4)}}+\sqrt[4]{\overline{\sigma}_0^{(4)}}\right)^2$
since $\overline{\sigma}_x^{(2)}\leq \sqrt{\overline{\sigma}_x^{(4)}}$ holds by the Cauchy-Schwarz inequality.

\underline{\emph{Sharpness.}}
If $\overline{\sigma}_1^{(2)}\overline{\sigma}_0^{(2)}=0$, then the lower and upper endpoints in \eqref{eq:mu2_from_24} coincide, and $\overline{\mu}^{(2)}$ is forced to be this common endpoint. 
Thus  \eqref{eq:mu2_from_24} is sharp.

Fix $\varepsilon>0$ and assume $\overline{\sigma}_0^{(2)}\overline{\sigma}_1^{(2)}>0$.
Choose $\eta\in(0,1)$ such that $2\eta\sqrt{\overline{\sigma}_1^{(2)}\overline{\sigma}_0^{(2)}} < \varepsilon$.
Let $S$ be a random variable with $\mathbb{P}(S=1)=\mathbb{P}(S=-1)=1/2$.
For $x\in\{0,1\}$, define
\begin{equation}
m_x\coloneqq \frac{\overline{\sigma}_x^{(4)}-(1-\eta)(\overline{\sigma}_x^{(2)})^2}{\eta}.
\end{equation}
Let $U_x$ be a random variable with $\mathbb{P}\Big(U_x=\pm \sqrt{m_x/\overline{\sigma}_x^{(2)}}\Big)=(\overline{\sigma}_x^{(2)})^2\big/{2m_x}$ and $\mathbb{P}(U_x=0)=1-\big((\overline{\sigma}_x^{(2)})^2\big/{m_x}\big)$.
Then $\mathbb{E}[U_x]=0$, $\mathbb{E}[U_x^2]=\overline{\sigma}_x^{(2)}$, and $\mathbb{E}[U_x^4]=m_x$.
Assume $U_1$ and $U_0$ are independent of each other and independent of $S$.
For $s\in\{1,-1\}$, define
\begin{equation}
(\widetilde{Y}_1,\widetilde{Y}_0)=
\begin{cases}
(\sqrt{\overline{\sigma}_1^{(2)}}sS,\sqrt{\overline{\sigma}_0^{(2)}}S) & \text{with probability } 1-\eta,\\
(U_1,U_0) & \text{with probability } \eta.
\end{cases}
\end{equation}
Then $\mathbb{E}[\widetilde{Y}_x^2]=\overline{\sigma}_x^{(2)}$ and $\mathbb{E}[\widetilde{Y}_x^4]=\overline{\sigma}_x^{(4)}$, and $\mathbb{E}[\widetilde{Y}_1\widetilde{Y}_0]=(1-\eta)s\sqrt{\overline{\sigma}_1^{(2)}\overline{\sigma}_0^{(2)}}$.
Hence
\begin{equation}
\overline{\mu}^{(2)}=\overline{\sigma}_1^{(2)}+\overline{\sigma}_0^{(2)}-2(1-\eta)s\sqrt{\overline{\sigma}_1^{(2)}\overline{\sigma}_0^{(2)}}.
\end{equation}
For $s=1$ this is at most $(\sqrt{\overline{\sigma}_1^{(2)}}-\sqrt{\overline{\sigma}_0^{(2)}})^2+\varepsilon$, and for $s=-1$ it is at least $(\sqrt{\overline{\sigma}_1^{(2)}}+\sqrt{\overline{\sigma}_0^{(2)}})^2-\varepsilon$.
By Lemma \ref{lem:coupling_to_scm}, these are realized by an SCM.
Thus we have the sharpness since we can take $\varepsilon$ arbitrarily small.

(3): This follows from Theorem \ref{thm:mu2_from_4}.

(4): This follows by intersecting Theorems~\ref{thm:bound_of_var}--\ref{thm:mu2_from_4}.
By the proof of (2), this intersection coincides with the bound in Theorem~\ref{thm:bound_of_var}.
\end{proof}

\ThirdfromS*

\begin{proof}
(1): If $\overline{\sigma}_0^{(2)}=\overline{\sigma}_1^{(2)}=0$, then $\widetilde{Y}_0=\widetilde{Y}_1=0$ almost surely, and hence $\overline{\mu}^{(3)}=0$.

(2): By swapping the roles of $0$ and $1$ if necessary, we assume $\overline{\sigma}_1^{(2)}>0$.

Fix $M\in\mathbb{R}$.
By Lemma~\ref{lem:two_point_23}, choose $\widetilde{Y}_1$ such that $\mathbb{E}[\widetilde{Y}_1]=0$, $\mathbb{E}[\widetilde{Y}_1^2]=\overline{\sigma}_1^{(2)}$ and $\mathbb{E}[\widetilde{Y}_1^3]=M$.
Take a random variable $S$ with $\mathbb{P}(S=1)=\mathbb{P}(S=-1)=1/2$, independent of $\widetilde{Y}_1$, and set $\widetilde{Y}_0\coloneqq \sqrt{\overline{\sigma}_0^{(2)}}S$.
Then $\mathbb{E}[\widetilde{Y}_0]=0$, $\mathbb{E}[\widetilde{Y}_0^2]=\overline{\sigma}_0^{(2)}$ and $\mathbb{E}[\widetilde{Y}_0^3]=0$.
By independence and $\mathbb{E}[\widetilde{Y}_1]=\mathbb{E}[\widetilde{Y}_0]=0$,
\begin{equation}
\overline{\mu}^{(3)}
=\mathbb{E}[(\widetilde{Y}_1-\widetilde{Y}_0)^3]
=\mathbb{E}[\widetilde{Y}_1^3]
=M.
\end{equation}
Since $M$ is arbitrary, $\overline{\mu}^{(3)}$ can take any value in $\mathbb{R}$.
By Lemma~\ref{lem:coupling_to_scm}, these $(\widetilde{Y}_1,\widetilde{Y}_0)$ can be realized by some SCM.
Then we have the claim.
\end{proof}

\ThirdfromT*

\begin{proof}
Fix $M>0$.
Using Lemma~\ref{lem:two_point_23}, choose $W_1,W_0$ such that
$\mathbb{E}[W_x]=0$, $\mathbb{E}[W_x^2]=1$ and $\mathbb{E}[W_x^3]=2\overline{\sigma}_x^{(3)}$ for each $x\in\{0,1\}$,
and independent of each other.

Fix $c\in\mathbb{R}$.
Let $J$ be a random variable such that
$\mathbb{P}(J=0)=1/2$, 
$\mathbb{P}(J=j)=1/6$ $(j=1,2,3)$,
and independent of $(W_1,W_0)$.
We define
\begin{equation}
(\widetilde{Y}_1,\widetilde{Y}_0)\coloneqq
\begin{cases}
(W_1,W_0) & \text{if } J=0,\\
(c,-c) & \text{if } J=1,\\
(-c,0) & \text{if } J=2,\\
(0,c) & \text{if } J=3.
\end{cases}
\end{equation}
Then for each $x\in\{0,1\}$ we have $\mathbb{E}[\widetilde{Y}_x^3]=\overline{\sigma}_x^{(3)}$.
Moreover, since $W_1$ and $W_0$ are independent and $\mathbb{E}[W_1]=\mathbb{E}[W_0]=0$,
\begin{equation}
\mathbb{E}[(W_1-W_0)^3]=\mathbb{E}[W_1^3]-\mathbb{E}[W_0^3]=2(\overline{\sigma}_1^{(3)}-\overline{\sigma}_0^{(3)}).
\end{equation}
Therefore,
\begin{equation}
\overline{\mu}^{(3)}
=\mathbb{E}[(\widetilde{Y}_1-\widetilde{Y}_0)^3]
=(\overline{\sigma}_1^{(3)}-\overline{\sigma}_0^{(3)})+c^3.
\end{equation}
Taking $c$ with $|c|$ large yields $\overline{\mu}^{(3)}>M$ and $\overline{\mu}^{(3)}<-M$.
By Lemma~\ref{lem:coupling_to_scm}, these are realized by some SCMs.
\end{proof}

\ThirdfromF*

\begin{proof}
By Lemma~\ref{lem:sharp_third_from_fourth_newextra} applied to $\widetilde{Y}_1-\widetilde{Y}_0$ and Minkowski's inequality,
\begin{equation}
\bigl|\overline{\mu}^{(3)}\bigr| 
= \bigl|\mathbb{E}[(\widetilde{Y}_1-\widetilde{Y}_0)^3]\bigr|
\leq \frac{\sqrt{2}}{\sqrt[4]{27}}\bigl\|\widetilde{Y}_1-\widetilde{Y}_0\bigr\|_4^3
\leq \frac{\sqrt{2}}{\sqrt[4]{27}}\bigl(\|\widetilde{Y}_1\|_4+\|\widetilde{Y}_0\|_4\bigr)^3
= \frac{\sqrt{2}}{\sqrt[4]{27}} \Bigl(\sqrt[4]{\overline{\sigma}_1^{(4)}}+\sqrt[4]{\overline{\sigma}_0^{(4)}}\Bigr)^3.
\end{equation}

\underline{\emph{Sharpness.}}
Set $Z_0$ as in Lemma~\ref{lem:sharp_third_from_fourth_newextra}, then taking
$\widetilde{Y}_1=\sqrt[4]{\overline{\sigma}_1^{(4)}}\,Z_0$ and
$\widetilde{Y}_0=-\sqrt[4]{\overline{\sigma}_0^{(4)}}\,Z_0$, we have 
$\overline{\mu}^{(3)} 
=\frac{\sqrt{2}}{\sqrt[4]{27}} \Bigl(\sqrt[4]{\overline{\sigma}_1^{(4)}}+\sqrt[4]{\overline{\sigma}_0^{(4)}}\Bigr)^3$.
Replacing $Z_0$ by $-Z_0$ attains the lower bound.
By Lemma~\ref{lem:coupling_to_scm}, these can be realized by an SCM.
\end{proof}

\ThirdfromComb*

\begin{proof}
(1)-(ii): If $\overline{\sigma}_x^{(2)}=0$ for some $x\in\{0,1\}$, then $\widetilde{Y}_x=0$ almost surely and thus $\overline{\sigma}_x^{(3)}=0$.
If $x=1$, then $\widetilde{\Delta}=-\widetilde{Y}_0$ and $\overline{\mu}^{(3)}=-\overline{\sigma}_0^{(3)}$.
If $x=0$, then $\widetilde{\Delta}=\widetilde{Y}_1$ and $\overline{\mu}^{(3)}=\overline{\sigma}_1^{(3)}$.

(1)-(i): Assume $\overline{\sigma}_0^{(2)}\overline{\sigma}_1^{(2)}>0$ and fix $M>0$.
Choose $\eta\in(0,\min\{\overline{\sigma}_0^{(2)},\overline{\sigma}_1^{(2)}\})$.

\underline{\emph{Sharpness (upper bound).}}
Set $a>0$ with $p\coloneqq 2\eta/(3a^2)<1/3$ and $a>{\big(M-(\overline{\sigma}_1^{(3)}-\overline{\sigma}_0^{(3)})\big)}\big/{3\eta}$.
Take independent random variables $Z_1,Z_0$ with
\begin{equation}
\mathbb{E}[Z_1]=\mathbb{E}[Z_0]=0, \qquad
\mathbb{E}[Z_x^2]=\frac{\overline{\sigma}_x^{(2)}-\eta}{1-3p}\quad(x\in \{0,1\}),\quad
\mathbb{E}[Z_1^3]=\frac{\overline{\sigma}_1^{(3)}-\eta a/2}{1-3p},\qquad
\mathbb{E}[Z_0^3]=\frac{\overline{\sigma}_0^{(3)}+\eta a/2}{1-3p},
\end{equation}
which is possible by Lemma~\ref{lem:two_point_23}.
Define $(\widetilde{Y}_1,\widetilde{Y}_0)$ by
\begin{equation}
(\widetilde{Y}_1,\widetilde{Y}_0)=
\begin{cases}
(Z_1,Z_0) & \text{with probability } 1-3p,\\
(a,-a) & \text{with probability } p,\\
(-a/2,a/2) & \text{with probability } 2p.
\end{cases}
\end{equation}
Then $\mathbb{E}[\widetilde{Y}_x^2]=\overline{\sigma}_x^{(2)}$ and $\mathbb{E}[\widetilde{Y}_x^3]=\overline{\sigma}_x^{(3)}$ for $x\in\{0,1\}$.
Moreover, since $Z_1,Z_0$ are independent and $\mathbb{E}[Z_1]=\mathbb{E}[Z_0]=0$,
\begin{equation}
\overline{\mu}^{(3)}
=\mathbb{E}[(\widetilde{Y}_1-\widetilde{Y}_0)^3]
=(1-3p)(\mathbb{E}[Z_1^3]-\mathbb{E}[Z_0^3])+p(2a)^3+2p(-a)^3
=\overline{\sigma}_1^{(3)}-\overline{\sigma}_0^{(3)}+3\eta a
>M.
\end{equation}
Lemma~\ref{lem:coupling_to_scm} implies the existence of an SCM realizing this.

\underline{\emph{Sharpness (lower bound).}}
Set $a>0$ with $p\coloneqq \eta/(3a^2)<2/3$ and $a>\big(M+(\overline{\sigma}_1^{(3)}-\overline{\sigma}_0^{(3)})\big)\big/6\eta$.
Take independent random variables $Z_1,Z_0$ with
\begin{equation}
\mathbb{E}[Z_1]=\mathbb{E}[Z_0]=0, \qquad
\mathbb{E}[Z_x^2]=\frac{\overline{\sigma}_x^{(2)}-\eta}{1-3p/2}\quad(x\in \{0,1\}),\quad
\mathbb{E}[Z_1^3]=\frac{\overline{\sigma}_1^{(3)}+\eta a}{1-3p/2},\qquad
\mathbb{E}[Z_0^3]=\frac{\overline{\sigma}_0^{(3)}-\eta a}{1-3p/2},
\end{equation}
which is possible by Lemma~\ref{lem:two_point_23}.
Define $(\widetilde{Y}_1,\widetilde{Y}_0)$ by
\begin{equation}
(\widetilde{Y}_1,\widetilde{Y}_0)=
\begin{cases}
(Z_1,Z_0) & \text{with probability } 1-3p/2,\\
(a,-a) & \text{with probability } p,\\
(-2a,2a) & \text{with probability } p/2.
\end{cases}
\end{equation}
Then $\mathbb{E}[\widetilde{Y}_x^2]=\overline{\sigma}_x^{(2)}$ and $\mathbb{E}[\widetilde{Y}_x^3]=\overline{\sigma}_x^{(3)}$ for $x\in\{0,1\}$.
Moreover, since $Z_1,Z_0$ are independent and $\mathbb{E}[Z_1]=\mathbb{E}[Z_0]=0$,
\begin{equation}
\overline{\mu}^{(3)}
=\mathbb{E}[(\widetilde{Y}_1-\widetilde{Y}_0)^3]
=(1-3p/2)(\mathbb{E}[Z_1^3]-\mathbb{E}[Z_0^3])+p(2a)^3+(p/2)(-4a)^3
=\overline{\sigma}_1^{(3)}-\overline{\sigma}_0^{(3)}-6\eta a
<-M.
\end{equation}
Lemma~\ref{lem:coupling_to_scm} implies the existence of an SCM realizing this.

(2) follows from Theorem~\ref{thm:mu3_from_4}.
\end{proof}

\FourthfromS*

\begin{proof}
(1): If $\overline{\sigma}_1^{(2)}=\overline{\sigma}_0^{(2)}=0$, we have $\widetilde{Y}_1=0$ almost surely and $\widetilde{Y}_0=0$ almost surely.
Therefore we have $\overline{\mu}^{(4)}=\mathbb{E}[(\widetilde{Y}_1-\widetilde{Y}_0)^4]=0$.

(2):
By the Cauchy--Schwarz inequality, $\overline{\mu}^{(4)}=\mathbb{E}[\widetilde{\Delta}^4]\geq (\mathbb{E}[\widetilde{\Delta}^2])^2$.
Let $\|Z\|_2 \coloneqq (\mathbb{E}[Z^2])^{1/2}$.
By the triangle inequality in $L^2$, $\|\widetilde{\Delta}\|_2 \geq |\|\widetilde{Y}_1\|_2-\|\widetilde{Y}_0\|_2|$.
Since $\|\widetilde{Y}_x\|_2=\sqrt{\overline{\sigma}_x^{(2)}}$, we have $\overline{\mu}^{(4)}\geq \Big(\sqrt{\overline{\sigma}_1^{(2)}}-\sqrt{\overline{\sigma}_0^{(2)}}\Big)^4$.

\underline{\emph{Sharpness (lower bound).}}
Take a random variable $S\in\{-1,1\}$ with $\mathbb{P}(S=1)=\mathbb{P}(S=-1)=1/2$ and set $\widetilde{Y}_x \coloneqq \sqrt{\overline{\sigma}_x^{(2)}}S$.
Then $\mathbb{E}[\widetilde{Y}_x]=0$, $\mathbb{E}[\widetilde{Y}_x^2]=\overline{\sigma}_x^{(2)}$ and $\widetilde{\Delta}=(\sqrt{\overline{\sigma}_1^{(2)}}-\sqrt{\overline{\sigma}_0^{(2)}})S$, so the lower bound is attained.

\underline{\emph{Sharpness (upper bound).}}
Fix $M>0$.
We assume $\overline{\sigma}_1^{(2)}>0$, by exchanging the roles of $0$ and $1$ if necessary.
Choose a constant $A>\sqrt{\overline{\sigma}_1^{(2)}}$ such that $(\overline{\sigma}_1^{(2)}/A^2)(A-\sqrt{\overline{\sigma}_0^{(2)}})^4>M$.

Let $I$ be a random variable with $\mathbb{P}(I=1)=\overline{\sigma}_1^{(2)}\big/{A^2}$ and $\mathbb{P}(I=0)=1-(\overline{\sigma}_1^{(2)}\big/A^2)$,
and let $S$ be a random variable with $\mathbb{P}(S=1)=\mathbb{P}(S=-1)=1/2$, independent of $I$.
Define $\widetilde{Y}_1 \coloneqq IAS$ and $\widetilde{Y}_0 \coloneqq \sqrt{\overline{\sigma}_0^{(2)}}S$.
Then $\mathbb{E}[\widetilde{Y}_x^2]=\overline{\sigma}_x^{(2)}$ ($x\in\{0,1\}$) and
$\overline{\mu}^{(4)}=\mathbb{E}[(\widetilde{Y}_1-\widetilde{Y}_0)^4]\geq \mathbb{P}(I=1)(A-\sqrt{\overline{\sigma}_0^{(2)}})^4>M$.
By Lemma~\ref{lem:coupling_to_scm}, these can be realized by an SCM.
\end{proof}

\FourthfromT*

\begin{proof}
The inequality is trivial.

\underline{\emph{Sharpness (lower bound).}}
Fix $\varepsilon>0$.
Choose a random variable $B$ with $\mathbb{E}[B]=0$ and $\mathbb{E}[B^3]=2\overline{\sigma}_0^{(3)}$.
Take a constant $M_0>0$ such that $M_0^2>\mathbb{E}[B^2]$ and $|\delta|<\sqrt[4]{\varepsilon}$, where $\delta \coloneqq \dfrac{2(\overline{\sigma}_1^{(3)}-\overline{\sigma}_0^{(3)})}{3(M_0^2-\mathbb{E}[B^2])}$.
Then define $\widetilde{Y}_0$ by
\begin{equation}
\widetilde{Y}_0 \coloneqq
\begin{cases}
B & \text{with probability\ } 1/2,\\
M_0 & \text{with probability\ } 1/4,\\
-M_0 & \text{with probability\ } 1/4
\end{cases}
\end{equation}
and define $\widetilde{Y}_1$ by
\begin{equation}
\widetilde{Y}_1 \coloneqq
\begin{cases}
\widetilde{Y}_0-\delta & \text{if } \widetilde{Y}_0=B,\\
\widetilde{Y}_0+\delta & \text{if } \widetilde{Y}_0\in\{M_0,-M_0\}
\end{cases}.
\end{equation}
Then $\mathbb{E}[\widetilde{Y}_0]=\mathbb{E}[\widetilde{\Delta}]=0$, hence $\mathbb{E}[\widetilde{Y}_1]=0$.
Moreover $\widetilde{\Delta}^2\equiv \delta^2$ and $\mathbb{E}[\widetilde{\Delta}^3]=0$, so
\begin{equation}
\mathbb{E}[\widetilde{Y}_1^3]=\mathbb{E}[\widetilde{Y}_0^3]+3\mathbb{E}[\widetilde{Y}_0^2\widetilde{\Delta}].
\end{equation}
Since $\mathbb{E}[\widetilde{Y}_0^3]=\overline{\sigma}_0^{(3)}$ and
$\mathbb{E}[\widetilde{Y}_0^2\widetilde{\Delta}]
=-\frac{1}{2} \mathbb{E}[B^2]\delta+\frac{1}{2} M_0^2\delta
=\frac{\delta}{2}(M_0^2-\mathbb{E}[B^2])$,
we have $\mathbb{E}[\widetilde{Y}_1^3]=\overline{\sigma}_1^{(3)}$.
We also see $\overline{\mu}^{(4)}=\mathbb{E}[\widetilde{\Delta}^4]=\delta^4<\varepsilon$.

\underline{\emph{Sharpness (upper bound).}}
Fix $M>0$.
Let $B_0,B_1$ be random variables with $\mathbb{E}[B_x]=0,\,\mathbb{E}[B_x^3]=2\overline{\sigma}_x^{(3)}$ ($x\in\{0,1\}$), independent of each other.
Take a constant $A>0$ and define $(\widetilde{Y}_1,\widetilde{Y}_0)$ by
\begin{equation}
(\widetilde{Y}_1,\widetilde{Y}_0)\coloneqq
\begin{cases}
(B_1,B_0) & \text{with probability\ } 1/2,\\
(A,-A) & \text{with probability\ } 1/4,\\
(-A,A) & \text{with probability\ } 1/4.
\end{cases}
\end{equation}
Then $\mathbb{E}[\widetilde{Y}_1^3]=\overline{\sigma}_1^{(3)}$ and $\mathbb{E}[\widetilde{Y}_0^3]=\overline{\sigma}_0^{(3)}$.
We have $\widetilde{\Delta}=\widetilde{Y}_1-\widetilde{Y}_0=\pm 2A$ when $(\widetilde{Y}_1,\widetilde{Y}_0)=(\pm A,\mp A)$, and we have
$\overline{\mu}^{(4)}=\mathbb{E}[\widetilde{\Delta}^4]\geq \frac{1}{2} (2A)^4=8A^4$.
We obtain $\overline{\mu}^{(4)}>M$ by taking $A>\sqrt[4]{M/8}$.

By Lemma~\ref{lem:coupling_to_scm}, these distributions can be realized by SCMs. Therefore we have the sharpness.
\end{proof}

\FourthfromF*

\begin{proof}
Recall that $\widetilde{\Delta}=\widetilde{Y}_1-\widetilde{Y}_0$.
Since $4$ is even, we have
\begin{equation}
\|\widetilde{Y}_x\|_4=\bigl(\mathbb{E}[|\widetilde{Y}_x|^4]\bigr)^{1/4}=\sqrt[4]{\overline{\sigma}_x^{(4)}}
\quad (x\in\{0,1\}),
\qquad
\|\widetilde{\Delta}\|_4^4=\mathbb{E}[|\widetilde{\Delta}|^4]=\overline{\mu}^{(4)}.
\end{equation}
By Minkowski's inequality in $L^4$-space,
\begin{equation}
\|\widetilde{\Delta}\|_4\leq \|\widetilde{Y}_1\|_4+\|\widetilde{Y}_0\|_4,
\qquad
\|\widetilde{Y}_1\|_4\leq \|\widetilde{\Delta}\|_4+\|\widetilde{Y}_0\|_4,
\quad
\text{and}
\quad
\|\widetilde{Y}_0\|_4\leq \|\widetilde{Y}_1\|_4+\|\widetilde{\Delta}\|_4
\end{equation}
hold.
Thus we have 
$\bigl|\|\widetilde{Y}_1\|_4-\|\widetilde{Y}_0\|_4\bigr|\le \|\widetilde{\Delta}\|_4\le \|\widetilde{Y}_1\|_4+\|\widetilde{Y}_0\|_4$.
Raising both sides to the fourth power yields \eqref{eq:mu4_from_4}.

\underline{\emph{Sharpness.}}
Take a random variable $S\in\{-1,1\}$ with $\mathbb{P}(S=1)=\mathbb{P}(S=-1)=1/2$.
Setting $\widetilde{Y}_1 \coloneqq \sqrt[4]{\overline{\sigma}_1^{(4)}}S$ and $\widetilde{Y}_0 \coloneqq -\sqrt[4]{\overline{\sigma}_0^{(4)}}S$ attains the upper bound.
Setting $\widetilde{Y}_1 \coloneqq \sqrt[4]{\overline{\sigma}_1^{(4)}}S$ and $\widetilde{Y}_0 \coloneqq \sqrt[4]{\overline{\sigma}_0^{(4)}}S$ attains the lower bound.
By Lemma~\ref{lem:coupling_to_scm}, these can be realized by an SCM.
\end{proof}

\FourthfromComb*

\begin{proof} 
(1)-(i) follows from Theorem \ref{thm:mu4_from_2}(1).

(1)-(ii): By the Cauchy--Schwarz inequality, $\mathbb{E}[\widetilde{\Delta}^4]\geq \mathbb{E}[\widetilde{\Delta}^2]^2$, hence $\overline{\mu}^{(4)}\geq (\overline{\mu}^{(2)})^2$.
Then Theorem~\ref{thm:bound_of_var} gives $\overline{\mu}^{(2)}\geq (\sqrt{\overline{\sigma}_1^{(2)}}-\sqrt{\overline{\sigma}_0^{(2)}})^2$.

\underline{\emph{Sharpness (lower bound).}}
Fix $\varepsilon>0$.
Choose $\eta\in(0,\frac{1}{4})$ and let $S$ be a random variable with $\mathbb{P}(S=1)=\mathbb{P}(S=-1)=1/2$.

If $\overline{\sigma}_0^{(3)}\ne 0$, 
let $U$ be a random variable $\mathbb{E}[U]=0$, $\mathbb{E}[U^2]=\eta^{-2/3}$, and $\mathbb{E}[U^3]=\overline{\sigma}_0^{(3)}/\eta$ (Lemma~\ref{lem:two_point_23}).
Otherwise set $U\coloneqq 0$.


Let $W$ be a random variable with
\begin{equation}
    \mathbb{P}(W=\eta^{-7/16})=\mathbb{P}(W=-\eta^{-7/16})=\mathbb{P}\left(W=\dfrac{\eta^{-7/16}}{2}\right)=\mathbb{P}\left(W=-\dfrac{\eta^{-7/16}}{2}\right)=\frac{1}{4}
\end{equation}
and define a random variable $H$ by
\begin{equation}
    H=\begin{cases}
        \dfrac{8(\overline{\sigma}_1^{(3)}-\overline{\sigma}_0^{(3)})}{9\eta a^2} & \text{ if } |W|=\eta^{-7/16}, \\
        -\dfrac{8(\overline{\sigma}_1^{(3)}-\overline{\sigma}_0^{(3)})}{9\eta a^2} & \text{ if } |W|=\dfrac{\eta^{-7/16}}{2},
    \end{cases}
\end{equation}
where $a\coloneqq \eta ^{-7/16}$.
Then we have $\mathbb{E}[W]=\mathbb{E}[H]=0$ and, since $\mathbb{E}[W^3]=\mathbb{E}[H^3]=\mathbb{E}[WH^2]=0$,
\begin{equation}
\mathbb{E}[(W+H)^3]=3\mathbb{E}[W^2H]
=3\left(\frac12 a^2\cdot \frac{8(\overline{\sigma}_1^{(3)}-\overline{\sigma}_0^{(3)})}{9\eta a^2}
-\frac12\left(\frac{a}{2}\right)^2\cdot \frac{8(\overline{\sigma}_1^{(3)}-\overline{\sigma}_0^{(3)})}{9\eta a^2}\right)
=\frac{\overline{\sigma}_1^{(3)}-\overline{\sigma}_0^{(3)}}{\eta}.
\end{equation}

Define
\begin{equation}
c_0^2\coloneqq \frac{\overline{\sigma}_0^{(2)}-\eta\mathbb{E}[U^2]-\eta\mathbb{E}[W^2]}{1-2\eta},\qquad
c_1^2\coloneqq \frac{\overline{\sigma}_1^{(2)}-\eta\mathbb{E}[U^2]-\eta\mathbb{E}[(W+H)^2]}{1-2\eta}.
\end{equation}
For $\eta$ small enough, $c_0^2>0$ and $c_1^2>0$.
Assume $(U,W,S)$ are independent and define $(\widetilde{Y}_1,\widetilde{Y}_0)$ by
\begin{equation}
(\widetilde{Y}_1,\widetilde{Y}_0)=
\begin{cases}
(U,U) & \text{with probability } \eta,\\
(W+H,W) & \text{with probability } \eta,\\
(c_1 S,c_0 S) & \text{with probability } 1-2\eta.
\end{cases}
\end{equation}
Then $\mathbb{E}[\widetilde{Y}_x^2]=\overline{\sigma}_x^{(2)}$ for $x\in\{0,1\}$ by the definition of $c_0^2,c_1^2$.
Moreover, $\mathbb{E}[\widetilde{Y}_0^3]=\eta\mathbb{E}[U^3]=\overline{\sigma}_0^{(3)}$ and $\mathbb{E}[\widetilde{Y}_1^3]=\eta\mathbb{E}[U^3]+\eta\mathbb{E}[(W+H)^3]=\overline{\sigma}_1^{(3)}$.

In addition, we have
\begin{equation}
\overline{\mu}^{(4)}=\mathbb{E}[\widetilde{\Delta}^4]=(1-2\eta)(c_1-c_0)^4+\eta\mathbb{E}[H^4].
\end{equation}

If $\overline{\sigma}_0^{(3)}\ne0$, then $\eta\mathbb{E}[U^2]=\eta^{1/3}$, while if $\overline{\sigma}_0^{(3)}=0$, then $\eta\mathbb{E}[U^2]=0$. Moreover, we have
\begin{equation}
\eta\mathbb{E}[W^2]=\frac{5}{8}\eta^{1/8},\qquad
\eta\mathbb{E}[H^2]=\frac{64}{81}(\overline{\sigma}_1^{(3)}-\overline{\sigma}_0^{(3)})^2\eta^{3/4},\qquad  
\eta\mathbb{E}[H^4]=\frac{4096}{6561}(\overline{\sigma}_1^{(3)}-\overline{\sigma}_0^{(3)})^4\eta^{1/2}.
\end{equation}
Since $\eta\mathbb{E}[(W+H)^2]\leq 2\eta\mathbb{E}[W^2]+2\eta\mathbb{E}[H^2]$, we have $c_0^2\to\overline{\sigma}_0^{(2)}$, $c_1^2\to\overline{\sigma}_1^{(2)}$, and $\overline{\mu}^{(4)}\to(\sqrt{\overline{\sigma}_1^{(2)}}-\sqrt{\overline{\sigma}_0^{(2)}})^4$ as $\eta\to0$.
Therefore, by choosing $\eta$ small enough, we have $\overline{\mu}^{(4)}<(\sqrt{\overline{\sigma}_1^{(2)}}-\sqrt{\overline{\sigma}_0^{(2)}})^4+\varepsilon$.

\underline{\emph{Sharpness (upper bound).}}
Fix $M>0$.
Choose $b>0$ such that $b>M/16$ and $b^{-1}<\min\{\overline{\sigma}_0^{(2)},\overline{\sigma}_1^{(2)}\}$, and set $p\coloneqq b^{-3}$.
Let $Z_x$ be a random variable with
\begin{equation}
\mathbb{E}[Z_x]=0,\qquad 
\mathbb{E}[Z_x^2]=\frac{\overline{\sigma}_x^{(2)}-p b^2}{1-p},\qquad
\mathbb{E}[Z_x^3]=\frac{\overline{\sigma}_x^{(3)}}{1-p},
\end{equation}
for each $x\in\{0,1\}$ (Lemma~\ref{lem:two_point_23}), and assume $Z_1,Z_0$ are independent.
Define $(\widetilde{Y}_1,\widetilde{Y}_0)$ by
\begin{equation}
(\widetilde{Y}_1,\widetilde{Y}_0)=
\begin{cases}
(Z_1,Z_0) & \text{with probability } 1-p,\\
(b,-b) & \text{with probability } p/2,\\
(-b,b) & \text{with probability } p/2.
\end{cases}
\end{equation}
Then $\mathbb{E}[\widetilde{Y}_x^2]=\overline{\sigma}_x^{(2)}$ and $\mathbb{E}[\widetilde{Y}_x^3]=\overline{\sigma}_x^{(3)}$ for $x\in\{0,1\}$.
Moreover, we obtain $\overline{\mu}^{(4)}\geq p(2b)^4=16b>M$.

(1)-(iii): Since $\overline{\sigma}_0^{(2)}=0$, $\widetilde{Y}_0=0$ almost surely and $\overline{\sigma}_0^{(3)}=0$. 
Since $\overline{\sigma}_1^{(3)}=\mathbb{E}[\widetilde{Y}_1(\widetilde{Y}_1^2-\overline{\sigma}_1^{(2)})]$, we have 
\begin{equation}
(\overline{\sigma}_1^{(3)})^2\leq \overline{\sigma}_1^{(2)}(\overline{\mu}^{(4)}-(\overline{\sigma}_1^{(2)})^2)
\end{equation}
by the Cauchy--Schwarz inequality.
Thus $\overline{\mu}^{(4)}\geq (\overline{\sigma}_1^{(2)})^2+(\overline{\sigma}_1^{(3)})^2/\overline{\sigma}_1^{(2)}$. 

\underline{\emph{Sharpness (lower bound).}}
Let $Z$ be a random variable with $\mathbb{E}[Z]=0$, $\mathbb{E}[Z^2]=\overline{\sigma}_1^{(2)}$ and $\mathbb{E}[Z^3]=\overline{\sigma}_1^{(3)}$ (Lemma \ref{lem:two_point_23}). 
Then $Z^2=(\overline{\sigma}_1^{(3)}/\overline{\sigma}_1^{(2)})Z+\overline{\sigma}_1^{(2)}$ almost surely, and hence $\mathbb{E}[Z^4]=(\overline{\sigma}_1^{(2)})^2+(\overline{\sigma}_1^{(3)})^2/\overline{\sigma}_1^{(2)}$. Taking $\widetilde{Y}_1=Z$ and $\widetilde{Y}_0=0$ attains the lower bound.

\underline{\emph{Sharpness (upper bound).}}
Fix $M>0$. Choose $b>0$ such that $b>M$ and $b^{-1}<\overline{\sigma}_1^{(2)}$, and set $p\coloneqq b^{-3}$. 
Take a random variable $Z$ with 
\begin{equation}
    \mathbb{E}[Z]=0,\qquad \mathbb{E}[Z^2]=\frac{\overline{\sigma}_1^{(2)}-pb^2}{1-p}, \qquad \mathbb{E}[Z^3]=\frac{\overline{\sigma}_1^{(3)}}{1-p}
\end{equation}
(Lemma \ref{lem:two_point_23}). 
Define $\widetilde{Y}_0=0$ almost surely and let 
\begin{equation}
\widetilde{Y}_1=\begin{cases}
    Z & \text{with probability } 1-p, \\
    b & \text{with probability } p/2,\\
    -b & \text{with probability } p/2.
\end{cases}
\end{equation}
Then $\mathbb{E}[\widetilde{Y}_1^2]=\overline{\sigma}_1^{(2)}$, $\mathbb{E}[\widetilde{Y}_1^3]=\overline{\sigma}_1^{(3)}$, and $\overline{\mu}^{(4)}\geq pb^4=b>M$.

(1)-(iv): The proof is identical after interchanging the roles of $1$ and $0$. 

By Lemma~\ref{lem:coupling_to_scm}, these are realized by some SCMs. Thus the bounds in (1) are sharp.

(2), (3), and (4) follow from Theorems~\ref{thm:mu4_from_2}--\ref{thm:mu4_from_4}.
Lemma~\ref{lem:coupling_to_scm} yields SCMs.
\end{proof}

\Minkowskibound*

\begin{proof}
Recall that $\widetilde{\Delta}=\widetilde{Y}_1-\widetilde{Y}_0$.
Since $m$ is even, we have
\begin{equation}
\|\widetilde{Y}_x\|_m=\bigl(\mathbb{E}[|\widetilde{Y}_x|^m]\bigr)^{1/m}=\sqrt[m]{\overline{\sigma}_x^{(m)}}
\quad (x\in\{0,1\}),
\qquad
\|\widetilde{\Delta}\|_m^m=\mathbb{E}[|\widetilde{\Delta}|^m]=\overline{\mu}^{(m)}.
\end{equation}
By Minkowski's inequality in $L^m$-space,
\begin{equation}
\|\widetilde{\Delta}\|_m\leq \|\widetilde{Y}_1\|_m+\|\widetilde{Y}_0\|_m,
\qquad
\|\widetilde{Y}_1\|_m\leq \|\widetilde{\Delta}\|_m+\|\widetilde{Y}_0\|_m,
\quad
\text{and}
\quad
\|\widetilde{Y}_0\|_m\leq \|\widetilde{Y}_1\|_m+\|\widetilde{\Delta}\|_m
\end{equation}
hold.
Thus we have 
$\bigl|\|\widetilde{Y}_1\|_m-\|\widetilde{Y}_0\|_m\bigr|\le \|\widetilde{\Delta}\|_m\le \|\widetilde{Y}_1\|_m+\|\widetilde{Y}_0\|_m$.
Raising both sides to the $m$-th power yields \eqref{eq:minkowski_bound}.

\underline{\emph{Sharpness.}}
Take a random variable $S\in\{-1,1\}$ with $\mathbb{P}(S=1)=\mathbb{P}(S=-1)=1/2$.
Setting $\widetilde{Y}_1 \coloneqq \sqrt[m]{\overline{\sigma}_1^{(m)}}S$ and $\widetilde{Y}_0 \coloneqq -\sqrt[m]{\overline{\sigma}_0^{(m)}}S$ attains the upper bound.
Setting $\widetilde{Y}_1 \coloneqq \sqrt[m]{\overline{\sigma}_1^{(m)}}S$ and $\widetilde{Y}_0 \coloneqq \sqrt[m]{\overline{\sigma}_0^{(m)}}S$ attains the lower bound.
By Lemma~\ref{lem:coupling_to_scm}, these can be realized by an SCM.
\end{proof}

\EvenfromSingle*

\begin{proof}
(1): The inequality $0 \leq \overline{\mu}^{(m)} \leq \infty$ follows from the definition.

\underline{\emph{Sharpness (lower bound).}}
Fix $\varepsilon>0$.
Choose a random variable $B$ with $\mathbb{E}[B]=0$ and $\mathbb{E}[B^k]=2\overline{\sigma}_0^{(k)}$ (Lemma \ref{lem:exist_odd_moment}).
For $a,\delta\in \mathbb{R}$, define a random variable $W_a$ by
\begin{equation}
W_a \coloneqq
\begin{cases}
B & \text{with probability\ } 1/2,\\
a & \text{with probability\ } 1/4,\\
-a & \text{with probability\ } 1/4
\end{cases}
\end{equation}
and define $V_{a,\delta}$ by
\begin{equation}
V_{a,\delta} \coloneqq
\begin{cases}
W_a-\delta & \text{if } W_a=B,\\
W_a+\delta & \text{if } W_a\in\{a,-a\}
\end{cases}.
\end{equation}
Then $\mathbb{E}[W_a]=\mathbb{E}[V_{a,\delta}]=0$ and, since $k$ is odd, $\mathbb{E}[W_a^k]=\overline{\sigma}_0^{(k)}$.
Moreover $V_{a,\delta}-W_a=\pm\delta$, hence $\mathbb{E}[(V_{a,\delta}-W_a)^m]=|\delta|^m$.
Let $F_a(\delta)\coloneqq \mathbb{E}[V_{a,\delta}^k]$.
Then
\begin{equation}
F_a(\delta)=\frac{1}{2}\mathbb{E}[(B-\delta)^k]+\frac{1}{4}(a+\delta)^k+\frac{1}{4}(-a+\delta)^k
=\frac{1}{2}\mathbb{E}[(B-\delta)^k]+\frac{1}{4}(a+\delta)^k-\frac{1}{4}(a-\delta)^k,
\end{equation}
so $F_a(\delta)\to\infty$ and $F_a(-\delta)\to-\infty$ as $a\to\infty$.

Hence, for some $a_0>0$ we have $F_{a_0}(-\sqrt[m]{\varepsilon})\leq \overline{\sigma}_1^{(k)}\leq F_{a_0}(\sqrt[m]{\varepsilon})$.
Since $F_{a_0}$ is continuous, there exists $\delta_0\in[-\sqrt[m]{\varepsilon},\sqrt[m]{\varepsilon}]$ such that $F_{a_0}(\delta_0)=\overline{\sigma}_1^{(k)}$.
With $(\widetilde{Y}_1,\widetilde{Y}_0) \coloneqq (V_{a_0,\delta_0},W_{a_0})$, we have $\mathbb{E}[\widetilde{Y}_x^k]=\overline{\sigma}_x^{(k)}$ and $\overline{\mu}^{(m)}=|\delta_0|^m\leq \varepsilon$.

\underline{\emph{Sharpness (upper bound).}}
Fix $M>0$.
Let $B_0,B_1$ be random variables with $\mathbb{E}[B_x]=0,\,\mathbb{E}[B_x^k]=2\overline{\sigma}_x^{(k)}$ ($x\in\{0,1\}$) (Lemma \ref{lem:exist_odd_moment}).
Take a constant $A>0$ with $\frac{1}{2}(2A)^m>M$, and define $(\widetilde{Y}_1,\widetilde{Y}_0)$ by
\begin{equation}
(\widetilde{Y}_1,\widetilde{Y}_0)\coloneqq
\begin{cases}
(B_1,B_0) & \text{with probability\ } 1/2,\\
(A,-A) & \text{with probability\ } 1/4,\\
(-A,A) & \text{with probability\ } 1/4.
\end{cases}
\end{equation}
Then $\mathbb{E}[\widetilde{Y}_1^k]=\overline{\sigma}_1^{(k)}$ and $\mathbb{E}[\widetilde{Y}_0^k]=\overline{\sigma}_0^{(k)}$.
We have $\widetilde{\Delta}=\widetilde{Y}_1-\widetilde{Y}_0=\pm 2A$ when $(\widetilde{Y}_1,\widetilde{Y}_0)=(\pm A,\mp A)$, and we have
$\overline{\mu}^{(m)}=\mathbb{E}[\widetilde{\Delta}^m]\geq \frac{1}{2} (2A)^m>M$.
Taking $A$ large yields $\overline{\mu}^{(m)}>M$.

By Lemma~\ref{lem:coupling_to_scm}, these distributions can be realized by SCMs.

(2)-(i): If $\overline{\sigma}_1^{(k)}=\overline{\sigma}_0^{(k)}=0$, we have $\widetilde{Y}_1=0$ almost surely and $\widetilde{Y}_0=0$ almost surely.
Hence $\overline{\mu}^{(m)}=\mathbb{E}[(\widetilde{Y}_1-\widetilde{Y}_0)^m]=0$.

(2)-(ii): 
Since $m>k$, $\|\widetilde{\Delta}\|_m\geq \|\widetilde{\Delta}\|_k$.
By the triangle inequality in $L^k$-space,
$\|\widetilde{\Delta}\|_k \geq |\|\widetilde{Y}_1\|_k-\|\widetilde{Y}_0\|_k|=|\sqrt[k]{\overline{\sigma}_1^{(k)}}-\sqrt[k]{\overline{\sigma}_0^{(k)}}|$.
Raising to the $m$-th power yields the lower bound.

\underline{\emph{Sharpness (lower bound).}}
Take a random variable $S$ with $\mathbb{P}(S=1)=\mathbb{P}(S=-1)=1/2$ and set $\widetilde{Y}_x \coloneqq \sqrt[k]{\overline{\sigma}_x^{(k)}}S$.
Then $\widetilde{\Delta}=(\sqrt[k]{\overline{\sigma}_1^{(k)}}-\sqrt[k]{\overline{\sigma}_0^{(k)}})S$, so the lower bound is attained.

\underline{\emph{Sharpness (upper bound).}}
If $\overline{\sigma}_1^{(k)}=\overline{\sigma}_0^{(k)}=0$, then $\overline{\mu}^{(m)}=0$.
Assume $\overline{\sigma}_1^{(k)}+\overline{\sigma}_0^{(k)}>0$, and fix $M>0$.
We may assume $\overline{\sigma}_1^{(k)}>0$, by exchanging the roles of $0$ and $1$ if necessary.
Choose a constant $A>\sqrt[k]{\overline{\sigma}_1^{(k)}}$ such that $(\overline{\sigma}_1^{(k)}/A^k)(A-\sqrt[k]{\overline{\sigma}_0^{(k)}})^m>M$.

Let $I$ be a random variable with $\mathbb{P}(I=1)=\overline{\sigma}_1^{(k)}/A^k$, and
$S$ be a random variable with $\mathbb{P}(S=1)=\mathbb{P}(S=-1)=1/2$, independent of $I$.
Define $\widetilde{Y}_1 \coloneqq IAS$ and $\widetilde{Y}_0 \coloneqq \sqrt[k]{\overline{\sigma}_0^{(k)}}S$.
Then $\mathbb{E}[\widetilde{Y}_x^k]=\overline{\sigma}_x^{(k)}$ ($x\in\{0,1\}$) and
$\overline{\mu}^{(m)}=\mathbb{E}[\widetilde{\Delta}^m]\geq \mathbb{P}(I=1)(A-\sqrt[k]{\overline{\sigma}_0^{(k)}})^m>M$.

By Lemma~\ref{lem:coupling_to_scm}, these can be realized by SCMs.
Therefore we have the sharpness.
\end{proof}

\OddUnboundedSingle*

\begin{proof}
(1): If $\overline{\sigma}_1^{(k)}=\overline{\sigma}_0^{(k)}=0$ for an even $k$, we have $\widetilde{Y}_1=0$ almost surely and $\widetilde{Y}_0=0$ almost surely.
Hence $\overline{\mu}^{(m)}=\mathbb{E}[(\widetilde{Y}_1-\widetilde{Y}_0)^m]=0$.

(2): Fix $M>0$.

\underline{\emph{The case $k$ is odd.}}
For each $x\in\{0,1\}$, take a random variable $B_x$ with $\mathbb{E}[B_x]=0$ and $\mathbb{E}[B_x^k]=2\overline{\sigma}_x^{(k)}$ (Lemma \ref{lem:exist_odd_moment}).
For $A>0$, define $(U_A,V_A)$ 
by
\begin{equation}
    (U_A,V_A)=\begin{cases}
        (A,-A) & \text{ with probability } 1/3, \\
        (-A,0) & \text{ with probability } 1/3, \\
        (0,A) & \text{ with probability } 1/3.
    \end{cases}
\end{equation}
Then $\mathbb{E}[U_A^k]=\mathbb{E}[V_A^k]=0$ and $\mathbb{E}[(U_A-V_A)^m]=\frac{2^m-2}{3}A^m$.
Define $(\widetilde{Y}_1,\widetilde{Y}_0)$ 
by
\begin{equation}
    (\widetilde{Y}_1,\widetilde{Y}_0)=\begin{cases}
        (B_1,B_0) & \text{ with probability } 1/2, \\
        (U_A,V_A) & \text{ with probability } 1/2.
    \end{cases}
\end{equation}
Then $\mathbb{E}[\widetilde{Y}_x^k]=\overline{\sigma}_x^{(k)}$ and
\begin{equation}
\overline{\mu}^{(m)}=\mathbb{E}[\widetilde{\Delta}^m]=\frac12\mathbb{E}[(B_1-B_0)^m]+\frac{2^m-2}{6}A^m.
\end{equation}
Taking $A$ large yields $\overline{\mu}^{(m)}>M$.
Replacing $(U_A,V_A)$ by $(V_A,U_A)$ yields $\overline{\mu}^{(m)}<-M$.

\underline{\emph{The case $k$ is even.}}
We assume $\overline{\sigma}_1^{(k)}>0$, by exchanging the roles of $0$ and $1$ if necessary.
Take a random variable $S$ with $\mathbb{P}(S=1)=\mathbb{P}(S=-1)=1/2$ and put $W_0 \coloneqq \sqrt[k]{2\overline{\sigma}_0^{(k)}}S$.
Choose a constant $A$ such that $p_A \coloneqq \dfrac{2\overline{\sigma}_1^{(k)}}{A^k(1/2+1/2^k)}\leq \dfrac{2}{3}$.
Let $Z_A$ be a random variable with
\begin{equation}
 \mathbb{P}(Z_A=A)=\frac{p_A}{2},\quad \mathbb{P}\Big(Z_A=-\frac{A}{2}\Big)=p_A \quad \text{ and } \quad \mathbb{P}(Z_A=0)=1-\frac{3p_A}{2}.
\end{equation}
Then $\mathbb{E}[Z_A]=0$, $\mathbb{E}[Z_A^k]=2\overline{\sigma}_1^{(k)}$, and
$\mathbb{E}[Z_A^m]=2\overline{\sigma}_1^{(k)}\dfrac{1/2-1/2^m}{1/2+1/2^k}A^{m-k}$.
Define $(\widetilde{Y}_1,\widetilde{Y}_0)$ by
\begin{equation}
    (\widetilde{Y}_1,\widetilde{Y}_0)=\begin{cases}
        (0,W_0) & \text{with probability } 1/2, \\
        (Z_A,0) & \text{with probability } 1/2.
    \end{cases}
\end{equation}
Then $\mathbb{E}[\widetilde{Y}_x^k]=\overline{\sigma}_x^{(k)}$ and $\overline{\mu}^{(m)}=\mathbb{E}[Z_A^m]\big/2$.
Taking $A$ large yields $\overline{\mu}^{(m)}>M$.
Replacing $Z_A$ by $-Z_A$ yields $\overline{\mu}^{(m)}<-M$.
By Lemma~\ref{lem:coupling_to_scm}, these can be realized by SCMs.
\end{proof}

\section{Numerical Experiments}
\label{sec_app:numerical_experiments}

\begin{sidewaystable}
\centering
\caption{Estimates of bounds of central moments of ICE under SCM (A). MCI denotes the estimates obtained by \citep{Kawakami2025_moments}. 
LB denotes the lower bound, and UB denotes the upper bound.
}
\label{tab:numericalexp_1}
\scalebox{1}{
\begin{tabular}{c|c|c|cc|cc|c}
\hline
\multirow{2}{*}{Target $m$} & \multirow{2}{*}{Theorem} & \multirow{2}{*}{\makecell{Available $(\overline{\sigma}^{(k)}_1,\overline{\sigma}^{(k)}_0)$}} & \multicolumn{2}{c|}{$N=20$}  & \multicolumn{2}{c|}{$N=1000$} & \multirow{2}{*}{True}\\
\cline{4-7}
 &  &  & \multicolumn{1}{c}{LB} & \multicolumn{1}{c|}{UB}  & \multicolumn{1}{c}{LB} & \multicolumn{1}{c|}{UB} & \\
\hline
\hline
\multirow{4}{*}{2} & \ref{thm:bound_of_var} & 2 & 0.092 [0.000,0.456] & 1.128 [0.317,2.088] & 0.005 [0.001,0.013] & 1.492 [1.386,1.600] & 0.083 \\
 & \ref{thm:mu2_from_3} & 3 & 0.000 [0.000,0.000] & $\infty$  & 0.000 [0.000,0.000] & $\infty$  & 0.083 \\
 & \ref{thm:mu2_from_4} & 4 & 0.000 [0.000,0.000] & 1.603 [0.523,2.737] & 0.000 [0.000,0.000] & 2.157 [2.051,2.270] & 0.083 \\
 & MCI & - & 0.004 [0.000,0.044] & 1.020 [0.296,1.904] & 0.000 [0.000,0.000] & 1.410 [1.243,1.576] & 0.083 \\
\hline
\multirow{4}{*}{3} & \ref{thm:mu3_from_2} & 2 & $-\infty$  & $\infty$  & $-\infty$  & $\infty$  & 0.000 \\
 & \ref{thm:mu3_from_3} & 3 & $-\infty$  & $\infty$  & $-\infty$  & $\infty$  & 0.000 \\
 & \ref{thm:mu3_from_4} & 4 & -1.328 [-2.810,-0.235] & 1.328 [0.235,2.810] & -1.966 [-2.122,-1.822] & 1.966 [1.822,2.122] & 0.000 \\
 & MCI & - & -0.981 [-2.157,-0.201] & 0.450 [0.021,1.338] & -1.613 [-1.840,-1.395] & 0.629 [0.475,0.814] & 0.000 \\
\hline
\multirow{4}{*}{4} & \ref{thm:mu4_from_2} & 2 & 0.023 [0.000,0.208] & $\infty$  & 0.000 [0.000,0.000] & $\infty$  & 0.013 \\
 & \ref{thm:mu4_from_3} & 3 & 0.000 [0.000,0.000] & $\infty$  & 0.000 [0.000,0.000] & $\infty$  & 0.013 \\
 & \ref{thm:mu4_from_4} & 4 & 0.058 [0.000,0.455] & 2.937 [0.274,7.494] & 0.000 [0.000,0.001] & 4.657 [4.207,5.152] & 0.013 \\
 & MCI & - & 0.000 [0.000,0.000] & 2.240 [0.233,5.663] & 0.000 [0.000,0.000] & 3.980 [3.355,4.728] & 0.013 \\
\hline
\end{tabular}
}
\vspace{1cm}
\caption{Estimates of bounds of central moments of ICE under SCM (B). MCI denotes the estimates obtained by \citep{Kawakami2025_moments}. 
LB denotes the lower bound, and UB denotes the upper bound.}
\label{tab:numericalexp_2}
\scalebox{1}{
\begin{tabular}{c|c|c|cc|cc|c}
\hline
\multirow{2}{*}{Target $m$} & \multirow{2}{*}{Theorem} & \multirow{2}{*}{\makecell{Available  $(\overline{\sigma}^{(k)}_1,\overline{\sigma}^{(k)}_0)$}} & \multicolumn{2}{c|}{$N=20$} & \multicolumn{2}{c|}{$N=1000$} & \multirow{2}{*}{True}\\
\cline{4-7}
 &  & & \multicolumn{1}{c}{LB} & \multicolumn{1}{c|}{UB} & \multicolumn{1}{c}{LB} & \multicolumn{1}{c|}{UB} & \\
\hline
\hline
\multirow{4}{*}{2} & \ref{thm:bound_of_var} & 2 & 0.858 [0.604,0.978] & 1.021 [0.746,1.142] & 0.887 [0.879,0.895] & 1.117 [1.107,1.125] & 0.903 \\
 & \ref{thm:mu2_from_3} & 3 & 0.000 [0.000,0.000] & $\infty$  & 0.000 [0.000,0.000] & $\infty$  & 0.903 \\
 & \ref{thm:mu2_from_4} & 4 & 0.000 [0.000,0.000] & 1.144 [1.024,1.286] & 0.000 [0.000,0.000] & 1.139 [1.131,1.148] & 0.903 \\
 & MCI & - & 0.086 [0.000,0.147] & 0.843 [0.412,1.117] & 0.021 [0.001,0.055] & 1.071 [0.990,1.140] & 0.903 \\
\hline
\multirow{4}{*}{3} & \ref{thm:mu3_from_2} & 2 & $-\infty$  & $\infty$  & $-\infty$  & $\infty$  & 0.000 \\
 & \ref{thm:mu3_from_3}  & 3 & $-\infty$  & $\infty$  & $-\infty$  & $\infty$  & 0.000 \\
 & \ref{thm:mu3_from_4} & 4 & -0.760 [-0.904,-0.643] & 0.760 [0.643,0.904] & -0.754 [-0.763,-0.746] & 0.754 [0.746,0.763] & 0.000 \\
 & MCI & - & -0.403 [-0.606,-0.101] & 0.405 [0.123,0.601] & -0.555 [-0.650,-0.458] & 0.556 [0.465,0.646] & 0.000 \\
\hline
\multirow{4}{*}{4} & \ref{thm:mu4_from_2} & 2 & 0.745 [0.365,0.956] & $\infty$  & 0.787 [0.772,0.801] & $\infty$  & 0.819 \\
 & \ref{thm:mu4_from_3} & 3 & 0.000 [0.000,0.000] & $\infty$  & 0.000 [0.000,0.000] & $\infty$  & 0.819 \\
 & \ref{thm:mu4_from_4} & 4 & 0.904 [0.730,1.153] & 1.313 [1.048,1.653] & 0.761 [0.750,0.774] & 1.298 [1.280,1.318] & 0.819 \\
 & MCI & - & 0.000 [0.000,0.001] & 0.792 [0.286,1.248] & 0.000 [0.000,0.000] & 1.158 [0.960,1.351] & 0.819 \\
\hline
\end{tabular}
}
\end{sidewaystable}

\begin{table}[!tb]
\centering
\caption{Estimates of central moments of ICE under IED condition for SCM (C) based on Theorem \ref{thm:identification_under_independence}. 
}
\label{tab:numericalexp_3}
\scalebox{1}{
\begin{tabular}{c|c|c|c}
\hline
Target & $N=20$ & $N=1000$ & True \\
\hline
\hline
2 & 0.412 [0.000,0.903] & 0.335 [0.266,0.406] & 0.333 \\
3 & -0.010 [-0.500,0.455] & 0.000 [-0.072,0.072] & 0.000 \\
4 & 0.461 [0.000,1.596] & 0.205 [0.141,0.278] & 0.200 \\
\hline
\end{tabular}
}
\end{table}

In this appendix, we conduct numerical experiments to illustrate the properties of our results in a finite-sample setting.\\

{\bf Baseline.}
We compare our bounds with those given by the Monte Carlo integration-based estimator (MCI) \citep{Kawakami2025_moments}.
Note that only the upper bounds of MCI are sharp when $m$ is even.
\citep{Fan2010, Post2023, Post2025} do not provide results for moments of causal effects; thus, there is no baseline other than \citep{Kawakami2025_moments}.
Our estimator uses only the marginal central moments of each PO, whereas MCI requires the full marginal distribution of each PO.

{\bf Setting.}
We work with the following SCMs:
\begin{gather}
\text{SCM(A)}:\begin{cases}
    Y\coloneqq -(X+1)U\mathbbm{1}(XU\geq 0), \\
    X\sim \mathrm{Bern}(0.8),\,U\sim \mathrm{Unif}(-1,1)
\end{cases} \\
\text{SCM(B)}:\begin{cases}
    Y\coloneqq X\mathrm{sign}(U)+0.1(1-X)U, \\
    X\sim \mathrm{Bern}(0.8),\,U\sim \mathrm{Unif}(-1,1)
\end{cases} \\
\text{SCM(C)}:\begin{cases}
    Y=U_1+XU_2, \\
    U_1\sim \mathrm{Unif}(-1,1),\,U_2\sim \mathrm{Unif}(-1,1),\\
    X\sim \mathrm{Bern}(0.8),\,U_1\indep U_2
\end{cases}
\end{gather}
where $\mathrm{Bern}(p)$ is a Bernoulli distribution with probability $p$, and $\mathrm{Unif}(-1,1)$ is a uniform distribution over $[-1,1]$.
SCM (A) is the setting used in the experiments of \citet{Kawakami2025_moments}.
SCM(A) and (B) do not satisfy the IED condition, and the SCM(C) satisfies the IED condition.
{We estimate the bounds by plugging in the empirical central moments of each potential outcome. For even-order moments, if the plug-in estimate is negative, we set it to zero.}
We conduct 1,000 simulation runs for sample sizes $N=20, 1000$.
We report the means of the estimates and their 95\% confidence intervals (CIs).

{\bf Results.}
From Table~\ref{tab:numericalexp_1}, in setting (A), when the sample size is relatively large ($N = 1000$), the tightest bounds for the second and fourth central moments of ICE obtained by our method—using $(\overline{\sigma}_1^{(2)}, \overline{\sigma}_0^{(2)})$ and $(\overline{\sigma}_1^{(4)}, \overline{\sigma}_0^{(4)})$, respectively—are similar to those of MCI, with comparable CIs.
In contrast, the tightest bound for the third central moment of ICE using $(\overline{\sigma}_1^{(4)}, \overline{\sigma}_0^{(4)})$ is slightly wider than those of MCI.
When the sample size is relatively small ($N = 20$), the estimated bounds from our method have relatively wide CIs compared to those of MCI.

In setting (B), from Table~\ref{tab:numericalexp_2}, the tightest bounds for the second and fourth central moments of ICE obtained by our method—using $(\overline{\sigma}_1^{(2)}, \overline{\sigma}_0^{(2)})$ and $(\overline{\sigma}_1^{(4)}, \overline{\sigma}_0^{(4)})$, respectively—are narrower than those of MCI.

In setting (C), as shown in Table~\ref{tab:numericalexp_3}, when the sample size is relatively large ($N = 1000$), we obtain reliable estimates of the second, third, and fourth central moments of ICE, as indicated by the narrow CIs.
In contrast, when the sample size is relatively small ($N = 20$), the estimates of the second, third, and fourth central moments of ICE are unstable, with wide CIs.

\begin{remark*}[Confidence intervals for plug-in estimates.]
The empirical case studies report plug-in estimates because the original articles provide only point summaries. This appendix reports confidence intervals in simulated finite-sample settings. If confidence intervals for marginal moments are available in empirical applications, they can be propagated through the formulas in this paper. For example, under IED, $\overline{\mu}^{(2)}=\overline{\sigma}^{(2)}_1-\overline{\sigma}^{(2)}_0$. If $L^1\leq\overline{\sigma}^{(2)}_1\leq U^1$ and $L^0\leq\overline{\sigma}^{(2)}_0\leq U^0$ hold simultaneously, then
\begin{equation}
L^1-U^0\leq \overline{\mu}^{(2)}\leq U^1-L^0.
\end{equation}
If the two marginal intervals each have coverage at least $\alpha$, then the simultaneous coverage is at least $2\alpha-1$, since $\mathbb{P}(A\cap B)\geq 1-\mathbb{P}(A^c)-\mathbb{P}(B^c)$. The same argument applies to the bounds derived in this paper.
\end{remark*}

\end{document}\vspace{0.4cm}